\def\cA {\mathscr{A}}
\def\cB {\mathscr{B}}
\def\cC{\mathscr{C}}
\def\cF{\mathscr{F}}
\def\cBG {\mathscr{BG}}
\def\cG{\mathscr{G}}
\def\gB{\mathfrak{B}}
\def\gC{\mathfrak{C}}
\def\gF{\mathfrak{F}}
\def\gH{\mathfrak{H}}
\def\gM{\mathfrak{M}}
\def\RR{{\mathbb R}}
\def \mymid {\,:\,}
\def\Lin{\mathsf{Lin}}
\def \proj {\mathrm{proj}}
\DeclareRobustCommand{\rvdots}{%
  \vbox{
    \baselineskip4\p@\lineskiplimit\z@
    \kern-\p@
    \hbox{.}\hbox{.}\hbox{.}
  }}
\theoremstyle{definition}
\newtheorem{theorem}{Theorem}
\newtheorem{proposition}{Proposition}
\newtheorem{remark}{Remark}
\newtheorem{example}{Example}
\newtheorem{corollary}{Corollary}
\newtheorem{lemma}{Lemma}
\begin{document}

\title{\Large\bf On the closest balanced game}

\author{Pedro GARCIA-SEGADOR${}^{a}$,  Michel
    GRABISCH${}^{b,c}$,\\ Dylan LAPLACE MERMOUD${}^{d,e}$ and Pedro MIRANDA${}^{f}$\\
\normalsize   ${}^a$ National Statistical Institute, Madrid, Spain \\
\normalsize ${}^b$Charles University, Prague, Czech Republic\\
\normalsize  ${}^c$ Universit\'e Paris I Panth\'eon-Sorbonne, Centre d'Economie
de la Sorbonne, Paris, France\\
\normalsize ${}^d$ UMA, ENSTA, Institut Polytechnique de Paris\\
\normalsize ${}^e$ CEDRIC, Conservatoire National des Arts et Métiers\\
\normalsize  ${}^f$ Interdisciplinary Mathematical Institute, Complutense University of Madrid, Spain}

\date{}

\maketitle

\begin{abstract}
Cooperative games with nonempty core are called balanced, and the set of
balanced games is a polyhedron. Given a game with empty core, we look for the closest balanced game, in the sense of the
(weighted) Euclidean distance, i.e., the orthogonal projection of the game on the set of balanced games. Besides an analytical approach which
becomes rapidly intractable, we propose a fast algorithm to find the closest
balanced game, avoiding exponential complexity for the optimization problem,
and being able to run up to 20 players. We show experimentally that the probability
that the closest game has a core reduced to a singleton tends to 1 when the
number of players grow. We provide a mathematical proof that the proportion of
facets whose games have a non-singleton core tends to 0 when the number of
players grow, by finding an expression of the aymptotic
growth of the number of minimal balanced collections.
This permits to prove mathematically the experimental
result.
Consequently, taking the core of the projected game defines a new
solution concept, which we call least square core due to its analogy with
the least core, and our result shows that the probability that this is a point
solution tends to 1 when the number of players grow.
\end{abstract}

\noindent
{\bf Keywords}: cooperative game, balanced game, minimal balanced collections, core, projection

\section{Introduction}
Cooperative game theory has been introduced by von Neumann and Morgenstern
\cite{vnm44} to model situations where players, by forming coalitions, may
create larger benefits or reduce costs. Assuming that all players decide to
cooperate and form the grand coalition $N$, the central question is how to share
in a rational way the total benefit, denoted by $v(N)$, among all the players in
$N$. Any such sharing method is called a solution concept, and one of the most
famous is the core. The notion of the core was introduced by Gillies \cite{gil53} in his study of stable sets,
and later popularized by Shapley as a fundamental solution concept (see \cite{shubik_mem,zhao_core}). The key feature of the
core is that any payment vector in the core satisfies coalitional rationality:
any coalition $S$ will receive at least the amount equal to the benefit the
coalition can produce by itself, that is, $v(S)$. Unfortunately, it is not
always possible to satisfy this condition for every coalition $S$, making the
core an empty set. The well known result of Bondareva \cite{bon63} and Shapley
\cite{sha67} gives a necessary and sufficient condition for a game to have a
nonempty core: these games are called {\it balanced}. In a companion paper
\cite{gagrmi25}, the authors have investigated in detail the geometrical
structure of the set of balanced games (see also a recent paper by Abe and
Nakada \cite{abna23}), and
found that it is a nonpointed cone.

To cope with the situation of having an empty core, many other solution concepts
based on different principles have been proposed (e.g., the Shapley value, the
nucleolus, etc.), but also variations of the core, like the $\epsilon$-core and
the least core \cite{shsh66}. The idea of the $\epsilon$-core is
very simple: just reduce $v(S)$ by a quantity $\epsilon$ for all coalitions $S$,
until satisfying coalitional rationality. The least core
is then the $\epsilon$-core with smallest $\epsilon$ keeping the $\epsilon$-core
nonempty (hence, the smallest $\epsilon$ is positive iff the core is empty).

There is an alternative idea to this equal reduction of $v(S)$ for all $S$,
which generalizes the $\epsilon$-core: why
not to allow unequal reduction of $v(S)$, by a quantity $\epsilon(S)$? Of course, this must be done
in a minimal way, like for the least core: we may for example minimize the (weighted) sum of all $\epsilon(S)$, or
the (weighted) sum of their square.

It turns out that minimizing the sum of the $\epsilon(S)^2$ amounts to
orthogonally projecting the game $v$ on the set of balanced games, equivalently, to find
the closest balanced game $v^*$ to $v$ in the sense of the Euclidean distance,
and to take the core of $v^*$ as a solution for $v$. The
present paper precisely deals with this question.

Projecting on a polyhedron is in general not feasible in an analytical
way. Based on the results of the companion paper \cite{gagrmi25}, we nevertheless provide
explicit expressions for $n=3$ players. For a larger number of players, only the
optimization approach (minimizing the Euclidean distance to the set of balanced
games) is feasible. A naive approach leads however to a quadratic program with
an exponential number of variables and a superexponential number of constraints,
even unknown for more than 7 players. By introducing $n$ auxiliary variables,
the number of constraints can be reduced to an exponential number, which allows
to compute the closest game up to 12 players.

The first main achievement of the paper is to prove that the afore mentionned
optimization problem can be reduced to a problem using only $n$ variables and
one constraint, at the price of having an objective function with a logical
operator. We propose a simple algorithm to solve this optimization problem,
allowing to compute the closest game up to 20 players in a few seconds. As a
consequence, the method and the solution concept it induces is operational for
most games in practice.

In the companion paper \cite{gagrmi25}, we have studied under which conditions a balanced game
has a core reduced to a single point. It never happens for games in the interior
of the cone of balanced games, and we know exactly on which faces and facets
games have a core reduced to a singleton. A simulation study has shown that,
surprisingly, as $n$ increases, the probability that the closest game has a
core reduced to a singleton tends to 1. This reinforces the interest of the solution
concept we propose (core of the closest balanced game), since most of the time a
single point is obtained instead of a set. Our second main achievement is to prove
this result obtained experimentally.
We prove that the proportion of facets
whose games have a singleton core tends to 1 when $n$ tends to infinity. This is achieved by giving an expression of the asymptotic growth of the
  number of minimal balanced collections (generalization of partitions) and the number of minimal balanced
  collections of size $n$.  As very few results exist on minimal  balanced
  collections, this result brings an important contribution to Combinatorics.
Then, we
prove that the same result holds also for faces, and finally prove that the
probability that the closest game has a singleton core tends to 1.

The third achievement of the paper is to have proposed a new solution concept,
similar to the least core, which we may call the {\it least quadratic
    distance core} or {\it least square core}. However, the present paper
concentrates on the projection problem and the fact that most of the time the
core of the projected game has a singleton core. A thorough study of this new
solution concept, in the traditional way of cooperative game theory, still
remains to be done, and will be the topic of a future paper.

The paper is organized as follows. Section~\ref{sec:baco} gives the basic
material on balanced games and recalls the results on the set of balanced
games. Section~\ref{sec:ficl} studies the problem of finding the closest
balanced game, both by the analytical way (Section~\ref{sec:anwa}) and the
optimization way (Section~\ref{sec:opwa}). Section~\ref{sec:clob} introduces the
CLOBIS algorithm, used to find in an efficient way the closest balanced
game. Section~\ref{sec:simu} is devoted to the simulations, in order to study
the probability that the closest game has a singleton core
(Section~\ref{sec:prsi}) and the probability of facets and faces containing games
with a singleton core (Section~\ref{sec:fafa}). Lastly, Section~\ref{sec:prob}
proves in a theoretical way the findings of the previous section and Section 6 presents our conclusions.

\section{Basic concepts}\label{sec:baco}

Throughout the paper we consider a (fixed) set $N$ of $n$ players, denoted by
$N= \{ 1, \ldots , n\} $. {\it Coalitions} are nonempty subsets of $N$, and will
be denoted by capital letters $S,T$, etc. A {\it TU-game} $(N, v)$ (or simply a
game $v$) is a function $ v: 2^N \rightarrow \mathbb{R}$ satisfying
$v(\varnothing )=0.$ The value $v(S)$ represents the maximal value (benefit)
that the coalition $S$ can guarantee, no matter what players outside $S$ might
do. We will denote by $\cG(n)$ the set of games $v$ on $N.$ For further use, we
introduce $\cG_\alpha(n)$, $\alpha\in\RR$, the set of games $v\in \cG(n)$
such that $v(N)=\alpha$. In addition, we will often use the following family of
games, which are bases of the $(2^n-1)$-dimensional vector space $\cG(n)$:
The {\it Dirac games} $\delta_S$, $\varnothing\neq S\subseteq N$, defined by
  \[
  \delta_S(T) = \begin{cases}
    1, & \text{ if } T=S\\
    0, & \text{ otherwise}
    \end{cases}.
    \]


Assuming that all players agree to form the grand coalition $N$, we look for a rational way to share the benefit $v(N)$ among all players, i.e., for an {\it
  allocation} or {\it payment vector} $x\in \RR^N$, where coordinate $x_i$ indicates the payoff given to player $i.$ For any coalition $S$, we denote by
$x(S):=\sum_{i\in S} x_i$ the total payoff given to the players in $S$. An allocation is {\it efficient} if $x(N)=v(N)$. A systematic way of assigning a
set of allocations to a game is called a {\it solution concept}.

Of course, we look for an allocation $x$ being reasonable according a criterion, in the sense that $x_i$ is sensible for player $i.$ There are many solution concepts. In this paper we focus on one of the best known solution concepts, which is the {\it core} \cite{gil53}. The core is the set of efficient allocations satisfying {\it coalitional rationality}, which means that $x(S)\geqslant v(S)$ for all coalitions $S$. Under this condition, no coalition $S$ has an incentive to leave the grand coalition $N$ to form a subgame on $S$. The core of a game $(N,v)$ will be denoted by $C(N,v)$ (or $C(v)$ for short if $N$ is fixed):

\[ C(v):=\{ x\in \mathbb{R}^n: x(S)\geqslant v(S), \forall S\in2^N,x(N)=v(N)\}.\]
The core of a game is a closed convex polytope that may be empty.
%
%
A condition for nonemptiness of the core has been given by Bondareva \cite{bon63} and Shapley \cite{sha67}, which we detail below.

A {\it balanced collection} $\cB$ on $N$ is a family of nonempty subsets of $N$ such that there exist positive (balancing) weights $\lambda^{\cB}_S$, $S\in\cB$, satisfying
\[  \sum_{\substack{S\in\cB\\S\ni i}} \lambda^{\cB}_S = 1, \forall i\in N.\]
This notion is an extension of the notion of partition, as any partition is a
balanced collection with balancing weights all equal to 1. For a balanced
collection  $\cB =\{ S_1, ..., S_r\} ,$ we can build a $ n\times r$ matrix $M_{\cB}$ via
$$ M_{\cB}(i,j)= \left\{ \begin{array}{cc} 1 & \text{~if~}  i\in S_j \\ 0 & \text{otherwise} \end{array} \right. . $$
A balanced collection is {\it minimal} if it contains no balanced proper subcollection. We denote by $\gB^*(n)$ the set of all minimal balanced collections on a set $N$ of cardinality $n$, excluding the collection $\{ N\}$. It can be shown that minimal balanced collections (abbreviated hereafter by m.b.c.) have a unique set of balancing weights and that their cardinality is at most $n$. In terms of $M_{\cB}$ this is equivalent to say that the system

$$ M_{\cB}{\bm\lambda}  = {\bm 1}$$ has a unique and positive solution.

A game $v$ is {\it balanced} if
\begin{equation}\label{bal}
 \sum_{S\in \cB} \lambda_S^{\cB} v(S) \leqslant v(N), \quad \forall \cB \in \gB^*(n).
\end{equation}
The following result holds \cite{bon63,sha67}.

\begin{theorem}\label{th:bal}
Consider a game $(N, v)$. Then, $C(v)\ne \varnothing $ if and only if $v$ is balanced.
\end{theorem}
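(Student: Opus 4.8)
The plan is to prove the two implications using linear-programming duality applied to the core as a linear feasibility system. First I would set up the core as the feasible region of the linear program that minimizes $x(N)$ subject to $x(S)\geqslant v(S)$ for all $\varnothing\neq S\subsetneq N$ and $x(N)=v(N)$; equivalently, consider the LP $\min\{x(N) : x(S)\geqslant v(S)\ \forall\, \varnothing\neq S\subseteq N\}$ (dropping the equality constraint and instead reading it off from the optimal value). The core is nonempty if and only if this LP has optimal value equal to $v(N)$, since feasibility of the lower-bound constraints is automatic (take $x_i$ large) and the only issue is whether we can achieve $x(N)\le v(N)$ while respecting them.

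The key step is to write down the dual of this LP. The primal has variables $x\in\RR^N$ and constraints indexed by coalitions $S$; the dual has a variable $y_S\geqslant 0$ for each coalition $S\neq N$ and a free variable for $S=N$, with dual constraints $\sum_{S\ni i} y_S = 1$ for each $i\in N$ (matching the coefficient of $x_i$ in the objective $x(N)$) and dual objective $\max \sum_S y_S v(S)$. By strong LP duality, the core is nonempty (primal optimum $=v(N)$) if and only if every dual-feasible $y$ satisfies $\sum_S y_S v(S)\leqslant v(N)$. A dual-feasible point $y$ with $y_S\geqslant 0$ and $\sum_{S\ni i} y_S=1$ is exactly a (fractional, possibly on the full power set) balanced collection with its balancing weights, reading $\cB=\{S : y_S>0\}$.

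The remaining step is to reduce the condition ``$\sum_S y_S v(S)\leqslant v(N)$ for all balanced collections'' to the same condition quantified only over \emph{minimal} balanced collections, i.e.\ to inequality~\eqref{bal}. For this I would use the standard fact, stated in the excerpt, that every balanced collection is a nonnegative combination of minimal balanced collections: the set of dual-feasible $y$ (intersected with the affine hyperplane $\sum_S y_S = $ const, or more simply the polytope $\{y\geqslant 0 : \sum_{S\ni i} y_S=1\ \forall i\}$) is a polytope whose vertices are precisely the minimal balanced collections, because minimality corresponds exactly to the balancing system $M_{\cB}\bm\lambda=\bm 1$ having a unique solution, which is the vertex (extreme point) condition. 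Since the linear functional $y\mapsto \sum_S y_S v(S)$ attains its maximum over a polytope at a vertex, it suffices to check $\sum_{S\in\cB}\lambda^\cB_S v(S)\leqslant v(N)$ for the finitely many minimal balanced collections $\cB$, which is~\eqref{bal}. One subtlety is that the vertex set of this polytope includes the singleton $\{N\}$ (with weight $1$), which gives the trivially satisfied inequality $v(N)\leqslant v(N)$, so excluding it in the definition of $\gB^*(n)$ loses nothing.

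The main obstacle is the last step: justifying cleanly that the extreme points of the balancing polytope are exactly the minimal balanced collections and that it suffices to test the objective at extreme points. This requires noting that the polytope is bounded (each $y_S\le 1$) so it is the convex hull of its vertices, and arguing the equivalence between ``$\cB$ minimal'' and ``$M_\cB\bm\lambda=\bm 1$ has a unique solution'' and ``the corresponding $y$ is a vertex'' — decomposing a non-minimal balanced collection by peeling off a balanced subcollection and showing the weight vector is a proper convex combination of two other balancing weight vectors. Everything else is a routine application of strong duality for linear programming.
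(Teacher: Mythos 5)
The paper does not prove Theorem~\ref{th:bal} at all: it is the classical Bondareva--Shapley theorem, stated with a citation to \cite{bon63,sha67}. Your LP-duality argument --- core nonemptiness as optimality of $\min x(N)$ subject to $x(S)\geqslant v(S)$, strong duality to pass to the balancing polytope $\{y\geqslant 0: \sum_{S\ni i}y_S=1\ \forall i\}$, and reduction to minimal balanced collections by identifying them with the vertices of that polytope --- is precisely the standard proof from those sources, and it is correct, including the observations that the polytope is bounded and that the excluded vertex $\{N\}$ only yields the trivial inequality $v(N)\leqslant v(N)$.
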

We will denote by $\cBG(n)$ the set of games in $\cG(n)$ being balanced, i.e., the set of games having a nonempty core.
Hence, applying Theorem~\ref{th:bal} and (\ref{bal}), we get

\begin{equation}\label{eq:bg}
\cBG(n)=\Big\{ v\in \cG(n) \mymid \sum_{S \in \cB} \lambda_S^{\cB} v(S) -v(N) \leqslant 0, \quad \forall  \cB \in \gB^*(n) \Big\} .
\end{equation}
 Another set of interest, which we will consider throughout the paper, is
  the set of balanced games $v\in\cG_\alpha(n)$ for some $\alpha\in\RR$. We
  denote by $\cBG_{\alpha }(n)$ the set of such games, i.e.,
\begin{equation}\label{eq:bgalpha}
\cBG_\alpha (n)=\Big\{ v\in \cG_\alpha(n) \mymid \sum_{S \in \cB} \lambda_S^{\cB} v(S) \leqslant \alpha, \quad \forall  \cB \in \gB^*(n) \Big\} .
\end{equation}
The study of this set is motivated as follows: when searching for the closest
balanced game $v^*$ to a given non-balanced game $v$, it is natural to impose
that $v^*(N)=v(N)$, as this is the total benefit which has to be distributed
among players.  Therefore, the set of interest is $\cBG_{v(N)}(n)$.


Note that $\cBG(n)$ and $\cBG_{\alpha }(n)$ are convex polyhedra. The
structure of these polyhedra has been studied in \cite{gagrmi25}.
  Basically, $\cBG_\alpha(n)$ is an affine cone (i.e., a cone plus a point using Minkowski addition)
  which is not pointed. Its precise structure is given in Section~\ref{sec:anwa}. We
  recall below important results related to its faces.

\begin{theorem}\label{th:bgfacet}
  Each inequality in (\ref{eq:bgalpha}) defines a facet, i.e., minimal balanced
  collections in $\gB^*(n)$ are in bijection with the facets of $\cBG_\alpha(n)$.
\end{theorem}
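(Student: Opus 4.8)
The plan is to show two things for each minimal balanced collection $\cB\in\gB^*(n)$: first, that the inequality $\sum_{S\in\cB}\lambda^{\cB}_S v(S)\leqslant\alpha$ is not implied by the others (so it is not redundant), and second, that the set of games in $\cBG_\alpha(n)$ satisfying it with equality has dimension exactly $\dim\cBG_\alpha(n)-1$. Since $\cBG_\alpha(n)$ is a full-dimensional polyhedron inside the hyperplane $\{v(N)=\alpha\}$ of $\cG(n)$ (it contains the additive games, hence has a nonempty interior relative to that hyperplane), it suffices to exhibit, for each fixed $\cB$, a game $v^0$ that satisfies the $\cB$-inequality with equality and all the other m.b.c.-inequalities strictly; this single witness already forces the $\cB$-inequality to be irredundant and its zero-set to be a facet, because a full-dimensional polyhedron has a unique non-redundant description and a point lying in the relative interior of exactly one defining hyperplane lies on a facet.

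The construction of the witness is the crux. Given $\cB=\{S_1,\dots,S_r\}$ with balancing weights $\lambda_1,\dots,\lambda_r$, I would take a game of the form $v^0 = \alpha\,\delta_N - \sum_{j=1}^r c_j\,\delta_{S_j}$ for small positive constants $c_j$ to be tuned, possibly perturbed by a generic additive game. For such a game, $\sum_{S\in\cB}\lambda^{\cB}_S v^0(S) - v^0(N) = -\sum_j c_j\lambda_j < 0$ unless we rescale; the right normalization is instead to look at the "slack" functional and make it vanish on $\cB$ while staying strictly negative on every other $\cB'\in\gB^*(n)$. Concretely, for $\cB'\neq\cB$ the quantity $\sum_{S\in\cB'}\lambda^{\cB'}_S v^0(S)-\alpha$ equals $-\sum_{j:\,S_j\in\cB'}\lambda^{\cB'}_{S_j}c_j$ plus a term coming from the additive part which is $0$ by the balancing condition; so any choice of strictly positive $c_j$ keeps all these $\leqslant 0$, and one must then argue equality cannot hold, i.e. that no other m.b.c. $\cB'$ contains $\{S:c_S\neq 0\}=\cB$ — but that is exactly the statement that a minimal balanced collection is not properly contained in another minimal balanced collection, which follows from minimality together with the uniqueness of balancing weights. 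For the $\cB$-inequality itself one adjusts so equality holds; a clean way is to set $v^0(S_j)$ for $S_j\in\cB$ using the weights and $v^0(S)$ very negative for $S\notin\cB\cup\{N\}$, then check directly.

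The main obstacle I anticipate is the combinatorial step: verifying that for every $\cB'\in\gB^*(n)$ distinct from $\cB$, the equality $\sum_{S\in\cB'}\lambda^{\cB'}_S v^0(S)=\alpha$ fails. This reduces to showing $\cB\not\subseteq\cB'$ for $\cB'\neq\cB$ (both minimal), which is where minimality is essential: if $\cB\subseteq\cB'$ then $\cB$ is a balanced subcollection of $\cB'$, contradicting minimality of $\cB'$ unless $\cB=\cB'$. Once that is in hand, the facet dimension count is immediate from the general polyhedral fact quoted above, and one should also note the converse direction — that every facet arises this way — follows because every non-redundant inequality in the description \eqref{eq:bgalpha} must define a facet, and redundancy has just been excluded. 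I would also double-check the edge cases: that $\cBG_\alpha(n)$ is genuinely full-dimensional in $\{v(N)=\alpha\}$ (using that additive games are in its interior, as all m.b.c. inequalities hold strictly for strictly superadditive perturbations of additive games), and that excluding the trivial collection $\{N\}$ from $\gB^*(n)$ is the right normalization so that no inequality is the identity $\alpha\leqslant\alpha$.
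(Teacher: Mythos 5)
The paper itself does not prove this theorem: it is recalled without proof from the companion paper \cite{gagrmi25}, so there is no internal argument to compare yours against. Judged on its own terms, your strategy is the standard and correct one --- exhibit, for each $\cB\in\gB^*(n)$, a game $v^0\in\cBG_\alpha(n)$ satisfying the $\cB$-inequality with equality and every other m.b.c.\ inequality strictly --- and your second construction ($v^0(S)$ chosen on $\cB$ so that $\sum_{S\in\cB}\lambda^{\cB}_S v^0(S)=\alpha$, $v^0(S)=-M$ for $S\notin\cB\cup\{N\}$ with $M$ large, $v^0(N)=\alpha$) does work: for $\cB'\neq\cB$ either $\cB'\setminus\cB\neq\varnothing$, in which case the $-M$ term makes the $\cB'$-sum strictly less than $\alpha$ for $M$ large (finitely many $\cB'$, so one $M$ suffices), or $\cB'\subsetneq\cB$, which is impossible. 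Such a witness simultaneously yields full-dimensionality, irredundancy, the facet dimension count, and injectivity of the correspondence, so the theorem follows.

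Two points in your write-up are wrong as stated, though neither is fatal. First, additive games are \emph{not} in the relative interior of $\cBG_\alpha(n)$: for additive $x$ one has $\sum_{S\in\cB}\lambda^{\cB}_S x(S)=\sum_i x_i\sum_{S\ni i}\lambda^{\cB}_S=x(N)$, so additive games satisfy \emph{every} m.b.c.\ inequality with equality and lie on every facet (they sit in the lineality space). Interiority is obtained instead by the perturbation $v(S)=x(S)-\epsilon$ for $S\neq N$, which makes all inequalities strict; your final paragraph gestures at this, but the justification should be that correction, not ``contains the additive games.'' Second, your first construction $v^0=\alpha\delta_N-\sum_j c_j\delta_{S_j}$ does not do what you claim: there the slack of $\cB'$ equals $-\sum_{S_j\in\cB'\cap\cB}\lambda^{\cB'}_{S_j}c_j$, which vanishes precisely when $\cB'\cap\cB=\varnothing$, so the inequality you need to be strict fails exactly for m.b.c.\ disjoint from $\cB$ (e.g.\ when $\alpha=0$), and the relevant containment is not $\cB\subseteq\cB'$. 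For the construction that does work, the combinatorial fact needed is that $\cB'\subsetneq\cB$ is impossible, which is immediate from minimality of $\cB$ (no proper balanced subcollection) and does not require uniqueness of the balancing weights. With these corrections the proof is complete.
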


\begin{theorem}\label{th:pcbg}
Consider a minimal balanced collection $\cB\in\gB^*(n)$ and its corresponding
facet $\cF$ in $\cBG_\alpha(n)$. The following holds:
\begin{enumerate}
\item If $|\cB|=n$, every game in $\cF$ has a point core.
\item Otherwise, no game in the relative interior of $\cF$ has a point core.
\end{enumerate}
\end{theorem}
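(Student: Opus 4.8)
The plan is to deduce both parts from a single dimension computation for $C(v)$ when $v\in\cF$. Write $\cB=\{S_1,\dots,S_r\}$ with its unique positive balancing weights $\lambda_1,\dots,\lambda_r$, so that $\cF=\{v\in\cBG_\alpha(n)\mymid\sum_{j}\lambda_j v(S_j)=\alpha\}$. The first step, valid for \emph{every} $v\in\cF$, is the observation that any $x\in C(v)$ must satisfy $x(S_j)=v(S_j)$ for all $j$: from $v(N)=x(N)=\sum_j\lambda_j x(S_j)\geqslant\sum_j\lambda_j v(S_j)=\alpha=v(N)$ together with $\lambda_j>0$, every inequality $x(S_j)\geqslant v(S_j)$ is forced to be tight. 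Hence $C(v)\subseteq A_v:=\{x\in\RR^n\mymid x(S_j)=v(S_j),\ j=1,\dots,r\}$. Since $\cB$ is a minimal balanced collection, $M_\cB{\bm\lambda}={\bm 1}$ has a unique solution, so $M_\cB$ has full column rank $r$; thus $A_v$ is a nonempty affine subspace of dimension $n-r$ (and it lies in $\{x\mymid x(N)=v(N)\}$ since ${\bm 1}_N=\sum_j\lambda_j{\bm 1}_{S_j}$).

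Part 1 is then immediate: if $r=|\cB|=n$ then $A_v$ is a single point, and since $v\in\cF\subseteq\cBG_\alpha(n)$ is balanced, $C(v)\neq\varnothing$ by Theorem~\ref{th:bal}; being contained in a singleton, $C(v)=A_v$ is that point.

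For Part 2 assume $r<n$, so $\dim A_v=n-r\geqslant 1$, and let $v\in\operatorname{relint}(\cF)$. I would show that $C(v)$ is full-dimensional in $A_v$, i.e.\ $\dim C(v)=n-r\geqslant 1$, which forbids a point core. Writing $C(v)=\{x\in A_v\mymid x(S)\geqslant v(S)\ \forall S\}$ and invoking the standard description of the affine hull of a polyhedron through its implicit equalities, it suffices to check that for every coalition $S^{*}$ with ${\bm 1}_{S^{*}}\notin\operatorname{span}\{{\bm 1}_{S_1},\dots,{\bm 1}_{S_r}\}$ there is some $x\in C(v)$ with $x(S^{*})>v(S^{*})$; for the remaining coalitions the map $x\mapsto x(S)$ is constant on $A_v$ (and $\geqslant v(S)$ there, as $C(v)\neq\varnothing$), so those constraints cannot lower the dimension. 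To produce such an $x$ I would perturb the game: set $v':=v+\epsilon\,\delta_{S^{*}}$. Since ${\bm 1}_N$ lies in the span but ${\bm 1}_{S^{*}}$ does not, $S^{*}\neq N$ and so $v'(N)=v(N)=\alpha$; and because $v\in\operatorname{relint}(\cF)$, the slack $\alpha-\sum_{S\in\cB'}\lambda^{\cB'}_S v(S)$ is strictly positive for every minimal balanced collection $\cB'\neq\cB$. As $\gB^{*}(n)$ is finite, for $\epsilon>0$ small enough $v'$ is still balanced (and still in $\cF$), hence $C(v')\neq\varnothing$ by Theorem~\ref{th:bal}. Any $x'\in C(v')$ then has $x'(N)=v(N)$, $x'(S)\geqslant v(S)$ for all $S$, and $x'(S^{*})\geqslant v(S^{*})+\epsilon>v(S^{*})$, so $x'\in C(v)$ is the point we needed.

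Putting this together yields $\operatorname{aff}(C(v))=A_v$, hence $\dim C(v)=n-|\cB|\geqslant 1$ whenever $v\in\operatorname{relint}(\cF)$ and $|\cB|<n$, so no such $v$ has a point core. The two places that require care are the dimension claim for $A_v$, which rests on $M_\cB$ having full column rank — a direct consequence of the minimality of $\cB$ and the uniqueness of its balancing weights — and the perturbation step, where one must verify that adding a small multiple of $\delta_{S^{*}}$ cannot violate any of the finitely many balancedness inequalities; this is the step where the hypothesis $v\in\operatorname{relint}(\cF)$ is used in an essential way. The passage from "no implicit equality beyond $A_v$" to the value of $\dim C(v)$ is routine polyhedral theory.
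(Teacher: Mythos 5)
Your proof is correct. Note that the paper itself states Theorem~\ref{th:pcbg} without proof, importing it from the companion paper \cite{gagrmi25}, so there is no in-paper argument to compare against; judged on its own terms, your argument is sound and self-contained given Theorem~\ref{th:bal} and Theorem~\ref{th:bgfacet}. The two load-bearing steps both check out: (i) uniqueness of the balancing weights does give $M_\cB$ full column rank $r$, so $A_v$ is a nonempty affine subspace of dimension $n-r$ and Part~1 follows immediately; (ii) the perturbation $v'=v+\epsilon\,\delta_{S^*}$ stays in $\cBG_\alpha(n)$ for small $\epsilon>0$ precisely because $v\in\operatorname{relint}(\cF)$ forces strict slack in every other facet-defining inequality — this uses the bijection of Theorem~\ref{th:bgfacet} (so that no inequality in the description \eqref{eq:bgalpha} is redundant) together with the standard fact that distinct facets meet in a face of codimension at least two. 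The passage from ``every non-redundant inequality is slack at some point of $C(v)$'' to $\operatorname{aff}(C(v))=A_v$ is indeed routine implicit-equality bookkeeping. A pleasant byproduct of your argument, beyond what the theorem asserts, is the exact dimension formula $\dim C(v)=n-|\cB|$ for $v$ in the relative interior of the facet.
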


\begin{theorem}\label{th:pcbg1}
Consider a face $\cF = \cF_1 \cap \cdots \cap \cF_p,$ with $\cF_1, \ldots, \cF_p$ the facets associated to m.b.c. $\cB_1,\ldots,\cB_p,$ respectively. Then, any game in $\cF$ has a point
core iff the rank of the matrix $\{1^S,S\in \cB_1\cup\cdots\cup\cB_p\}$ is $n$.
\end{theorem}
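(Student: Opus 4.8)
The plan is to analyse, for each game $v\in\cF$, the smallest affine subspace containing the core $C(v)$, and to compare it with the span of the indicator vectors $1^S$, $S\in\cB_1\cup\cdots\cup\cB_p$. Put $\alpha:=v(N)$, let $M$ be the matrix with rows $1^S$, $S\in\cB_1\cup\cdots\cup\cB_p$, and set $A(v):=\{x\in\RR^n\mymid x(S)=v(S)\text{ for all }S\in\cB_1\cup\cdots\cup\cB_p\}$, the solution set of $Mx=(v(S))_S$. The first step is to show $C(v)\subseteq A(v)$ for every $v\in\cF$. Indeed, fix $k$: since $v\in\cF_k$, the inequality in (\ref{eq:bgalpha}) attached to $\cB_k$ is an equality, $\sum_{S\in\cB_k}\lambda^{\cB_k}_S v(S)=\alpha$. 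For any $x\in C(v)$ one has $x(S)\geqslant v(S)$ for all $S$, while balancedness of $\cB_k$ gives $\sum_{S\in\cB_k}\lambda^{\cB_k}_S x(S)=\sum_{i\in N}x_i\big(\sum_{S\in\cB_k,\,S\ni i}\lambda^{\cB_k}_S\big)=x(N)=\alpha$; comparing the two and using $\lambda^{\cB_k}_S>0$ forces $x(S)=v(S)$ for all $S\in\cB_k$, and letting $k$ vary gives $x\in A(v)$. Since $v$ is balanced, $C(v)\neq\varnothing$ by Theorem~\ref{th:bal}, so $A(v)$ is a nonempty affine subspace of $\RR^n$ of dimension $n-\operatorname{rank}(M)$. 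This already settles the ``if'' direction: if $\operatorname{rank}(M)=n$ then $A(v)$ is a single point, hence so is $C(v)$, for \emph{every} $v\in\cF$.

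For the converse I would prove the contrapositive: if $\operatorname{rank}(M)<n$, then every $v$ in the relative interior of $\cF$ has $\dim C(v)=n-\operatorname{rank}(M)\geqslant 1$, so a non-point core. The point is that the core constraints holding with equality on all of $C(v)$ are exactly those defining $A(v)$. Set $\mathcal{T}(v):=\{S\mymid x(S)=v(S)\ \forall x\in C(v)\}$; the inclusion $\cB_1\cup\cdots\cup\cB_p\subseteq\mathcal{T}(v)$ was obtained above. Conversely, take $S\in\mathcal{T}(v)$ with $S\neq N$. Then $x(S)\leqslant v(S)$ is a valid inequality on the nonempty polyhedron $C(v)$, so by Farkas' lemma $-1^S$ equals a nonnegative combination of the vectors $1^T$, $T\subsetneq N$, plus a real multiple of $1^N$, with the matching inequality on the right-hand sides; rearranging this identity exhibits a balanced collection $\cB'\ni S$ with $\sum_{T\in\cB'}\lambda'_T v(T)\geqslant\alpha$, and balancedness of $v$ forces the reverse inequality, so $\cB'$ is tight at $v$. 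Decomposing the balancing vector of $\cB'$ into a convex combination of the balancing vectors of its minimal balanced subcollections and invoking balancedness of $v$ once more forces each of those subcollections to be tight at $v$ as well. But $v$ lies in the relative interior of $\cF$, so the minimal balanced collections tight at $v$ are precisely $\cB_1,\ldots,\cB_p$ (by Theorem~\ref{th:bgfacet}, tightness of an m.b.c.\ at $v$ means $v$ lies on the corresponding facet, and the facets through $v$ are exactly $\cF_1,\ldots,\cF_p$); hence $S\in\cB_1\cup\cdots\cup\cB_p$. Therefore $\mathcal{T}(v)\setminus\{N\}=\cB_1\cup\cdots\cup\cB_p$, and consequently $\operatorname{relint}C(v)=\{x\in A(v)\mymid x(S)>v(S)\text{ for all }S\notin\mathcal{T}(v)\}$ is a nonempty relatively open subset of $A(v)$, so $\dim C(v)=\dim A(v)=n-\operatorname{rank}(M)$, as claimed. (Throughout, $\cF_1,\ldots,\cF_p$ are taken to be all the facets of $\cBG_\alpha(n)$ containing $\cF$, which makes the set $\cB_1\cup\cdots\cup\cB_p$ and its rank intrinsic to $\cF$.)

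The routine parts are the two dimension counts and the tightness computation of the first paragraph; the one genuinely delicate step is the reverse inclusion $\mathcal{T}(v)\subseteq\cB_1\cup\cdots\cup\cB_p$, i.e.\ converting ``$x(S)=v(S)$ holds throughout $C(v)$'' into ``$S$ lies in a minimal balanced collection tight at $v$'', which relies on the Farkas argument together with the fact that a balancing-weight vector of any balanced collection is a convex combination of balancing-weight vectors of its minimal balanced subcollections; this is the part I would write out in full. As a sanity check, the case $p=1$ recovers Theorem~\ref{th:pcbg}, since for a minimal balanced collection $\cB$ the system $M_\cB{\bm\lambda}={\bm 1}$ having a unique positive solution means $\operatorname{rank}(M_\cB)=n$ precisely when $|\cB|=n$.
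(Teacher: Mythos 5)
The paper itself does not prove Theorem~\ref{th:pcbg1}: it is recalled from the companion paper \cite{gagrmi25}, so there is no internal proof to compare against. Judged on its own, your argument is correct and essentially complete. The inclusion $C(v)\subseteq A(v)$ via complementary slackness on each tight balanced inequality is exactly what the ``if'' direction needs, and your Farkas-plus-decomposition argument for the reverse inclusion $\mathcal{T}(v)\setminus\{N\}\subseteq\cB_1\cup\cdots\cup\cB_p$ at a relative interior point is sound; the one nontrivial ingredient there --- that the balancing vector of an arbitrary balanced collection is a convex combination of balancing vectors of its minimal balanced subcollections, so tightness of $\cB'$ at a balanced $v$ forces tightness of a minimal subcollection containing $S$ --- is a classical fact going back to Shapley, and your use of it is correct (note also that a m.b.c.\ containing $N$ must equal $\{N\}$, so the subcollection you extract through $S\neq N$ automatically lies in $\gB^*(n)$). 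Two details you handle that are genuinely needed: first, the constraint $x(N)=v(N)$ is implied by the equations defining $A(v)$ because $1^N=\sum_{S\in\cB_k}\lambda^{\cB_k}_S 1^S$, which makes the dimension count $\dim C(v)=n-\operatorname{rank}(M)$ legitimate; second, the statement is only true if $\cF_1,\ldots,\cF_p$ exhaust the facets containing $\cF$ (otherwise the rank of the displayed matrix could be strictly smaller than the intrinsic one while every game in $\cF$ still has a point core, killing the ``only if'' direction), and your parenthetical convention matches the paper's own choice in Section~\ref{sec:fasi} of representing each face by the intersection of the maximum number of facets. Your sanity check that $p=1$ recovers Theorem~\ref{th:pcbg} is also accurate, since uniqueness of the balancing weights forces $\operatorname{rank}(M_\cB)=|\cB|$.
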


\section{Finding the closest balanced game}\label{sec:ficl}
Let us look at a numerical example:

\begin{example}\label{Example1}
A group of four companies in the same sector has analyzed the possibility of collaborating and decided that cooperation among all four companies would be beneficial for everyone. They conducted a market study and obtained an estimate of the maximum profits each coalition would achieve to ensure a fair distribution.
The estimated game is given in Table \ref{Tab:1}.  For simplicity, subsets are
  written without braces and commas, i.e., $\{1,2,3\}$ is denoted by 123. This
  convention will be used throughout the paper.
\begin{table}[h]
\centering
\begin{tabular}{|c|cccccccccc|}
\hline $S$ & $1$ & $2$ & $3$ & $4$ & $12$ & $13$ & $14$ & $23$ & $24$ & $34$ \\
$v(S)$     & 32      & 25      & 27      & 16      & 46       & 46       & 38       & 49       & 26       & 54   \\
\hline $S$ &  $123$ & $124$ & $134$ & $234$ & $1234$ & & & & & \\
    $v(S)$ &         95 &        79 &        64 &        88 & 100 & & & & & \\
\hline
\end{tabular}
\caption{Estimated game $v$ for Example \ref{Example1}.}
\label{Tab:1}
\end{table}

A natural benefit that each of the companies expects to obtain could be the one provided by an imputation in the core of the game. However, it can be checked that the game has an empty core. So, what can be done?
\end{example}


Consider a game $v$ with an empty core. In order to find a ``reasonable'' way to
share the value of the grand coalition $v(N),$ we propose to look for the
closest game $v^*$, according to the (possibly weighted) Euclidean
distance to the set $\cBG_\alpha(n)$ with $\alpha=v(N)$, and take
one of the imputations in $C(v^*).$  As it will become apparent, this
  amounts to reducing in a minimal way some quantities $v(S)$, $S\subset N$.

   Finding the closest game $v^*$ w.r.t. the Euclidean distance amounts to
    projecting $v$ on  the set $\cBG_{v(N)}(n)$. This
    shows that the problem can be solved either in an analytical way, using
    results of linear algebra, or by solving an optimization problem. We address
    these two ways in what follows.

 \subsection{The analytical way}\label{sec:anwa}
While orthogonally projecting a point on an affine or linear subspace is a routine exercise,
there is no analytical expression of an orthogonal projection on a polyhedron. The reason is
that we do not know a priori on which facet or face the projection will
fall. This is well illustrated by Fig.~\ref{fig:mosaico}, where the polyhedron
is simply a triangle.
\begin{figure}[h]
\centering
\begin{tikzpicture}[scale=4]

\pgfmathsetmacro{\h}{sqrt(3)/2}

\coordinate (A) at (0,0);
\coordinate (B) at (1,0);
\coordinate (C) at (0.5,\h);

\clip (-0.7,-0.5) rectangle (1.7,1.2);

\fill[yellow!40, opacity=0.4] (A)--(B)--(C)--cycle;

\fill[blue!40, opacity=0.4] (0.0,-0.5) -- (1.0,-0.5) -- (1.0,0) -- (0.0,0) -- cycle;

\fill[green!40, opacity=0.4] (0,0) -- (0.5,\h) -- (-0.08,1.2) -- (-0.7,1.2) -- (-0.7,0.4) -- cycle;

\fill[pink!40, opacity=0.4] (1,0) -- (0.5,\h) -- (1.08,1.2) -- (1.7,1.2) -- (1.7,0.4) -- cycle;

\fill[green!80, opacity=0.4] (0,0) -- (0,-0.5) -- (-0.7,-0.5) -- (-0.7,0.4) -- cycle;

\fill[purple!60, opacity=0.4] (1,0) -- (1,-0.5) -- (1.7,-0.5) -- (1.7,0.4) -- cycle;

\fill[orange!60, opacity=0.4] (0.5,\h) -- (-0.08,1.2) -- (1.08,1.2) -- cycle;

\draw[thick] (A)--(B)--(C)--cycle;

\foreach \p in {A,B,C}{
    \fill[red] (\p) circle[radius=0.015];
}

\end{tikzpicture}
\caption{Projecting on a triangle. Each region shows the set of points projecting on each face (edges and vertices) of the triangle.}\label{fig:mosaico}
\end{figure}
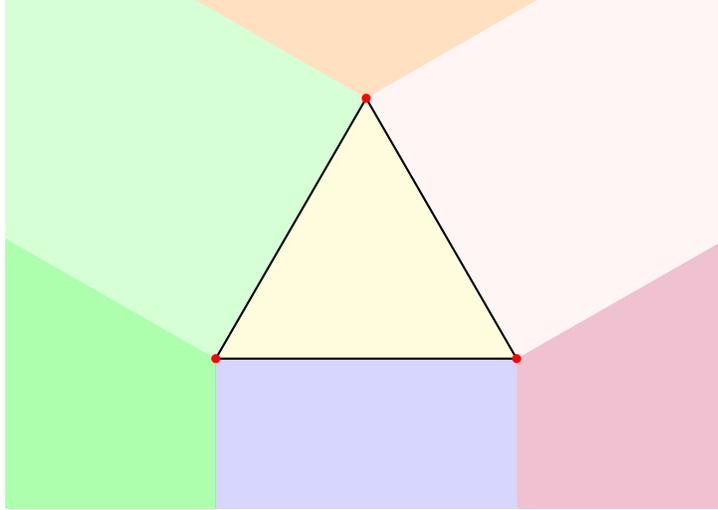

We clearly see that depending on which zone the point lies, its orthogonal projection will
fall either on one of the edges or on one of the vertices. However, the inverse
problem is easier: for a given face of $\cBG(n)$ or $\cBG_\alpha(n)$, find all games whose orthogonal
projection falls on that face. We address this problem for the case of
$\cBG_\alpha(n)$.

We recall from \cite{gagrmi25} that $\cBG_\alpha(n)$ is an
affine cone, expressed by
\[
\cBG_\alpha(n)=\alpha u_{\{n\}}+\cC_0(n),
\]
with $u_{\{n\}}=\sum_{S\ni n}\delta_S$, and $\cC_0(n)$ is a cone whose expression
  is
  \[
\cC_0(n) = \Lin(\cC_0(n))+\cC^0_0(n),
\]
where $\cC_0^0(n)$ is a pointed cone such that games in it have coordinates
0 for the subsets $\{1\},\ldots\{n-1\}$, and $\Lin(\cC_0(n))$ is its lineality
space, with basis $w_1,\ldots,w_{n-1}$ given by
\[
w_i = \sum_{\substack{S\ni i\\S\not\ni n}}\delta_S  -  \sum_{\substack{S\not\ni
    i\\S\ni n}}\delta_S, \ i=1,\ldots,n-1.
\]
Moreover, the extremal rays of $\cC_0(n)$ are, apart the vectors of the basis of
the lineality space:
  \begin{itemize}
  \item  $2^n-n-2$ extremal rays of the form $r_S=-\delta_S$, $S\subset N$, $|S|>1$;
  \item $n$ extremal rays of the form
    \[
r_i = \sum_{\substack{S\ni i\\ S\not\ni n\\|S|>1}}\delta_S-\sum_{\substack{S\not\ni i\\ S\ni n}}\delta_S,\quad i\in N\setminus\{n\},
\]
and $r_n=-\delta_{\{n\}}$.
  \end{itemize}

Note that $v(N)=0$ for any game on $\cC_0(n).$ From standard results, the projection of some game $v$ on $\cBG_\alpha(n)$ writes
\[
v^*=\proj_{\cBG_\alpha(n)}(v) = \alpha u_{\{n\}} + \proj_{\cC_0(n)}(v).
\]

Let us consider a facet of $\cBG_\alpha(n)$ (translation by $\alpha u_{\{n\}}$
of a facet on $\cC_0(n)$), which corresponds to a minimal balanced
collection, say $\cB$ (with some abuse, we denote the facet by the same symbol).
Let us denote by $\rho_1,\ldots,\rho_r$ the set of extremal rays in $\cB$ not in
the lineality space. As the lineality space is included in any facet, any $v^*$
in the facet $\cB$ writes
\[
v^* = \alpha u_{\{n\}}+\sum_{i=1}^r\alpha_i\rho_i + \sum_{i=1}^{n-1}\beta_iw_i
\]
with $\alpha_i\geqslant 0$, $i=1,\ldots,r$, and $\beta_i\in\RR$, $i=1,\ldots,n-1$.

Any game $v$ such that $v-v^*$ is colinear with the normal vector of the facet
will have as projection on the facet the game $v^*$. As the normal vector is
$\lambda^\cB=(\lambda^\cB_S)_{S\in 2^N\setminus \{\varnothing\}}$ (the vector of
balancing weights, putting 0 when $S\not\in\cB$), we finally obtain that any $v$
whose projection falls on $\cB$
must have the form:
\[
v=\alpha u_{\{n\}}+\gamma\lambda^\cB+\sum_{i=1}^r\alpha_i\rho_i + \sum_{i=1}^{n-1}\beta_iw_i
\]
for some $\gamma,\alpha_1,\ldots,\alpha_r\geqslant 0$ and some real numbers
$\beta_1,\ldots, \beta_{n-1}$.

The case of faces is similar. A face is an intersection of facets, and is
therefore identified as a collection of m.b.c. $\cB_1,\ldots,\cB_p$. Any game
$v^*$ in this face is a conic combination of all extremal rays
$\rho_1,\ldots,\rho_r$ belonging to all facets $\cB_1,\ldots,\cB_p$, plus a game
in the lineality space. Therefore, any game whose projection falls on that face
has the form:
\[
v=\alpha u_{\{n\}}+\sum_{i=1}^p\gamma_i\lambda^{\cB_i}+\sum_{i=1}^r\alpha_i\rho_i + \sum_{i=1}^{n-1}\beta_iw_i
\]
for some $\gamma_1,\ldots,\gamma_p,\alpha_1,\ldots,\alpha_r\geqslant 0$ and some real numbers
$\beta_1,\ldots, \beta_{n-1}$.

Lastly, all games which are not of the above form for any face are projected on
the lineality space $\Lin(\cBG(n))$. Let us find the game $v^*$ orthogonal projection of $v$ in the previous conditions in the lineality space.
Remark that $w_i=u_{\{i\}} - u_{\{n\}}$, $i=1,\ldots,n-1$. It is
convenient to use the M\"obius representation $m^v$ for a game $v$\footnote{Recall that the M\"obius representation (a.k.a. Harsanyi dividends) is the
coordinates of a game in the basis of unanimity games $(u_S)_{S\subseteq
  N,S\neq\varnothing}$, with $u_S$ defined by $u_S(T)=1$ if $T\supseteq S$ and 0
otherwise. Formally, $v=\sum_{S\subseteq N,S\neq \varnothing}m^v(S)u_S$,
which explains the property $v(N)=\sum_{S\subseteq N,S\neq\varnothing}m^v(S)$.} In
$\cC_0(n)$, we have $v(N)=0$, therefore $\sum_{S\subseteq N}m^v(S)=0$ and we can
express one coordinate in terms of the others, for example
$m^v(\{n\})=-\sum_{S\subseteq N,S\neq\{n\}}m^v(S)$. Doing so, the basis
$w_1,\ldots,w_{n-1}$ becomes orthonormal in this system of coordinates (as $m^{w_i}(i)=1, m^{w_i}(S)=0, S\ne i$). We can
then conveniently express the orthogonal projection $v^*_0$ of a game $v$ on
$\Lin(\cC_0(n))$:
\[
m^{v^*_0} = \sum_{i=1}^{n-1}\langle m^v,m^{w_i}\rangle m^{w_i} = \sum_{i=1}^{n-1}m^v(\{i\})m^{w_i},
\]
which yields $m^{v_0^*}(\{i\})=m^v(\{i\})$ for $i=1,\ldots, n-1$, $m^{v_0^*}(S)=0$
for $|S|>1$, and $m^{v^*_0}(\{n\})=-\sum_{i=1}^{n-1}m^v(\{i\})$. Finally, the
projection on $\Lin(\cC_0(n))$ is an additive game determined by
\[
v^*_0(\{i\}) = \begin{cases}
  v(\{i\}), & \text{ if } i=1,\ldots,n-1\\
  -\sum_{j=1}^{n-1}v(\{j\}), & \text{ if } i=n.
  \end{cases}
\]
Now, the projection of $v$ on $\cBG_{\alpha }(n)$ is the additive game
$v^*=v^*_0+\alpha u_{\{n\}}$:
\[
v^*(\{i\}) = \begin{cases}
  v(\{i\}), & \text{ if } i=1,\ldots,n-1\\
  \alpha-\sum_{j=1}^{n-1}v(\{j\}), & \text{ if } i=n.
  \end{cases}
\]

The whole procedure can be applied for $n=3$, since the number of faces  is
equal to 20, and is detailed in
Appendix~\ref{app:n3}. For $n>3$, the procedure is no more feasible as the
number of faces grows too fast (see e.g. Table \ref{tablembmc} for the growth of facets when $n$ increases).

\subsection{The optimization way}\label{sec:opwa}
 The weights
$\gamma_S >0, S\subset N$, measure the importance of each coalition in terms of $v,$
in the sense that the greater $\gamma_S,$ the more interested we are in
obtaining a game $v^*$ satisfying $v^*(S)$ is close to $v(S).$  Note that
  the case with equal weights amounts to taking the orthogonal projection.
Formally, we aim to solve
\begin{equation}
\begin{array}{ll}
 \min & {\displaystyle \sum_{S \subseteq N} \gamma_S(v(S) - w(S))^2} \\
\text{s.t.} &  w \in \cBG_{v(N)}(n).
\end{array}
\label{eq:optimization_problem1}
\end{equation}
As $\cBG_{v(N)}(n)$ is a convex polyhedron, there exists a unique solution
$v^*$ for this problem, and the constraints are linear. Therefore,
(\ref{eq:optimization_problem1}) is a quadratic programming
problem. There are several procedures to solve such type of problems, however we
face a  large number of
constraints. Indeed, there is a constraint for each minimal balanced collection
$\cB \in \gB^*(n)$, and the number of minimal balanced collections
grows very quickly with $|N|$. Table \ref{tablembmc} shows the number of minimal
balanced collections for the first few values of $n$ \cite{lagrsu23}.\footnote{This sequence corresponds to sequence A355042 in the Online Encyclopedia of Integer Sequences (OEIS).}
\begin{table}[h]
\begin{center}
\begin{tabular}{|c|r|}
\hline $n$  & Number of minimal balanced collections \\
\hline 1    & 1 \\
       2    & 2 \\
       3    & 6 \\
       4    & 42 \\
       5    & 1,292 \\
       6    & 200,214 \\
       7    & 132,422,036 \\
\hline
\end{tabular}
\end{center}
\caption{The number of minimal balanced collections for the first values of $n$.}
\label{tablembmc}
\end{table}

Furthermore, these constraints depend on the minimal balanced collections, and obtaining all these collections is a very complex problem. An algorithm for generating minimal balanced collections was proposed in \cite{pel65} and recently extended in \cite{lagrsu23}, but to our knowledge no general characterization of m.b.c. suitable for our purposes has been derived.

Hence, in practice it is necessary to look for another way to obtain $v^*$,
reducing the number of constraints. Notice that these constraints are meant to
impose that $v^*$ is a balanced game. As a balanced game has a nonempty core,
there must exist some $x\in C(v^*),$ i.e., some $x\in\RR^N$ such that $x(S)\geqslant
v^*(S), x(N)=v^*(N).$ Consequently, Problem (\ref{eq:optimization_problem1}) can
be equivalently rewritten as
\begin{equation}
\begin{array}{ll}
\min & {\displaystyle f(w,x)=\sum_{S \subseteq N} \gamma_S (v(S) - w(S))^2,} \\
\text{s.t.} &  w(S) \leqslant x(S), \ \forall S \subset N, \\
&  w(N) = x(N) = v(N).
\end{array}
\label{eq:optimization_problem2}
\end{equation}
where the variables are $w(S)$, $S \subseteq N$, and $x\in\RR^n$. Hence, we have $2^n - 2 + n - 1=2^n+n-3$ variables and $2^n - 1$ constraints, far less than the number of minimal balanced collections.

Note that although there is unique optimal $v^*$, the allocation $x$ might not
be so,
as it does not appear in the objective function of (\ref{eq:optimization_problem2}). Therefore, any allocation in
$C(v^*)$ is valid to achieve the optimum and we already know  from
Theorems~\ref{th:pcbg} and \ref{th:pcbg1} that the core of
games in the border of $ \cBG_{\alpha
}(n)$ might not be a set with a single element. Consequently, the solution of
this problem might not be unique.

\begin{example}
We continue with Example \ref{Example1}. We solve Problem
(\ref{eq:optimization_problem2}) with $\gamma_{S} = 1, \forall S\subset N$ and obtain the game $v^*$ given in Table \ref{tab:2}.

\begin{table}[h]
\centering
\begin{tabular}{|c|cccccccccc|}
\hline $S$ & $1$ & $2$ & $3$ & $4$ & $12$ & $13$ & $14$ & $23$ & $24$ & $34$ \\ \hline
$v(S)$     & 32      & 25      & 27      & 16      & 46       & 46       & 38       & 49       & 26       & 54   \\
$v^*(S)$   & 22.26   & 25      & 27      & 15.21   & 46       & 46       & 37.47     & 49       & 26       & 45.10 \\
\hline $S$ &  $123$ & $124$ & $134$ & $234$ & $1234$ & & & & & \\ \hline
    $v(S)$ &         95 &        79 &        64 &        88 & 100 & & & & & \\
  $v^*(S)$ & 84.79      & 70.10     & 64     & 77.74     & 100  & & & & & \\
\hline
\end{tabular}
\caption{Game $v$ and its closest balanced game $v^*$.}
\label{tab:2}
\end{table}
  We can observe that the values of the closest balanced game $v^*$ are lower than those of the original
game. This is a general fact that will be proved in Lemma~\ref{lem:low}. The objective function ensures that
the values of $v(S)$ are diminished as few as possible to get a balanced game.

For $v^*$, there is just one imputation in $C(v^*)$ given by

$$x_1=22.26, x_2=32.64, x_3=29.89, x_4=15.21.$$
\end{example}

Let us now treat briefly about the computational complexity of Problem
(\ref{eq:optimization_problem2}). Standard algorithms for convex quadratic
programming, such as interior-point methods, have computational complexity
which is polynomial in the number of variables and constraints, denoted by $p.$ It
has been shown that there are algorithms with complexity $O(k p^3),$ where $k$
is a constant. However, since in this case $p=2^n+n-3 + 2^n-2,$ the complexity for
Problem (\ref{eq:optimization_problem2}) becomes exponential $O(8^n)$ on the
number of players. As the number of variables and constraints still grow
exponentially, we obtain a problem computationally intensive and potentially
infeasible to solve exactly for large values of $n.$ Note on the other hand that
having exponential complexity is common in Cooperative Game Theory, as the very nature of
the problem leads the number of coalitions growing exponentially. In Table
\ref{tab:execution_times} we show the average execution time (taken over 100
realizations) for solving this
problem for different values of $n$.

\begin{table}[h]
\centering
\begin{tabular}{|c|r|}
\hline
$n$ & Average Time (s) \\
\hline
3  &  0.0007291489 \\
4  &  0.0007600629 \\
5  &  0.0009162869 \\
6  &  0.0049808989 \\
7  &  0.0083431290 \\
8  &  0.0502450010 \\
9  &  0.5814768980 \\
10 & 2.5622723129 \\
11 & 51.0537700000 \\
12 & 178.8491900000 \\
\hline
\end{tabular}
\caption{Average execution time for computing the closest balanced game as a function of $n$.}
\label{tab:execution_times}
\end{table}

As we can see in Table \ref{tab:execution_times}, the time calculation of the closest balanced game using the previous algorithm grows exponentially in $n.$  In next subsection we will develop another way to deal with this problem.

\subsection{CLOBIS algorithm}\label{sec:clob}

First, let us explore some properties of $v^*$ and its core, considering the
general case $v\in\cG(n)$. Recall that $v^*(N)=v(N).$

\begin{lemma}\label{lem:low}
Let $v$ be a game and $v^*$ be its closest balanced game according to weights $\gamma_S, S\subset N.$ Then, $v^*(S)\leqslant v(S), \ \ \forall S\subset N.$
\end{lemma}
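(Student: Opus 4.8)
The plan is to argue by contradiction using the optimality of $v^*$, constructing a competing balanced game that is strictly closer to $v$ whenever $v^*$ overshoots on some coalition. Suppose the set $T := \{S\subset N : v^*(S) > v(S)\}$ is nonempty. The key observation is that lowering the value $v^*(S)$ on a single coalition $S\subsetneq N$ can only make the balancedness constraints (\ref{eq:bgalpha}) easier to satisfy: in each inequality $\sum_{S'\in\cB}\lambda^\cB_{S'}v^*(S') \leqslant v^*(N)$, the coefficient $\lambda^\cB_S$ of $v^*(S)$ is nonnegative (balancing weights are positive, or $0$ when $S\notin\cB$), and $v^*(N)$ is untouched since $N\notin T$ by efficiency $v^*(N)=v(N)$. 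Hence decreasing $v^*(S)$ preserves membership in $\cBG_{v(N)}(n)$.

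First I would pick the coalition $S_0\in T$ realizing the largest excess (or simply any $S_0\in T$), and define $\tilde v$ by $\tilde v(S_0) = \max(v(S_0), \text{current value})$—concretely, replace $v^*(S_0)$ by $v(S_0)$, and leave all other coordinates equal to $v^*$. Second, I would verify $\tilde v\in\cBG_{v(N)}(n)$: as noted above, every inequality in (\ref{eq:bgalpha}) is of the form $\sum_{S\in\cB}\lambda^\cB_S w(S)\leqslant v(N)$ with all $\lambda^\cB_S\geqslant 0$, so replacing $w(S_0)$ by the smaller value $v(S_0)<v^*(S_0)$ only decreases the left-hand side; the constraint $\tilde v(N)=v(N)$ holds by construction. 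Third, I would compare objective values: since $\gamma_{S_0}>0$ and
\[
\gamma_{S_0}(v(S_0)-\tilde v(S_0))^2 = 0 < \gamma_{S_0}(v(S_0)-v^*(S_0))^2,
\]
while all other terms are unchanged, we get $f(\tilde v) < f(v^*)$ in the notation of (\ref{eq:optimization_problem1}), contradicting the fact that $v^*$ is the (unique) minimizer. Therefore $T=\varnothing$, i.e., $v^*(S)\leqslant v(S)$ for all $S\subset N$.

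I do not expect a genuine obstacle here; the only point requiring a little care is the sign of the coefficients in the balancedness inequalities—one must use that balancing weights $\lambda^\cB_S$ are strictly positive (equivalently, extend by $0$ for $S\notin\cB$) so that lowering a single off-grand-coalition value cannot violate any constraint—and the fact that $v^*(N)=v(N)$ is fixed, so the coalition $N$ never belongs to the ``overshoot'' set $T$. One could alternatively phrase the same argument geometrically: the truncation map $w\mapsto (\min(w(S),v(S)))_{S\subsetneq N}$ (keeping $w(N)$) maps $\cBG_{v(N)}(n)$ into itself and does not increase the weighted distance to $v$, with strict decrease unless $w$ already lies below $v$ coordinatewise; applying it to $v^*$ and invoking uniqueness of the projection gives the claim.
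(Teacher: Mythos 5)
Your proof is correct and follows essentially the same strategy as the paper's: assume some coalition overshoots, replace $v^*$ on that coalition by $v(S_0)$, observe the modified game is still balanced with the same value on $N$, and derive a strict decrease of the objective, contradicting optimality. The only (minor) difference is in verifying balancedness of the modified game: the paper exhibits a core element $x\in C(v^*)$ and checks it remains in the core of the modified game, whereas you argue directly from the inequalities defining $\cBG_{v(N)}(n)$ (nonnegative balancing weights mean lowering one coordinate off $N$ cannot violate any constraint) — both verifications are valid.
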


\begin{proof}
Suppose there exists $A\subset N$ such that $v^*(A)> v(A).$ Let $x \in C(v^*).$ Hence, $x(A)\geqslant v^*(A).$ Consider the game $v'$ given by
$$ v'(B)=\begin{cases} v(A) & \text{ if } B=A \\ v^*(B) & \text{ otherwise.} \end{cases} $$
Then, $v'$ is a balanced game because $x \in C(v')$ and

$$ \sum_{B\subset N}\gamma_B (v'(B)- v(B))^2 < \sum_{B\subset N}\gamma_B (v^*(B)- v(B))^2,$$ a contradiction.
\end{proof}

Now, for $v^*$, we have the following properties on any allocation $x\in C(v^*).$

\begin{lemma}\label{closest2}
Let $v$ be a game and $v^*$ be its closest balanced game according to weights $\gamma_S, S\subset N.$ Consider $x\in C(v^*)$ and $S\subset N$ such that $x(S)\geqslant v(S).$ Then, $v^*(S)=v(S).$
\end{lemma}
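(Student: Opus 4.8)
The plan is to argue by contradiction in the same spirit as the proof of Lemma~\ref{lem:low}: if $v^*(S)$ were strictly below $v(S)$ while $x(S)\geqslant v(S)$, we could raise the value $v^*(S)$ up to $v(S)$, strictly decreasing the objective, and the resulting game would still be balanced because the very same allocation $x$ still lies in its core. Since $v^*$ is the unique minimizer of \eqref{eq:optimization_problem1}, this is a contradiction, forcing $v^*(S)=v(S)$ (note $v^*(S)\leqslant v(S)$ is already known from Lemma~\ref{lem:low}, so only the equality needs to be established).

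Concretely, suppose $v^*(S)<v(S)$ for some $S\subset N$ with $x(S)\geqslant v(S)$, where $x\in C(v^*)$. Define the game $v'$ by $v'(B)=v^*(B)$ for all $B\neq S$ and $v'(S)=v(S)$. First I would check that $v'\in\cBG_{v(N)}(n)$: since $v'(N)=v^*(N)=v(N)$ and $v'(B)=v^*(B)\leqslant x(B)$ for $B\neq S$, while $v'(S)=v(S)\leqslant x(S)$ by hypothesis, the allocation $x$ satisfies $x(B)\geqslant v'(B)$ for all $B$ and $x(N)=v^*(N)=v'(N)$, hence $x\in C(v')$ and by Theorem~\ref{th:bal} the game $v'$ is balanced. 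Second, I would compare objective values: the two games differ only at $S$, and there $\gamma_S(v(S)-v'(S))^2=0<\gamma_S(v(S)-v^*(S))^2$ because $v^*(S)<v(S)$ and $\gamma_S>0$; all other terms coincide. Thus $f$ evaluated at $v'$ is strictly smaller than at $v^*$, contradicting the optimality (and uniqueness) of $v^*$ for Problem~\eqref{eq:optimization_problem1}. Therefore $v^*(S)=v(S)$.

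I do not anticipate a genuine obstacle here; the argument is a routine "local perturbation" contradiction. The one point requiring a moment's care is the verification that modifying a single coordinate of $v^*$ keeps the game balanced — this is precisely where the hypothesis $x(S)\geqslant v(S)$ is used, and it is exactly what makes $x$ remain a core element of the perturbed game. One should also make sure $S\subset N$ is a proper subset so that changing $v^*(S)$ does not touch $v^*(N)$, which the statement guarantees. Everything else is an immediate inequality comparison.
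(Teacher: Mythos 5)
Your proposal is correct and follows essentially the same argument as the paper: perturb $v^*$ at the single coalition $S$ by setting its value to $v(S)$, observe that $x$ remains a core element of the perturbed game (so it is still balanced), and conclude from the strict decrease of the objective that $v^*$ was not optimal. The only cosmetic difference is that you invoke Lemma~\ref{lem:low} to reduce to the case $v^*(S)<v(S)$, whereas the paper works directly with $v^*(S)\neq v(S)$; the contradiction is identical either way.
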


\begin{proof}
Suppose there exists $A\subset N$ such that $x(A)\geqslant v(A)$ and $v^*(A)\ne v(A).$ Then, consider

$$ v'(B):=  \begin{cases} v(A) & \text{ if } B=A \\ v^*(B) & \text{ otherwise.} \end{cases} $$
Thus defined, as $x(A) \geqslant v(A) =v'(A),$ and $x(B) \geqslant v^*(B) = v'(B),$ it follows that $x\in C(v'),$ so that $v'\in \cBG_{v(N)}(n)$. On the other hand,

$$ \sum_{B\subset N} \gamma_B (v(B) - v'(B) )^2 < \sum_{B\subset N} \gamma_B
(v(B) - v^*(B) )^2,$$ contradicting that $v^*$ is the closest balanced game.
\end{proof}

\begin{lemma}\label{closest3}
Let $v$ be a game and $v^*$ be its closest balanced game according to weights $\gamma_S, S\subset N.$ Consider $x\in C(v^*)$ and $S\subset N$ such that $x(S)\leqslant v(S).$ Then, $v^*(S)=x(S).$
\end{lemma}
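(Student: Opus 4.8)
The plan is to argue by contradiction, modifying $v^*$ only on the coalition $S$, exactly in the style of the proofs of Lemmas~\ref{lem:low} and \ref{closest2}. Since $x\in C(v^*)$, the core inequality already gives $v^*(S)\leqslant x(S)$; consequently, if the conclusion $v^*(S)=x(S)$ fails, it can only be because $v^*(S)<x(S)$. Assuming this, I would define the game $v'$ by $v'(S)=x(S)$ and $v'(B)=v^*(B)$ for every $B\neq S$. Because $S\subset N$ is a proper subset, the grand coalition is untouched, so $v'(N)=v^*(N)=v(N)$.

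Next I would verify that $v'$ is balanced with $v'(N)=v(N)$, i.e.\ $v'\in\cBG_{v(N)}(n)$, by exhibiting an allocation in its core: the same $x$ works, since $x(B)\geqslant v^*(B)=v'(B)$ for all $B\neq S$, $x(S)=v'(S)$, and $x(N)=v'(N)$, so $x\in C(v')$ and Theorem~\ref{th:bal} applies. Then I would compare the objective values. By Lemma~\ref{lem:low} we have $v^*(S)\leqslant v(S)$, and the hypothesis gives $x(S)\leqslant v(S)$; together with $v^*(S)<x(S)$ this yields $v^*(S)<v'(S)=x(S)\leqslant v(S)$, hence $0\leqslant v(S)-v'(S)<v(S)-v^*(S)$ and therefore $\gamma_S(v(S)-v'(S))^2<\gamma_S(v(S)-v^*(S))^2$, using $\gamma_S>0$. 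Since $v'$ and $v^*$ coincide on all other coalitions, $\sum_{B\subset N}\gamma_B(v(B)-v'(B))^2<\sum_{B\subset N}\gamma_B(v(B)-v^*(B))^2$, contradicting that $v^*$ is the closest balanced game to $v$. Hence $v^*(S)=x(S)$.

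I do not expect a genuine obstacle here, as the argument parallels the two preceding lemmas. The only points requiring a moment's care are (i) noticing that the core inequality already forces $v^*(S)\leqslant x(S)$, so only the strict inequality needs to be ruled out, and (ii) checking that the perturbed value $x(S)$ does not overshoot $v(S)$, which is precisely where the hypothesis $x(S)\leqslant v(S)$ is used and what guarantees that the new game is strictly closer to $v$.
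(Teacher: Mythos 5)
Your proposal is correct and follows essentially the same route as the paper: assume $v^*(S)<x(S)$ (the core inequality handles the other direction), replace $v^*(S)$ by $x(S)$ to form $v'$, observe that $x\in C(v')$ so $v'$ is balanced, and derive a strict decrease of the weighted objective from $v^*(S)<x(S)\leqslant v(S)$, contradicting optimality. The paper's proof is just a terser version of the same argument; your added remarks on why only the strict inequality needs ruling out are a correct elaboration, not a different method.
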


\begin{proof}
Suppose there exists $A\subset N$ such that $v^*(A)<x(A)\leqslant v(A).$ Then, let us define

$$ v'(B):= \begin{cases} x(A) & \text{ if } B=A \\ v^*(B) & \text{ otherwise.} \end{cases}  $$
Thus defined, it follows that $x\in C(v')$. But

$$ \sum_{B\subset N} \gamma_B (v(B) - v'(B) )^2 < \sum_{B\subset N} \gamma_B (v(B) - v^*(B) )^2,$$ contradicting that $v^*$ is the closest balanced game.
\end{proof}

As a conclusion of these results, the following holds.

\begin{corollary}\label{corclosest}
Consider $v\in \cG(n)$ and let us denote by $v^*$ its closest balanced game with respect to some weights $\gamma_S, S\subset N$, satisfying $v(N)=v^*(N).$ Let $x\in C(v^*).$ Then, $v, v^*$ and $x$ are related by

\begin{equation}\label{closest-game}
v^*(S)=\min \{ x(S), v(S)\} \, \, \forall S\subset N, \quad  v^*(N)=v(N).
\end{equation}
\end{corollary}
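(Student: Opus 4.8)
The plan is to obtain the identity pointwise in $S \subset N$ by a simple dichotomy, using Lemmas~\ref{closest2} and~\ref{closest3} as the two halves of the argument. The equality $v^*(N) = v(N)$ is part of the hypotheses, so only the claim for $S \subset N$ needs work. Fix such an $S$ and note that, since $x \in C(v^*)$, we always have $x(S) \geqslant v^*(S)$; the whole point will be to pin down which of $x(S)$ and $v(S)$ is attained.

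First I would treat the case $x(S) \geqslant v(S)$. Here Lemma~\ref{closest2} applies verbatim and gives $v^*(S) = v(S)$; combined with the case hypothesis $v(S) \leqslant x(S)$, this yields $v^*(S) = v(S) = \min\{x(S), v(S)\}$. Next I would treat the complementary case $x(S) \leqslant v(S)$. Now Lemma~\ref{closest3} applies and gives $v^*(S) = x(S)$, and since $x(S) \leqslant v(S)$ we again get $v^*(S) = x(S) = \min\{x(S), v(S)\}$. The two cases cover all possibilities, so (\ref{closest-game}) holds for every $S \subset N$.

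There is essentially no obstacle here, since all the substance has already been placed in the three preceding lemmas; the only point requiring a little care is the overlap of the two cases at $x(S) = v(S)$, but there both lemmas give the mutually consistent conclusion $v^*(S) = x(S) = v(S)$, so the dichotomy is harmless. As a sanity check one may also observe that Lemma~\ref{lem:low} ($v^*(S) \leqslant v(S)$) is consistent with this derivation, although it is not actually needed for it.
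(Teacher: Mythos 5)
Your proof is correct and matches the paper's intent exactly: the paper states this corollary as an immediate consequence of Lemmas~\ref{closest2} and~\ref{closest3} without writing out the dichotomy, and your case split on $x(S)\geqslant v(S)$ versus $x(S)\leqslant v(S)$ is precisely the implicit argument. The remark about consistency at the overlap $x(S)=v(S)$ is a nice touch but, as you note, unproblematic.
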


 This result highlights the differences between the core of $v^*$ and the least core. Rather than reducing each coalition value $v(S)$ by a fixed $\epsilon$ to ensure a non-empty core, here we adjust each $v(S)$ by different amounts to obtain a nonempty core.

Corollary \ref{corclosest} will be the base of the algorithm that we will develop below as it establishes that $v^*$ can be derived from any imputation $x\in C(v^*).$ The idea is to look for an allocation $x \in C(v^*)$ instead of looking for $v^*$ itself, and apply Eq. (\ref{closest-game}).

\begin{proposition}\label{prop:equiv}
  Consider the following problem in variable $x\in\RR^n$:
\begin{equation}\label{eq:alg_general}
 \begin{array}{cc} \min & {\displaystyle g(x)=\sum_{S \subset N} \gamma_S (v(S)-x(S))^2 {\bm 1}_{x(S)\leqslant v(S)}} \\ s. t. & x(N)=v(N) \end{array}
\end{equation}
where ${\bm 1}$ denotes the indicator function, taking value 1 if $x(S) \leqslant v(S)$ and
0 otherwise. Then, problems (\ref{eq:optimization_problem2}) and
(\ref{eq:alg_general}) are equivalent in the following sense: If $x^*$ is an
optimal solution of (\ref{eq:alg_general}), then $(v^*,x^*)$ is an optimal
solution of (\ref{eq:optimization_problem2}) with $v^*$ given by
(\ref{closest-game}); If $(v^*,x^*)$ is an optimal solution of
(\ref{eq:optimization_problem2}), then $x^*$ is an optimal solution of (\ref{eq:alg_general}).
\end{proposition}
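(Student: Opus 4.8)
The plan is to use the fact that in problem~(\ref{eq:optimization_problem2}) the game variable $w$ enters through a \emph{separable} objective and \emph{separable} constraints, so that for every fixed feasible $x$ the minimization over $w$ can be carried out in closed form; this collapses~(\ref{eq:optimization_problem2}) to~(\ref{eq:alg_general}). First I would check that the inner minimization is well posed: for any $x\in\RR^n$ with $x(N)=v(N)$, the game $w$ with $w(S)=\min\{x(S),v(S)\}$ for $S\subset N$ and $w(N)=v(N)$ makes $(w,x)$ feasible for~(\ref{eq:optimization_problem2}), so $\min_w f(w,x)$ is defined over a nonempty set, and moreover the projection of the feasible region of~(\ref{eq:optimization_problem2}) onto the $x$-coordinates equals $\{x\in\RR^n\mymid x(N)=v(N)\}$, which is exactly the feasible region of~(\ref{eq:alg_general}).

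Second I would perform the partial minimization over $w$ for a fixed feasible $x$. Since $f(w,x)=\sum_{S\subseteq N}\gamma_S(v(S)-w(S))^2$ is a sum of terms, each depending on a single coordinate $w(S)$, and the constraints ($w(S)\leqslant x(S)$ for $S\subset N$, and $w(N)=v(N)$) are themselves coordinate-wise, the minimum over $w$ splits coordinate by coordinate. The term $S=N$ contributes $0$. For each $S\subset N$ one minimizes the strictly convex map $t\mapsto\gamma_S(v(S)-t)^2$ over the half-line $t\leqslant x(S)$: the unconstrained minimizer is $t=v(S)$, so the constrained minimizer is $v(S)$ (value $0$) when $v(S)\leqslant x(S)$, and $x(S)$ (value $\gamma_S(v(S)-x(S))^2$) when $v(S)>x(S)$. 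In either case the minimizer is $\min\{x(S),v(S)\}$, it is \emph{unique} by strict convexity, and the attained value is $\gamma_S(v(S)-x(S))^2{\bm 1}_{x(S)\leqslant v(S)}$, the borderline case $x(S)=v(S)$ being irrelevant since that squared term vanishes. Summing over $S$, the unique $w$-minimizer is the game $w^x$ defined by~(\ref{closest-game}), and $\min_w f(w,x)=g(x)$.

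Third I would assemble the equivalence. From $\min_{(w,x)}f(w,x)=\min_x\big(\min_w f(w,x)\big)=\min_x g(x)$ the two problems have the same optimal value $\mu$. If $x^*$ solves~(\ref{eq:alg_general}), put $v^*=w^{x^*}$ from~(\ref{closest-game}); then $(v^*,x^*)$ is feasible for~(\ref{eq:optimization_problem2}) and $f(v^*,x^*)=g(x^*)=\mu$, so $(v^*,x^*)$ is optimal. Conversely, if $(v^*,x^*)$ solves~(\ref{eq:optimization_problem2}), then $\mu=f(v^*,x^*)\geqslant\min_w f(w,x^*)=g(x^*)\geqslant\min_x g(x)=\mu$, forcing equality throughout: $x^*$ solves~(\ref{eq:alg_general}), and by uniqueness of the $w$-minimizer necessarily $v^*=w^{x^*}$, consistent with Corollary~\ref{corclosest}.

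I do not anticipate a real obstacle: this is a routine partial-minimization (Benders-type) decomposition. The only points that need care are (i) verifying that the inner minimization is over a nonempty set and that the $x$-feasible regions of the two problems genuinely coincide, and (ii) the bookkeeping at the boundary $x(S)=v(S)$, where one must note that the value of ${\bm 1}_{x(S)\leqslant v(S)}$ does not matter because the associated squared term is then zero.
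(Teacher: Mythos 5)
Your proof is correct, and it takes a genuinely different route from the paper's. The paper argues directly at the optimum: it constructs $v^*$ from $x^*$ via~(\ref{closest-game}), verifies $f(v^*,x^*)=g(x^*)$, and then runs two contradiction arguments, each of which leans on Corollary~\ref{corclosest} (itself obtained from the exchange arguments in Lemmas~\ref{lem:low}--\ref{closest3}) to assert that any optimal pair $(v',x')$ must satisfy $v'(S)=\min\{x'(S),v(S)\}$. You instead perform an explicit partial minimization over $w$ at \emph{every} feasible $x$, exploiting the coordinate-wise separability of both the objective and the constraints $w(S)\leqslant x(S)$, to get the identity $\min_w f(w,x)=g(x)$ together with uniqueness of the inner minimizer $w(S)=\min\{x(S),v(S)\}$ (this uses $\gamma_S>0$, which the paper assumes). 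The equivalence then falls out of $\min_{(w,x)}f=\min_x\min_w f$ with no appeal to Corollary~\ref{corclosest}; in effect you re-derive the content of that corollary pointwise in $x$ rather than only at the optimum, which is a stronger and more self-contained statement. What your approach buys is a shorter, contradiction-free argument that also explains \emph{why} the indicator function appears in $g$ (it is the value function of the inner problem) and handles the boundary case $x(S)=v(S)$ transparently; what the paper's approach buys is reuse of lemmas it needs anyway elsewhere (e.g., in Corollary~\ref{cor:optim} and the final lemma of Section~\ref{sec:clob}). Your care about the nonemptiness of the inner feasible set and the coincidence of the $x$-projections of the two feasible regions is exactly the right bookkeeping.
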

\begin{proof}
Consider $x^*$ an optimal solution of (\ref{eq:alg_general}). By
(\ref{closest-game}), we associate $v^*$ to $x^*$ and  $(v^*,x^*)$ satisfies the constraints in
(\ref{eq:optimization_problem2}). By (\ref{closest-game}) we have:
\[
f(v^*,x^*)=\sum_S\gamma_S(v(S)-v^*(S))^2=\sum_{S:x^*(S)\leqslant v(S)}\gamma_S(v(S)-v^*(S))^2=g(x^*).
\]
Suppose $(v^*,x^*)$ is not optimal for (\ref{eq:optimization_problem2}), but
$(v',x')$ is. Then by Corollary~\ref{corclosest}, $v'$ satisfies
(\ref{closest-game}), so that we get
\[
g(x')=\sum_{S:x'(S)\leqslant v(S)}\gamma_S(v(S)-v'(S)))^2=\sum_S\gamma_S(v(S)-v'(S))^2=f(v',x')<f(v^*,x^*)=g(x^*),
\]
a contradiction.

Conversely, let $(v^*,x^*)$ be an optimal solution of
(\ref{eq:optimization_problem2}). Then $x^*$ is a solution of
(\ref{eq:alg_general}). Suppose $x^*$ is not optimal and let $x'$
be an optimal solution of (\ref{eq:alg_general}). Let us associate
$v'$ to $x'$ via (\ref{closest-game}). Then $(v',x')$ is a solution of
(\ref{eq:optimization_problem2}) and
\[
f(v',x')=\sum_S\gamma_S(v(S)-v'(S))^2=\sum_{S:x'(S)\leqslant v(S)}\gamma_S(v(S)-v'(S))^2=g(x')<g(x^*).
\]
Now, by Corollary~\ref{corclosest}, we have
\[
f(v^*,x^*)=\sum_S\gamma_S(v(S)-v^*(S))^2=\sum_{S:x^*(S)\leqslant v(S)}(v(S)-x^*(S))^2=g(x^*),
\]
a contradiction.
\end{proof}

In this way, we have managed to go from a problem with $2^n+n-3$ variables and $2^n-1$ constraints (Problem (\ref{eq:optimization_problem2})) to a problem with $n$ variables and just one constraint (Problem (\ref{eq:alg_general})).

Note however that in this problem, the sum in the objective function depends on
the indicator function and this indicator depends on $x$ (unknown). Hence, we
have to compare all subfamilies of subsets, thus leading to solve $2^{2^n-2}$
different problems and compare the corresponding solutions, that is not
affordable for large $n.$ Below, we propose an iterative algorithm for dealing
with Problem (\ref{eq:alg_general}) without necessarily solving $2^{2^n-2}$
different problems. We have called this algorithm the CLOsest Balanced game
Iterative Systems algorithm (CLOBIS algorithm)\footnote{The CLOBIS algorithm has been implemented in an R package called CloBalGame, available on GitHub \cite{CloBalGame}.}.  We explain it in the
  subsequent paragraphs.

\begin{algorithm}[h]
\begin{algorithmic}
\caption{CLOBIS algorithm for finding $v^*$}\label{alg: iterative}
\State \textbf{Input:} The game $v,$ and the weights $\gamma_S.$
\State \textbf{Output:} A core imputation $x^*$ of the closest balanced game
$v^*.$
\State  \textbf{Set} $\cA_0=2^N\setminus\{N\}$.

\State \textbf{Step $k\geqslant 0$:} Search for an optimal solution $x^{k+1}$ of

$$ \begin{array}{cc} \min & \phi_k(x)={\displaystyle \sum_{S \in \cA_k}} \gamma_S (x(S)- v(S))^2, \\ s. t. & x(N)=v(N). \end{array} $$

Define

$$\cA_{k+1} = \{ S \in 2^N\setminus\{N\}: x^{k+1}(S)\leqslant v(S)\} .$$

Continue until the {\bf stopping criterion} is met.

\State \textbf{Stopping criterion:} $\cA_{k}= \cA_{k+1}.$
\end{algorithmic}
\label{algo}
 \end{algorithm}

 Solving the problem that appears in each iteration $k$ of the previous algorithm can be done based on the Lagrangian of the problem. The Lagrangian function at $k$-th iteration is given by:

\[ L(x,\lambda) = \sum_{S \in \cA_k} \gamma_S \left( x(S) - v(S)
\right)^2 - \lambda \left( x(N) - v(N) \right).\]
 with $\cA_k=\{S:{ x^{k}}(S)\leqslant v(S)\}$, and ${ x^k}$ the optimal solution at
  step $k-1$.

For each \(i \in N\), we differentiate \(L(x,\lambda)\) with respect to \(x_i\):
\[
\frac{\partial L}{\partial x_i} = 2 \sum_{S \in \cA_k \,:\, i \in S} \gamma_S \left( x(S) - v(S) \right) - \lambda = 0.
\]
In addition, the primal feasibility condition requires:

\[ x(N) = v(N).\]
 From the first system of equations we deduce that $\sum_{S\in\cA_k:i\in
    S}\gamma_S(x(S)-v(S))$ does not depend on $i$, which can be written as
  \[
  \sum_{S\in\cA_k:1\in S}\gamma_S(x(S)-v(S))=\sum_{S\in\cA_k:i\in
    S}\gamma_S(x(S)-v(S)), \ \ i=2,\ldots,n
  \]
In summary, the whole set of equalities yields
\begin{equation}\label{eq:alg_lineal_sis}
\sum_{\stackrel{1 \in S}{S\in \cA_{k}}} \gamma_S x(S) -  \sum_{\stackrel{i \in S}{S\in \cA_{k}}} \gamma_S x(S)= \sum_{\stackrel{1 \in S}{S\in \cA_{k}}} \gamma_S v(S) -  \sum_{\stackrel{i \in S}{S\in \cA_{k}}} \gamma_S v(S) ,\forall i\in \{2, \ldots, n \}, \quad  x(N)=v(N).
\end{equation}

Hence, we obtain a linear system with $n$ equations and $n$ unknowns that can be solved efficiently.

\begin{example}
Consider the game $v$ given by

\begin{center}
\begin{tabular}{|c|ccccccc|}
\hline $S$ & 1 & 2 & 3 & 12 & 13 & 23 & 123 \\
\hline $v(S)$    & 37 & 7 & 92 & 35 & 64 & 19 & 77 \\
\hline
\end{tabular}
\end{center}

Let us apply the previous procedure in this case:

{\bf Step 0:} We consider all the subsets. Hence our equations are:

Comparing 1 and 2: $(v(1)-x_1) + (v(13)- x_1-x_3) = (v(2)-x_2) + (v(23)- x_2-x_3).$
Thus, we obtain the equation
$$ -2x_1 + 2x_2=-75.$$
Similarly, comparing 1 and 3 we obtain the equation
$$ -2x_1 + 2x_3=-39.$$
Hence, our system is given by

$$ \begin{array}{ccccccc}
-2x_1 & +&2x_2 &    &     & = & -75\\
-2x_1 &  &      & +& 2x_3 & = & -39 \\
x_1   & +& x_2 & +&x_3   & = & 77
\end{array}$$
The solution of this system is
$$ x_1=31.667, x_2=-5.833, x_3=51.167.$$
Then,

\begin{center}
\begin{tabular}{|c|ccccccc|}
\hline $S$ & 1 & 2 & 3 & 12 & 13 & 23 & 123 \\
\hline $v(S)$    & 37 & 7 & 92 & 35 & 63 & 19 & 77 \\
$x(S)$    & 31.667 & -5.833 & 51.167 & 25.833 & 82.833 & 45.333 & 77 \\
\hline
\end{tabular}
\end{center}

Hence, $\cA_1 = \{ 1,2,3,12\} .$

{\bf Step 1:} Now, our equations are

Comparing 1 and 2: $v(1)-x_1 = v(2)-x_2.$
Thus, we obtain the equation
$$ -x_1 + x_2=-30.$$
Comparing 1 and 3: $(v(1)-x_1) + (v(12)- x_1-x_2) = (v(3)-x_3).$
Thus, we obtain the equation
$$ -2x_1 - x_2 + x_3=20.$$
Hence, our system is given by
$$ \begin{array}{ccccccc}
-x_1 & +&x_2 &   &      & = & -30\\
-2x_1 & -&x_2   & +& x_3 & = & 20 \\
x_1   & +& x_2 & +&x_3   & = & 77
\end{array}$$
The solution of this system is

$$ x_1=23.4, x_2=-6.6, x_3=60.2.$$
Then,

\begin{center}
\begin{tabular}{|c|ccccccc|}
\hline $S$ & 1 & 2 & 3 & 12 & 13 & 23 & 123 \\
\hline $v(S)$    & 37 & 7 & 92 & 35 & 63 & 19 & 77 \\
$x(S)$    & 23.4 & -6.6 & 60.2 & 16.8 & 83.6 & 53.6 & 77 \\
\hline
\end{tabular}
\end{center}

Hence, $\cA_2 = \{ 1,2,3,12\} ,$ and as $\cA_1= \cA_2$ the procedure stops.
Consequently, applying Corollary \ref{corclosest}, $v^*$ is given by

\begin{center}
\begin{tabular}{|c|ccccccc|}
\hline $S$ & 1 & 2 & 3 & 12 & 13 & 23 & 123 \\
\hline $v(S)$    & 37 & 7 & 92 & 35 & 64 & 19 & 77 \\
 $x(S)$    & 23.4 & -6.6 & 60.2 & 16.8 & 83.6 & 53.6 & 77 \\
 $v^*(S) $ & 23.4 & -6.6 & 60.2 & 16.8 & 64 & 19 & 77 \\
\hline
\end{tabular}
\end{center}
Note that, the same as for Example \ref{Example1}, $x$ is the only imputation in $C(v^*).$
\end{example}

\begin{remark}
It is important to note that the objective function at $k$-th iteration $\phi_k(x)$ is convex (being a sum of squared terms), but it is not necessarily strictly convex. This can occur when the Hessian is only positive semidefinite, which means there are directions along which the function is flat. Consequently, it may happen that several $x$ minimize $\phi_k(x).$ To select a unique solution among the multiple optimal ones, an additional criterion may be imposed, such as choosing the solution with minimal norm or one that satisfies another desired property. In our case, we have considered the Moore-Penrose pseudoinverse. This pseudoinverse, denoted by $ A^+ ,$ is defined using the singular value decomposition (SVD). If

\[ A = U \Sigma V^T,\]
where $U \in \mathbb{R}^{m \times m} $ and $V \in \mathbb{R}^{n \times n} $ are orthogonal matrices, and $\Sigma \in \mathbb{R}^{m \times n}$ is a diagonal matrix with the singular values $ \sigma_1, \dots, \sigma_r $ (and zeros elsewhere), then we define

\[ A^+ := V \Sigma^+ U^T,\]
where $\Sigma^+$ is the diagonal matrix obtained by replacing each nonzero singular value $\sigma_i$ with its reciprocal $1/\sigma_i.$

If we denote the associated system at iteration $k$ in CLOBIS algorithm by $M_{k-1} x_{k}=b_k$,
we will solve it using the Moore-Penrose pseudoinverse:
$$x_{k}=M_{k-1}^{+}b_k.$$
Proceeding this way, we always obtain a unique solution to the problem associated to iteration $k$.
\end{remark}

Let us show that when this algorithm terminates, then it yields the closest balanced game $v^*.$

\begin{theorem}\label{convergence}
Let $\cA^*$ be a family of subsets of $N$ such that  an optimal solution $x^*$ to the problem
\[ \min_x \sum_{S \in \cA^*} \gamma_S \left(v(S) - x(S)\right)^2, \quad \text{subject to} \quad x(N) = v(N), \]
satisfies the conditions
\[ x^*(S) \leqslant v(S), \quad \forall S \in \cA^*, \quad \text{and} \quad x^*(S) > v(S), \quad \forall S \notin \cA^*. \]
Then, $x^*$ is an optimal solution to Problem~(\ref{eq:alg_general}).
\end{theorem}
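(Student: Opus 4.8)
The plan is to show that the hypotheses on $\cA^*$ exactly identify the set $\{S : x^*(S)\leqslant v(S)\}$ for the candidate point $x^*$, and then to argue that $x^*$ is in fact a global minimizer of the (nonconvex, indicator-weighted) objective $g$ in Problem~(\ref{eq:alg_general}). First I would observe that by hypothesis $\cA^* = \{S\subset N : x^*(S)\leqslant v(S)\}$, so that $g(x^*) = \sum_{S\in\cA^*}\gamma_S(v(S)-x^*(S))^2 = \phi_{\cA^*}(x^*)$, where $\phi_{\cA^*}$ denotes the objective of the restricted quadratic program associated with $\cA^*$. Since $x^*$ is optimal for that quadratic program, $\phi_{\cA^*}(x^*)\leqslant \phi_{\cA^*}(y)$ for every feasible $y$ with $y(N)=v(N)$.

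The heart of the argument is the comparison with an arbitrary competitor $y$ feasible for (\ref{eq:alg_general}). Let $\cA_y = \{S\subset N : y(S)\leqslant v(S)\}$, so $g(y) = \sum_{S\in\cA_y}\gamma_S(v(S)-y(S))^2$. The key inequality is
\[
g(y) \;=\; \sum_{S\in\cA_y}\gamma_S(v(S)-y(S))^2 \;\geqslant\; \sum_{S\in\cA_y\cap\cA^*}\gamma_S(v(S)-y(S))^2,
\]
simply by dropping nonnegative terms. This is not yet enough; the cleaner route is to bound $g(y)$ from below by $\phi_{\cA^*}(y)$. For $S\in\cA^*\setminus\cA_y$ we have $y(S)>v(S)$, hence $(v(S)-y(S))^2$ is a term present in $\phi_{\cA^*}(y)$ but \emph{absent} from $g(y)$, so one cannot directly write $g(y)\geqslant\phi_{\cA^*}(y)$. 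Instead I would split
\[
g(y) \;=\; \underbrace{\sum_{S\in\cA^*}\gamma_S(v(S)-y(S))^2{\bm 1}_{y(S)\leqslant v(S)}}_{\leqslant\ \phi_{\cA^*}(y)}
\;+\; \sum_{S\notin\cA^*}\gamma_S(v(S)-y(S))^2{\bm 1}_{y(S)\leqslant v(S)},
\]
and handle the second sum separately: for $S\notin\cA^*$ we have $x^*(S)>v(S)$, and I would show that replacing the quadratic program's solution cannot be beaten by activating such an $S$. Concretely, if $S\notin\cA^*$ but $y(S)\leqslant v(S)$, then $y\neq x^*$; the right comparison is against $g(x^*)=\phi_{\cA^*}(x^*)$, using that adding the extra term $\gamma_S(v(S)-y(S))^2$ to the objective of a quadratic program whose unconstrained-in-that-direction optimum already satisfies $x^*(S)>v(S)$ strictly increases the attained value, by strong convexity of each added square in the relevant direction together with optimality of $x^*$ for $\phi_{\cA^*}$.

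The cleanest packaging, which I expect to use, is: for any feasible $y$,
\[
g(y)\;\geqslant\;\sum_{S\in\cA^*}\gamma_S(v(S)-y(S))^2\;-\;\sum_{S\in\cA^*,\,y(S)>v(S)}\gamma_S(v(S)-y(S))^2\;+\;(\text{terms from }S\notin\cA^*),
\]
but rather than juggle signs, I would invoke Proposition~\ref{prop:equiv}: a point $y$ feasible for (\ref{eq:alg_general}) gives, via (\ref{closest-game}), a balanced game $v^y$ with $v^y(S)=\min\{y(S),v(S)\}$ and $g(y)=\sum_S\gamma_S(v(S)-v^y(S))^2 = f(v^y,y)$. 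Thus (\ref{eq:alg_general}) is the problem of minimizing $f(w,x)$ over $(w,x)$ with $w=\min\{x,v\}$. Now associate to $\cA^*$ and $x^*$ the game $w^*$ defined by $w^*(S)=v(S)$ for $S\in\cA^*$ (and $S=N$) and $w^*(S)=x^*(S)$ for $S\notin\cA^*$; the hypotheses on $\cA^*$ say precisely $w^*(S)=\min\{x^*(S),v(S)\}$ for all $S\subset N$, and $x^*\in C(w^*)$ since $x^*(S)\geqslant v(S)=w^*(S)$ on $\cA^*$ and $x^*(S)>v(S)\geqslant x^*(S)$... — more carefully, $x^*(S)\geqslant w^*(S)$ holds in both cases. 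Hence $(w^*,x^*)$ is feasible for (\ref{eq:optimization_problem2}). Finally I would show $(w^*,x^*)$ is optimal for (\ref{eq:optimization_problem2}): since $f(w,x)$ depends only on $w$, and over all balanced $w$ the minimum of $\sum_S\gamma_S(v(S)-w(S))^2$ is attained at a unique $v^*$ characterized by Corollary~\ref{corclosest}, it suffices to check $w^*$ satisfies the first-order optimality conditions of the quadratic program (\ref{eq:optimization_problem2}); but these reduce, after eliminating $w$ via $w=\min\{x,v\}$ on the active set $\cA^*$, exactly to the stationarity equations (\ref{eq:alg_lineal_sis}) that $x^*$ solves as the optimizer of $\phi_{\cA^*}$. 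Then Proposition~\ref{prop:equiv} returns that $x^*$ is optimal for (\ref{eq:alg_general}), which is the claim.

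The main obstacle is the nonconvexity of $g$: the indicator ${\bm 1}_{x(S)\leqslant v(S)}$ makes a direct convexity argument impossible, so the proof must route through the equivalence with the convex problem (\ref{eq:optimization_problem2}) of Proposition~\ref{prop:equiv}, and the delicate point is verifying that the game $w^*$ built from the fixed point $\cA^*$ genuinely coincides with the global optimum $v^*$ of (\ref{eq:optimization_problem2}) — i.e., that a "self-consistent" active set $\cA^*$ cannot be a spurious local solution. I expect this to follow because the stationarity conditions of the convex program (\ref{eq:optimization_problem2}), restricted through the relation $w=\min\{x,v\}$, are necessary and sufficient, and the self-consistency hypothesis on $\cA^*$ is exactly what makes $x^*$'s stationarity for $\phi_{\cA^*}$ translate into stationarity for the full convex problem; uniqueness of $v^*$ then forces $w^*=v^*$.
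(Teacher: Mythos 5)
Your route is genuinely different from the paper's, and its core idea is viable, but it rests on a false premise and leaves its decisive step unverified. You declare that ``the main obstacle is the nonconvexity of $g$'' and that the indicator makes a direct convexity argument impossible. This is wrong, and it is exactly the point the paper's proof turns on: since
\[
\left(v(S)-x(S)\right)^2\,\bm{1}_{x(S)\leqslant v(S)}=\bigl[\max\{v(S)-x(S),0\}\bigr]^2,
\]
each summand is the square of a nonnegative convex function, so $g$ \emph{is} convex. The paper then partitions the coalitions into three classes according to whether $x^*(S)<v(S)$, $x^*(S)=v(S)$, or $x^*(S)>v(S)$, shows that on a small neighborhood of $x^*$ the third class contributes nothing, the second contributes $0$ at $x^*$ and something nonnegative nearby, and the first is minimized at $x^*$ because the stationarity equations (\ref{eq:alg_lineal_sis}) for $\cA^*$ and for the strict-inequality subclass coincide (the equality terms cancel); hence $x^*$ is a local minimum, and convexity upgrades this to a global one. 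No detour through Problem~(\ref{eq:optimization_problem2}) is needed.

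Your detour can nevertheless be made to work: set $w^*=\min\{x^*,v\}$, check feasibility of $(w^*,x^*)$ for (\ref{eq:optimization_problem2}), verify the KKT conditions of that convex quadratic program, and invoke Proposition~\ref{prop:equiv}. The multipliers $\mu_S=2\gamma_S(v(S)-w^*(S))$ are nonnegative because $w^*(S)\leqslant v(S)$ by construction, complementary slackness holds because $\mu_S=0$ precisely when $S\notin\cA^*$ (where the constraint $w^*(S)\leqslant x^*(S)$ is slack), and stationarity in $x$ reduces to (\ref{eq:alg_lineal_sis}); since the program is convex with affine constraints, KKT is sufficient for global optimality. But in your write-up this entire verification is compressed into one asserted clause, and it is the whole content of the theorem --- without it you have established only feasibility. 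The first half of your proposal (the attempted direct lower bound of $g(y)$ by $\phi_{\cA^*}(y)$) is abandoned midstream, as you yourself note it cannot be closed, and should simply be deleted.
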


\begin{proof}
The first step is to prove the convexity of the function
\[ g(x) = \sum_{S \subseteq N} \gamma_S \left( v(S) - x(S) \right)^2 {\bm 1}_{x(S)\leqslant v(S)} .\]
For this, observe that for $S\subseteq N,$
\[ \left( v(S) - x(S) \right)^2 {\bm 1}_{x(S) \leqslant v(S) }= \left[ \max\left\{ v(S) - x(S), 0 \right\} \right]^2. \]
Thus, we can express $g(x)$ as:
\[ g(x) = \sum_{S \subseteq N} \gamma_S \left[ \max\left\{ v(S) - x(S), 0 \right\} \right]^2. \]
To prove convexity, let us observe that $v(S)-x(S)=:h(x)$ is an affine function,
  hence in particular convex. Now, by composition of convex functions,
  $\max(h(x),0)$ is also convex, and also nonnegative.  Next, the square function $u \mapsto u^2$ is convex and non-decreasing on $[0,\infty)$, preserving convexity when composed with a nonnegative convex function. Finally, $g(x)$ is convex since each term is convex in $x$, and the sum of convex functions with nonnegative weights $\gamma_S$ is convex.

We now show that if the family of subsets $\cA^*$ and $x^*$ satisfy the
properties described above, then $x^*$ is a local minimum of $g.$ For this, consider a partition for all subsets $S\subseteq N$ into three classes: $\cA^*_1 = \{\,S\in\cA^* : x^*(S) < v(S)\}$, $\cA^*_2 = \{\,S\in\cA^* : x^*(S) = v(S)\}$, and $\cA^*_3 = \{\,S\notin\cA^* : x^*(S) > v(S)\}.$ Since the inequalities in $\cA^*_1$ and $\cA^*_3$ are strict at $x^*$, there exists a neighborhood $\mathcal{N}$ of $x^*$ such that for every $x'\in\mathcal{N}$: $x'(S) < v(S),\ \forall S\in\cA^*_1,$ and $x'(S) > v(S),\ \forall S\in\cA^*_3.$
Define $g_i(x) \;=\;\sum_{S\in \cA^*_i} \gamma_S (v(S)-x(S))^2\,\bm{1}_{x(S)\le v(S)} \quad (i=1,2,3),$
so that $g(x)=g_1(x)+g_2(x)+g_3(x)$.

On $\cA^*_3$, since $x'(S)>v(S)$ and $x^*(S)>v(S)$, both indicator functions vanish, hence $g_3(x') \;=\; g_3(x^*) \;=\;0.$ On $\cA^*_2$, we have $g_2(x^*)=0$ (since $x^*(S)=v(S)$), while for any $x'$,
  \[
    g_2(x') \;=\;\sum_{S\in\cA^*_2}\gamma_S (v(S)-x'(S))^2\,\bm{1}_{x'(S)\le v(S)}
    \;\ge\; 0.
  \]

On $\cA^*_1$, we get

\[
    g_1(x^*) \;=\;\sum_{S\in\cA^*_1}\gamma_S(v(S)-x^*(S))^2, \ \ g_1(x') \;=\;\sum_{S\in\cA^*_1}\gamma_S(v(S)-x'(S))^2.
  \]

By hypothesis, $x^*$ is an optimal solution to the problem associated to the family of subsets $\cA^*=\cA^*_1 \cup \cA^*_2.$ Since $x^*(S)=v(S)$ for all $S \in \cA^*_2,$ let us see that $x^*$ is also an optimal solution to the problem associated to the family of subsets $\cA^*_1$ and therefore $g_1(x^*)\leq g_1(x').$ Indeed, $x^*$ satisfies both the system of equations (see Eq. (\ref{eq:alg_lineal_sis})) that determine the optimal solution for $\cA^*$ and the one corresponding to $\cA^*_1.$ That is, $x^*(N)=v(N),$ and for $i=2,\ldots,n$:

  \[
  \sum_{S\in\cA^*:1\in S}\gamma_S(x^*(S)-v(S))=\sum_{S\in\cA^*:i\in
    S}\gamma_S(x^*(S)-v(S)) \Leftrightarrow
  \]

  \[
  \sum_{S\in\cA^*_1:1\in S}\gamma_S(x^*(S)-v(S))+ \sum_{S\in\cA^*_2:1\in S}\gamma_S(x^*(S)-v(S))=\sum_{S\in\cA^*_1:i\in
    S}\gamma_S(x^*(S)-v(S))+ \sum_{S\in\cA^*_2:i\in
    S}\gamma_S(x^*(S)-v(S))
  \]

  \[
\Leftrightarrow  \sum_{S\in\cA^*_1:1\in S}\gamma_S(x^*(S)-v(S))=\sum_{S\in\cA^*_1:i\in
    S}\gamma_S(x^*(S)-v(S)).
  \]

Thus, $x^*$ is optimal for $g_1.$ Combining these inequalities,
\[
g(x')
= g_1(x') + g_2(x') + g_3(x')
\;\ge\; g_1(x^*) + 0 + 0
= g(x^*),
\qquad
\forall\,x'\in\mathcal{B}.
\]
Hence $x^*$ is a local minimum of $g$. Finally, since $g$ is convex over the convex set $\{x : x(N) = v(N) \}$,
every local minimum is global, and the theorem follows.

\end{proof}

  \begin{corollary}\label{cor:optim}
    Let $v\in\cG(n)$.
Supposing that the CLOBIS algorithm terminates, the final output $x^*$ is a core
element of $v^*$, unique solution of Problem~(\ref{eq:optimization_problem2}), given by
$v^*(S)=\min(x^*(S),v(S))$, $S\subseteq N$.
\end{corollary}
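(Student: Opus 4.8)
The plan is to assemble the corollary directly from the two results immediately preceding it, namely Theorem~\ref{convergence} and Proposition~\ref{prop:equiv}, together with the defining property of the stopping criterion of the CLOBIS algorithm. First I would unwind what it means for the algorithm to terminate: by construction, at step $k$ the algorithm produces $x^{k+1}$ minimizing $\phi_k(x)=\sum_{S\in\cA_k}\gamma_S(x(S)-v(S))^2$ subject to $x(N)=v(N)$, and then sets $\cA_{k+1}=\{S\in 2^N\setminus\{N\}: x^{k+1}(S)\leqslant v(S)\}$. Termination means $\cA_k=\cA_{k+1}$ for some $k$; writing $\cA^*:=\cA_k=\cA_{k+1}$ and $x^*:=x^{k+1}$, the equality $\cA^*=\cA_{k+1}$ says precisely that $x^*(S)\leqslant v(S)$ for all $S\in\cA^*$ and $x^*(S)>v(S)$ for all $S\notin\cA^*$ (where $S$ ranges over $2^N\setminus\{N\}$). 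These are exactly the hypotheses of Theorem~\ref{convergence}, so that theorem yields that $x^*$ is an optimal solution of Problem~(\ref{eq:alg_general}).

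Next I would invoke Proposition~\ref{prop:equiv}: since $x^*$ is optimal for (\ref{eq:alg_general}), the pair $(v^*,x^*)$ with $v^*$ defined by (\ref{closest-game}), i.e. $v^*(S)=\min(x^*(S),v(S))$ for $S\subset N$ and $v^*(N)=v(N)$, is an optimal solution of Problem~(\ref{eq:optimization_problem2}). The uniqueness of $v^*$ follows from the fact, already noted after (\ref{eq:optimization_problem1}), that $\cBG_{v(N)}(n)$ is a convex polyhedron and the objective is strictly convex in $w$, so the closest balanced game is unique; hence $v^*$ here is indeed \emph{the} closest balanced game. Finally, that $x^*$ is a core element of $v^*$ is immediate from (\ref{closest-game}): for every $S\subset N$ we have $v^*(S)=\min(x^*(S),v(S))\leqslant x^*(S)$, i.e. $x^*(S)\geqslant v^*(S)$, and $x^*(N)=v(N)=v^*(N)$, so $x^*\in C(v^*)$.

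The only genuine content beyond bookkeeping is checking that the stopping condition $\cA_k=\cA_{k+1}$ really delivers the strict/non-strict separation demanded by Theorem~\ref{convergence}: one must be careful that $\cA_{k+1}$ is defined via the \emph{non-strict} inequality $x^{k+1}(S)\leqslant v(S)$, so that $\cA_k=\cA_{k+1}$ forces $x^*(S)>v(S)$ (strictly) for every $S\notin\cA^*$, while $x^*(S)\leqslant v(S)$ on $\cA^*$ — exactly as Theorem~\ref{convergence} requires (its hypothesis allows equality inside $\cA^*$, which is what the set $\cA^*_2$ in that proof accounts for). I expect this matching of the stopping criterion to the theorem's hypotheses to be the main, though modest, obstacle; everything else is a direct citation of Theorem~\ref{convergence}, Proposition~\ref{prop:equiv}, and Corollary~\ref{corclosest}.
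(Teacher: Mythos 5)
Your proposal is correct and follows essentially the same route as the paper's proof: unwind the stopping criterion to verify the hypotheses of Theorem~\ref{convergence}, conclude optimality of $x^*$ for Problem~(\ref{eq:alg_general}), transfer to Problem~(\ref{eq:optimization_problem2}) via Proposition~\ref{prop:equiv} with $v^*$ given by (\ref{closest-game}), and invoke strict convexity for uniqueness. Your added care in matching the strict/non-strict inequalities of the stopping criterion to the hypotheses of Theorem~\ref{convergence}, and your explicit check that $x^*\in C(v^*)$, are details the paper leaves implicit but are consistent with its argument.
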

  \begin{proof}
  Suppose the algorithm terminates with $x^*$ and $\cA^*$. This solution satisfies
  conditions of Theorem~\ref{convergence}, therefore, $x^*$ is an optimal
  solution of Problem~(\ref{eq:alg_general}). By Proposition~\ref{prop:equiv}, it follows
  that $(v^*,x^*)$ is an optimal solution of
  Problem~(\ref{eq:optimization_problem2}), where $v^*$ is given by (\ref{closest-game}) in Corollary~\ref{corclosest}. Moreover, as this problem is strictly
  convex, $v^*$ is unique.
\end{proof}


Note however that it could be the case that the algorithm gets into a cycle, as next example shows:
\begin{example}
Consider the game $v$ given by

\begin{center}
\begin{tabular}{|c|*{15}{c}|}
\hline Subset & 1 & 2 & 3 & 4 & 12 & 13 & 14 & 23 & 24 & 34 & 123 & 124 & 134 & 234 & 1234\\
\hline $v$    & $-48$ & $-32$ & 25 & $-74$ &  12 & 100 & $-60$ &  62 & $-35$ &  32  & 54 &  29 &  21 & $-75$ &  57 \\
\hline
\end{tabular}
\end{center}

Let us apply the algorithm to this game:

\textbf{Iteration 1:}

Linear system:
\[
\begin{bmatrix}
4 & -4 & 0 & 0 \\
0 & 4 & -4 & 0 \\
0 & 0 & 4 & -4 \\
1 & 1 & 1 & 1
\end{bmatrix}
x
=
\begin{bmatrix}
93 \\ -204 \\ 381 \\ 57
\end{bmatrix}.
\]

Solution:
\[
x^\star = (30,\; 6.75,\; 57.75,\; -37.5).
\]

\begin{center}
\begin{tabular}{|c|*{15}{c}|}
\hline $S$ & 1 & 2 & 3 & 4 & 12 & 13 & 14 & 23 & 24 & 34 & 123 & 124 & 134 & 234 & 1234 \\
\hline $v(S)$ & $-48$ & $-32$ & 25 & $-74$ & 12 & 100 & $-60$ & 62 & $-35$ & 32 & 54 & 29 & 21 & $-75$ & 57 \\
\hline $x(S)$ & 30 & 6.7 & 57.7 & $-37.5$ & 36.7 & 87.7 & $-7.5$ & 64.5 & $-30.7$ & 20.2 & 94.5 & $-0.7$ & 50.2 & 27 & 57 \\
\hline
\end{tabular}
\end{center}

Active coalitions:
\[
\mathcal{A}_1 = \{13, 34, 124 \}.
\]

\textbf{Iteration 2:}

Linear system:
\[
\begin{bmatrix}
1 & 0 & 1 & 0 \\
0 & 1 & -2 & 0 \\
0 & -1 & 1 & -1 \\
1 & 1 & 1 & 1
\end{bmatrix}
x
=
\begin{bmatrix}
100 \\ -103 \\ 71 \\ 57
\end{bmatrix}.
\]

Solution:
\[
x^\star = (72,\; -47,\; 28,\; 4).
\]

\begin{center}
\begin{tabular}{|c|*{15}{c}|}
\hline $S$ & 1 & 2 & 3 & 4 & 12 & 13 & 14 & 23 & 24 & 34 & 123 & 124 & 134 & 234 & 1234 \\
\hline $v(S)$ & $-48$ & $-32$ & 25 & $-74$ & 12 & 100 & $-60$ & 62 & $-35$ & 32 & 54 & 29 & 21 & $-75$ & 57 \\
\hline $x(S)$ & 72 & $-47$ & 28 & 4 & 25 & 100 & 76 & $-19$ & $-43$ & 32 & 53 & 29 & 104 & $-15$ & 57 \\
\hline
\end{tabular}
\end{center}

Active coalitions:
\[
\mathcal{A}_2 = \{2,13,23,24,34,123,124\}.
\]

\textbf{Iteration 3:}

Linear system:
\[
\begin{bmatrix}
1 & -3 & 0 & -1 \\
0 & 3 & -2 & 1 \\
1 & 0 & 3 & -2 \\
1 & 1 & 1 & 1
\end{bmatrix}
x
=
\begin{bmatrix}
105 \\ -170 \\ 222 \\ 57
\end{bmatrix}.
\]

Solution:
\[
x^\star = (37.875,\; -17.40625,\; 51.4375,\; -14.90625).
\]

\begin{center}
\begin{tabular}{|c|*{15}{c}|}
\hline $S$ & 1 & 2 & 3 & 4 & 12 & 13 & 14 & 23 & 24 & 34 & 123 & 124 & 134 & 234 & 1234 \\
\hline $v(S)$ & $-48$ & $-32$ & 25 & $-74$ & 12 & 100 & $-60$ & 62 & $-35$ & 32 & 54 & 29 & 21 & $-75$ & 57 \\
\hline $x(S)$ & 37.9 & $-17.4$ & 51.4 & $-14.9$ & 20.5 & 89.3 & 23.0 & 34.0 & $-32.3$ & 36.5 & 71.9 & 5.6 & 74.4 & 19.1 & 57 \\
\hline
\end{tabular}
\end{center}

Active coalitions:
\[
\mathcal{A}_3 = \{13,23,124\}.
\]

\textbf{Iteration 4:}

Solution:
\[
x^\star = (72,\; 34,\; 28,\; -77).
\]

\begin{center}
\begin{tabular}{|c|*{15}{c}|}
\hline $S$ & 1 & 2 & 3 & 4 & 12 & 13 & 14 & 23 & 24 & 34 & 123 & 124 & 134 & 234 & 1234 \\
\hline $v(S)$ & $-48$ & $-32$ & 25 & $-74$ & 12 & 100 & $-60$ & 62 & $-35$ & 32 & 54 & 29 & 21 & $-75$ & 57 \\
\hline $x(S)$ & 72 & 34 & 28 & $-77$ & 106 & 100 & $-5$ & 62 & $-43$ & $-49$ & 134 & 29 & 23 & $-15$ & 57 \\
\hline
\end{tabular}
\end{center}

Active coalitions:
\[
\mathcal{A}_4 = \{4,13,23,24,34,124\}.
\]

\textbf{Iteration 5:}

Solution:
\[
x^\star = (42.429,\; 8.571,\; 53.714,\; -47.714).
\]

\begin{center}
\begin{tabular}{|c|*{15}{c}|}
\hline $S$ & 1 & 2 & 3 & 4 & 12 & 13 & 14 & 23 & 24 & 34 & 123 & 124 & 134 & 234 & 1234 \\
\hline $v(S)$ & $-48$ & $-32$ & 25 & $-74$ & 12 & 100 & $-60$ & 62 & $-35$ & 32 & 54 & 29 & 21 & $-75$ & 57 \\
\hline $x(S)$ & 42.4 & 8.5 & 53.7 & $-47.7$ & 51 & 96.1 & $-5.3$ & 62.3 & $-39.1$ & 6 & 104.7 & 3.3 & 48.4 & 14.6 & 57 \\
\hline
\end{tabular}
\end{center}

Active coalitions:
\[
\mathcal{A}_5 = \{13,24,34,124\}.
\]

\textbf{Iteration 6 (Cycle Detected):}

Solution:
\[
x^\star = (68,\; -43,\; 28,\; 4).
\]

\begin{center}
\begin{tabular}{|c|*{15}{c}|}
\hline $S$ & 1 & 2 & 3 & 4 & 12 & 13 & 14 & 23 & 24 & 34 & 123 & 124 & 134 & 234 & 1234 \\
\hline $v(S)$ & -48 & -32 & 25 & -74 & 12 & 100 & -60 & 62 & -35 & 32 & 54 & 29 & 21 & -75 & 57 \\
\hline $x(S)$ & 68 & -43 & 28 & 4 & 25 & 96 & 72 & -15 & -39 & 32 & 53 & 29 & 100 & -11 & 57 \\
\hline
\end{tabular}
\end{center}

Active coalitions:
\[
\mathcal{A}_6 = \{2,13,23,24,34,123,124\} =\mathcal{A}_2.
\]

This set of active coalitions is identical to the one obtained in Iteration 2, and the algorithm gets stuck in a cycle.

\end{example}

\begin{remark}
At this point, it should be remarked that this situation almost never occurs when generating $v$ at random.  It has been computationally verified that the proportion of cases in which cycles occur decreases rapidly as the number of players increases. In fact, for $n \geqslant 6$, it becomes extremely difficult to find examples where such cycles arise.

Although cycles rarely occur, it is straightforward to adapt the CLOBIS
  algorithm to prevent cycling. Note that the initial family in the algorithm
  $\mathcal{A}_0$, is the family of all subsets of $N$, though this choice can
  be modified. Thus, if starting with this family $\mathcal{A}_0$ leads to a
  cycle, one can restart with a different family (generated at random) and
  repeat the process. By doing this repeatedly, one will eventually reach a
  situation where the initial family is optimal, and the algorithm converges. In
  practice, changing the initial family only a few times is usually sufficient
  to achieve convergence.\footnote{Our implementation in R incorporates this improvement.}
\end{remark}

Thanks to this efficient version of the algorithm we have gone from working with a problem of $2^n+n+n-3$ variables to one with $n$ variables. This makes the calculation of $v^*$ much faster and more efficient. In this way we can calculate $v^*$ for larger games (larger than 20 players). In Table \ref{tab:alg_time} we can see the average time the algorithm takes for each $n$ and also the average number of iterations when $v$ is generated uniformly in the cube $[-L,L]^{2^{n}-1}$ for $L=10$ (compare with Table \ref{tab:execution_times}).

\begin{table}[htb]
\centering
\begin{tabular}{ccr}
\toprule
$n$ & Mean Iter & Mean Time(s) \\
\midrule
3  & 2.7 & 0.000972 \\
4  & 2.9 & 0.000878 \\
5  & 3.5 & 0.001037 \\
6  & 3.0 & 0.002062 \\
7  & 3.6 & 0.002622 \\
8  & 3.4 & 0.002634 \\
9  & 3.3 & 0.004450 \\
10 & 3.6 & 0.008055 \\
11 & 3.8 & 0.012719 \\
12 & 3.8 & 0.025711 \\
13 & 3.9 & 0.049686 \\
14 & 3.9 & 0.118343 \\
15 & 3.8 & 0.276413 \\
16 & 4.0 & 0.599793 \\
17 & 4.2 & 1.511491 \\
18 & 4.1 & 3.401944 \\
19 & 4.0 & 5.726339\\
20 & 4.1 & 14.140730 \\
\bottomrule
\end{tabular}\caption{Average number of iteration and mean time of execution for Algorithm \ref{alg: iterative}}
\label{tab:alg_time}
\end{table}

We can see that the number of iterations is very small even for large referential sets $N$.
We finish this section by giving a relation between $\cA^*$, the family
  given in the last iteration of the algorithm, and the facet in which $v^*$ lies.
\begin{lemma}
  Let $v\in \cG(n)$ and $\cB\in \gB^*(n).$ The closest balanced game $v^*$ to $v$ is in the facet defined by $\cB$ if and only if $\cB \subseteq \cA^*.$
\end{lemma}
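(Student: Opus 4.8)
The plan is to characterize membership of $v^*$ in the facet $\cF_\cB$ via the corresponding inequality in (\ref{eq:bgalpha}), namely $\sum_{S\in\cB}\lambda^\cB_S v^*(S) = v^*(N)$, and then to translate this equality into a statement about which coalitions $S\in\cB$ satisfy $v^*(S)=x^*(S)$ versus $v^*(S)<x^*(S)$, using Corollary~\ref{corclosest}. Recall from that corollary that $v^*(S)=\min\{x^*(S),v(S)\}$ for all $S\subset N$, and by the stopping criterion $\cA^*=\{S\in 2^N\setminus\{N\}: x^*(S)\leqslant v(S)\}$. Thus for $S\in\cA^*$ we have $v^*(S)=x^*(S)$, and for $S\notin\cA^*$ we have $v^*(S)=v(S)<x^*(S)$.

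\medskip
\noindent\textbf{($\Leftarrow$) Suppose $\cB\subseteq\cA^*$.} Then $v^*(S)=x^*(S)$ for every $S\in\cB$. Since $x^*\in C(v^*)$, we have $x^*(N)=v^*(N)$, and since $\cB$ is a balanced collection with weights $(\lambda^\cB_S)_{S\in\cB}$ satisfying $\sum_{S\ni i}\lambda^\cB_S=1$ for all $i$, we compute
\[
\sum_{S\in\cB}\lambda^\cB_S v^*(S) = \sum_{S\in\cB}\lambda^\cB_S x^*(S) = \sum_{S\in\cB}\lambda^\cB_S\sum_{i\in S}x^*_i = \sum_{i\in N}x^*_i\sum_{S\in\cB,S\ni i}\lambda^\cB_S = \sum_{i\in N}x^*_i = x^*(N) = v^*(N).
\]
Hence the inequality of (\ref{eq:bgalpha}) associated to $\cB$ is tight at $v^*$, so $v^*$ lies on the facet $\cF_\cB$ (which is a facet by Theorem~\ref{th:bgfacet}).

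\medskip
\noindent\textbf{($\Rightarrow$) Suppose $v^*\in\cF_\cB$.} Then $\sum_{S\in\cB}\lambda^\cB_S v^*(S)=v^*(N)=x^*(N)$. On the other hand, since $x^*\in C(v^*)$ we have $x^*(S)\geqslant v^*(S)$ for all $S\subseteq N$, hence $\lambda^\cB_S x^*(S)\geqslant \lambda^\cB_S v^*(S)$ (as $\lambda^\cB_S>0$), so by the same balancing computation,
\[
x^*(N) = \sum_{S\in\cB}\lambda^\cB_S x^*(S) \geqslant \sum_{S\in\cB}\lambda^\cB_S v^*(S) = x^*(N).
\]
Equality forces $x^*(S)=v^*(S)$ for every $S\in\cB$. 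But if some $S\in\cB$ were not in $\cA^*$, then by the above $v^*(S)=v(S)<x^*(S)$, a contradiction. Therefore $\cB\subseteq\cA^*$.

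\medskip
\noindent\textbf{Main obstacle.} The only delicate point is ensuring the two directions use the hypothesis that the algorithm terminates (so that $\cA^*$ is well-defined and $x^*\in C(v^*)$ with $v^*$ the true closest balanced game), which is exactly what Corollary~\ref{cor:optim} guarantees; everything else is the standard balanced-collection identity $\sum_{S\in\cB}\lambda^\cB_S x(S)=x(N)$ for any $x\in\RR^N$. One should also note that the facet $\cF_\cB$ here is the facet in $\cBG_\alpha(n)$ with $\alpha=v(N)$, and the argument is carried out in that affine slice, which is legitimate since $v^*(N)=v(N)=\alpha$.
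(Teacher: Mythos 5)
Your proof is correct and follows essentially the same route as the paper's: both rely on the characterization of facet membership as tightness of the balancing inequality, the identity $\sum_{S\in\cB}\lambda^\cB_S x^*(S)=x^*(N)$, and Corollary~\ref{corclosest} to split coalitions according to whether $v^*(S)=x^*(S)$ (for $S\in\cA^*$) or $v^*(S)=v(S)<x^*(S)$ (for $S\notin\cA^*$). The only cosmetic difference is that the paper proves the forward implication via its contrapositive, whereas you argue directly and extract $x^*(S)=v^*(S)$ from the forced equality; the substance is identical.
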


\begin{proof}
 Recall that $v^*$ is in the facet defined by $\cB$ with balancing
  weights $(\lambda_S)_{S\in\cB}$ if and only if
$$ \sum_{S\in \cB} \lambda_S v^*(S)= v^*(N).$$

Consider a m.b.c. $\cB$ and suppose that $\forall S\in {\cB},$ then
$S\in \cA^*.$  Denoting by $x^*$ the output of the algorithm, we know by
  Corollary~\ref{cor:optim} that $x^*$ is a core element of $v^*$, and they are
  in relation with $v$ by (\ref{closest-game}). It follows that for any $S\in\cB$, then
  $S\in\cA^*$, and $x^*(S)=v^*(S)$. Therefore,
$$ \sum_{S\in \cB} \lambda_S v^*(S) = \sum_{S\in \cB} \lambda_S x^*(S)=  x^*(N)= v^*(N).$$

Conversely, suppose there exists $S\in \cB$ s.t. $S\not\in \cA^*.$ In this case, $x^*(S)>v(S)= v^*(S).$ Then,

$$ \sum_{S\in \cB} \lambda_S v^*(S)< \sum_{S\in \cB} \lambda_S x^*(S) = x^*(N)= v^*(N).$$
\end{proof}

\section{Simulation study}\label{sec:simu}

 Consider a game $v$ with an empty core and its projection $v^*$ on
$\cBG_{v(N)}(n)$. An important question is to know whether the core of $v^*$ is
  reduced to a point or not, which depends on which
facet or face of  $\cBG_{v(N)}(n)$ the game $v^*$
lies. Theorems~\ref{th:pcbg} and \ref{th:pcbg1} give a precise answer to this question.

The goal of this section is to estimate by simulation the probability that $v^*$
has a singleton core, as well as the probability that $v^*$ lies in a facet. We will consider throughout the distance with uniform weights
$\gamma_S=1$.\footnote{It has been checked experimentally that the results are not
sensitive to the weights.}

\subsection{Probability to have a singleton core}\label{sec:prsi}
  In this section, we conduct experiments to
estimate the probability that the projection $v^*$ of a game $v$ on
$\cBG_{v(N)}$ has a singleton core. For this, we will randomly generate a large
number of games and find their projections. To this aim, we randomly generate
values $v(S)$, $S\subset N$, uniformly in the interval $[-L, L]$, where $L$
takes values 1, 10 and 100. All values $v(S)$ are identically distributed and
drawn independently. To always work in the same polyhedron, we set the value of $v(N) = 0.$
For each value of $L$,  we have generated 5,000 games
(not necessarily unbalanced) to
estimate the probability of having a singleton core. The results of this simulation study appear in Table~\ref{tab:4}.
\begin{table}[h]
\centering
\begin{tabular}{|c|c|c|c|}
\hline
$n/L$ & 1 & 10 & 100 \\
\hline
3 & 0.668 & 0.669 & 0.688 \\
4 & 0.946 & 0.941 & 0.946 \\
5 & 0.999 & 0.999 & 0.999 \\
6 & 1.000 & 1.000 & 1.000 \\
7 & 1.000 & 1.000 & 1.000 \\
\hline
\end{tabular}
\caption{Probability that $v^*$ has a single core for different values of $n$ and $L$.}
\label{tab:4}
\end{table}

From Table \ref{tab:4}, we can see that as $n$ increases, the probability that the core
of $v^*$ is a singleton tends to 1 very quickly. Moreover, we observe that
the value of $L$ does not seem to affect this conclusion. We have also run simulations for other values of $v(N)$, as well as cases where the projection is made on $\cBG(n)$ (without the constraint $v^*(N)=v(N)$), or where both the games and their projections are restricted to nonnegative values. The conclusions remain essentially the same.

In Section~\ref{sec:prob}, we will provide a mathematical proof of
these facts.

%
%

\subsection{Probability of facets and faces}\label{sec:fafa}
In this subsection, we estimate through simulation the probability that a game
with an empty core is projected on a facet or on a face of lower dimension
(i.e., an intersection of facets) of the set of balanced games\footnote{When we
say that a game is projected on a face, we mean the face of lowest
dimension containing the projected game.}.  As it can be seen from
Figure~\ref{fig:mosaico}, even for a very simple polyhedron, the probability to
be projected on a face which is not a facet is far to be negligible.

We have conducted a detailed study of the cases $n=3$ and $n=4$,
  proceeding as above to generate random values of $v(S)$, $S\subset N$,
  uniformly in the interval\footnote{No notable difference
  is observed with different values of $L$.} $[-10,10].$ For $n=3$, there are 5
  m.b.c., which are: $\{1,23\}$, $\{1,2,3\}$, $\{12,13,23\}$, $\{2,13\}$ and
  $\{3,12\}$, and we estimate the probability to fall on a given face of
  $\cBG_{v(N)}(n)$ ( with $v(N)=0$) by
  generating 20 000 games and counting how many games fall on that face. The
  results are reported in Table~\ref{tab:8}. Faces are coded as follows: each
face is displayed as a string of zeros and ones, where the ones indicate the
facets that form the face. For example, the face ``10000'' corresponds to the
first facet associated with the first balanced collection (in the order as listed above),
while ``11000'' represents the intersection of the first two facets.

\begin{table}[h]
\centering
\begin{tabular}{|c|c|c|}
\hline
Face & Proportion & Single Core \\
\hline
00000 & $0.096$ & N \\
00001 & $0.075$ & N \\
10000 & $0.073$ & N \\
00010 & $0.072$ & N \\
11111 & $0.048$ & Y \\
01000 & $0.045$ & Y \\
00011 & $0.045$ & Y \\
10010 & $0.044$ & Y \\
00100 & $0.042$ & Y \\
10001 & $0.042$ & Y \\
01010 & $0.040$ & Y \\
00110 & $0.037$ & Y \\
01001 & $0.036$ & Y \\
11000 & $0.036$ & Y \\
10100 & $0.035$ & Y \\
00101 & $0.034$ & Y \\
00111 & $0.033$ & Y \\
11001 & $0.033$ & Y \\
01011 & $0.032$ & Y \\
11010 & $0.031$ & Y \\
10110 & $0.031$ & Y \\
10101 & $0.030$ & Y \\
\hline
\end{tabular}
\caption{Probability for each face that $v^*$ lies in it (listed in decreasing value)}
\label{tab:8}
\end{table}

Several observations can be made from Table \ref{tab:8}. First, the most probable face in
Table \ref{tab:8} is the face defined as ``00000''. This face represents the
case when $v$ is a balanced game. Thus, this is an estimate of the probability
that a randomly generated game is balanced. Another observation is that the
lineality space formed by the intersection of all facets ``11111'' has a relatively
high probability around $0.05$. Note that the faces  with games having a
non-singleton core are those obtained as intersections of facets such that $\vert \cB\vert <
n$ (collections 1, 4 and 5). If we sum the probabilities associated with the
case of a singleton core, we obtain $0.68$ (compare this with Table
\ref{tab:4}), hence the probability to have a
nonsingleton core is $0.32.$ If we restrict to the case
where $v$ is not balanced, this last proportion decreases to $0.22$.
 As it seems natural, most often the probability to fall on some facet is greater than
the probability to fall on some face of lower dimension, and curiously, it is
more probable to fall on a facet yielding a non-singleton core than on a facet
yielding a singleton core. The total probability to fall on a facet (resp.,  a face
of lower dimension, except the lineality space) is 0.30 (resp. 0.60).

%

For $n=4$ we observe similar results. In this case, we have 41 m.b.c. and hence
the number of different faces is very high, so that we do not list them here. We
have conducted the same experiment as above but generating 50 000 games. The
combinations leading to the lineality space (the intersection of all facets) appear
with a probability of $0.0003$, and the one associated with all zeros
(the interior of the polytope of balanced games) has a probability of
$0.002.$ In this case, the probability to fall on faces with a
singleton core increases to $0.95$ (compare again with Table \ref{tab:4}). The
probability to fall on a facet is
0.02, which is much lower than for $n=3$. This means that most of the time the
projected game will belong to a face of lower dimension. This motivates the
theoretical study of Section~\ref{sec:fasi}, where the proportion of faces with
a singleton core is studied.

\section{On the probability that $v^*$ has a singleton core }\label{sec:prob}

As shown in Theorems~\ref{th:pcbg} and \ref{th:pcbg1}, given a game $v$ with
empty core, the closest balanced game $v^*$ may have a core not reduced to a
single point. However, the simulation study in Section~\ref{sec:simu} seems to
indicate that in general, the probability of $C(v^*)$ being a single point is
very high. The goal of this section is to provide a mathematical proof that the
probability of the core of $v^*$ being a singleton tends to 1 as $n$
increases. We will make the proof in three steps. The first step consists
in formally demonstrating that the facets whose games have a singleton core are
far more numerous than those facets having at least one game with a
non-singleton core. More formally, we prove that the limit of the ratio of the
number of facets with singleton core and the total number of facets is 1. Next,
we have observed in Section~\ref{sec:fafa} that the probability of $v^*$ being
in a lower dimensional face instead of a facet is far to be negligible. Hence,
the second part deals with the same question for faces instead of facets. Finally, we relate the proportion of faces with singleton core with the
  probability that the projected game has a singleton core.

\subsection{Proportion of facets whose games have a singleton core}\label{sec:prfacet}

We already know that facets are in bijection with minimal balanced collections
(excluding the trivial m.b.c. $\{ N\}$). Let $B_n=|\gB^*(n)|$, i.e., the number of
m.b.c.  of a
family of $n$ elements, excluding the trivial m.b.c. ($B_3=5$, $B_4=41$, etc.; see Table
\ref{tablembmc}). As explained in Section 2, given a m.b.c. $\cB,$ it follows
that $|\cB| \in \{ 2,\ldots, n\} .$ Now,  for a fixed set of $n$
  elements, let us denote by $B_n^m$ the number of minimal balanced
collections with $m$ subsets i.e.
$$ B_n^m:= | \{ \cB\in \gB^*(n) : |\cB|=m\} .$$
Hence,
$$B_n=\sum_{m=2}^n B_n^m.$$
Moreover, from Theorem \ref{th:pcbg}, given $\cB\in \gB^*(n),$ any balanced game in the facet defined by $\cB$ has a singleton core if and only if $|\cB|=n.$ Hence, the proportion of facets such that they contain at least a game with non-singleton core is given by
$$\phi_n:=\dfrac{B_n-B_n^n}{B_n}.$$
Consequently, in this section we aim to prove that $\lim_{n \to \infty} \phi_n = 0.$

As explained in Section 2, given $\cB \in \gB^*,$ we can build a 0/1-valued
matrix $M_{\cB}.$ If $|\cB |=n,$ the corresponding matrix $M_{\cB}$ is a square
matrix $n\times n.$ As the weights for a m.b.c. are unique and positive, it
follows that $M_{\cB}$ is invertible and $M_{\cB}^{-1} \bm 1 >\bm 0.$

We will denote by $\gM_n$ the set of $n\times n$ binary matrices (composed of zeros and ones) $M$ that are invertible and satisfy the condition that the only solution to the system $Mx={\bf 1}$ is positive. Moreover, let us denote $M_n:= |\gM_n|.$ We will see below that, as expected, $M_n$ is deeply related to $B_n^n.$

Let us then study the behavior of $M_n.$ For this, we need the following result.

\begin{lemma}\cite{tik20}\label{lem:regular1}
Let $A_n(p)$ be a random $n \times n$ square matrix with $0/1$ entries where each entry is independently distributed as Bernoulli($p$), $p\in (0,1).$ Then, the probability that $A_n(p)$ is singular tends to zero as $n$ increases, i.e.

\[ \lim_{n \to \infty} \mathbb{P} \{ A_n(p) \text{ is singular} \} = 0, \forall p\in (0,1). \]
\end{lemma}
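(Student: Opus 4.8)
Since Lemma~\ref{lem:regular1} is invoked here only as a black box, and since for everything that follows only the qualitative statement $\Pr\{A_n(p)\ \text{singular}\}\to 0$ is needed (the quantitative refinement of \cite{tik20} gives far more), the plan is to sketch a self‑contained proof of the qualitative version; this goes back to Koml\'os (1967). First I would reformulate: $A_n(p)$ is singular iff its rows $R_1,\dots,R_n$ --- iid vectors with Bernoulli($p$) coordinates --- are linearly dependent. The one analytic ingredient is the Erd\H{o}s--Littlewood--Offord anti‑concentration inequality: if $a\in\mathbb{R}^n$ has at least $k$ nonzero coordinates and $\xi$ is a vector of iid Bernoulli($p$) coordinates, then $\sup_{t\in\mathbb{R}}\Pr\{\langle a,\xi\rangle=t\}\le C_p/\sqrt{k}$, with $C_p$ depending only on $p\in(0,1)$ (this is where $p$ being bounded away from $0$ and $1$ is used).

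The main line of argument is then: condition on $R_1,\dots,R_{n-1}$. If these are linearly independent they span a hyperplane, with a nonzero normal vector $w=w(R_1,\dots,R_{n-1})$, and $A_n(p)$ can be singular only if $\langle R_n,w\rangle=0$; since $R_n$ is independent of $w$, the anti‑concentration bound gives $\Pr\{\langle R_n,w\rangle=0\mid w\}\le C_p/\sqrt{k(w)}$, where $k(w)$ denotes the number of nonzero entries of $w$. So as soon as $k(w)=\Omega(n)$ this contribution is $o(1)$.

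The hard part is everything else. First one must exclude normal vectors $w$ supported on a small coordinate set $T$ with $|T|\le\beta n$: such a $w$ forces each of $R_1,\dots,R_{n-1}$, restricted to the columns in $T$, to be orthogonal to $w_T\neq 0$, hence the $(n-1)\times|T|$ submatrix to be rank‑deficient. A union bound over the $\binom{n}{|T|}$ choices of $T$, combined with the crude estimate $\Pr\{\text{a random }m\times s\text{ 0/1 matrix has rank }<s\}\le\binom{m}{s-1}q^{\,m-s+1}$ with $q=\max(p,1-p)<1$ (proved by adding rows one at a time and using that a new row falls in a fixed proper subspace with probability at most $q$), shows that the small‑support case has probability exponentially small in $n$ once $\beta$ is chosen small. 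Second --- and this is the genuine obstacle --- one must also control the event that $R_1,\dots,R_{n-1}$ are \emph{already} dependent, and here the argument is self‑referential: the naive recursion $p_n\le p_{n-1}+o(1)$ obtained by deleting the last row is not contractive and by itself does not force $p_n\to 0$. Koml\'os's way around this is to run the estimate not on square matrices in isolation but on the whole family of probabilities that \emph{rectangular} $0/1$ matrices are rank‑deficient: these decay geometrically, because an extra column makes linear independence far more likely, and one tracks carefully how these quantities feed into one another, so that the error terms actually pull the bound down rather than merely failing to raise it. Granting this bookkeeping, the three estimates --- small‑support normal (exponentially small), large‑support normal ($O(n^{-1/2})$), and first $n-1$ rows dependent (the rectangular recursion) --- combine to give $\Pr\{A_n(p)\ \text{singular}\}\to 0$, which is all that is used below.
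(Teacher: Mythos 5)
The paper does not prove this lemma at all: it is imported verbatim from Tikhomirov's work via the citation \cite{tik20}, so there is no internal proof to compare your attempt against. Your proposal is therefore best judged on its own terms, and on those terms it is a faithful outline of the classical Koml\'os route to the \emph{qualitative} statement, which, as you correctly observe, is all that the paper ever uses (Lemmas~\ref{lem:regular2}--\ref{lem:regular3} and their consequences only need $\mathbb{P}\{A_n(p)\ \text{singular}\}\to 0$, not Tikhomirov's sharp $(1-p+o(1))^n$ bound). The pieces you make explicit are all correct: the reduction to linear dependence of the rows; the Littlewood--Offord anti-concentration bound $\sup_t\mathbb{P}\{\langle a,\xi\rangle=t\}\le C_p/\sqrt{k}$ applied to the last row against the normal vector $w$ of the span of the first $n-1$ rows (valid since $w$ is independent of $R_n$); and the union bound over small supports $T$ using the estimate $\binom{m}{s-1}q^{m-s+1}$ with $q=\max(p,1-p)<1$, which is where the hypothesis $p\in(0,1)$ enters.

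The one point at which your text stops being a proof is exactly the step you flag yourself: controlling the event that $R_1,\dots,R_{n-1}$ are already dependent. You correctly diagnose that the naive recursion $p_n\le p_{n-1}+o(1)$ is not contractive, and you correctly name the fix (running the estimates over the family of rank-deficiency probabilities of rectangular $0/1$ matrices, which decay geometrically in the number of extra columns), but you then ``grant this bookkeeping'' rather than carry it out. So what you have is an accurate roadmap with the genuinely hard step deferred to the literature, not a self-contained argument. That is a perfectly reasonable position for a result the paper itself treats as a black box --- but if the goal were to replace the citation by a proof, this recursion is the part that would actually have to be written down.
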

This result also holds for matrices with $-1/1$ entries, where the probability
that the entry is 1 corresponds to independently distributed Bernoulli($p$),
$p\in (0,1).$  Let us denote by $\gC_n$ the set of square $n\times n$ matrices
with $-1/1$ entries.

We adopt from now on the notation $f(n) \sim g(n)$ for meaning $\lim_{n \to \infty} \dfrac{f(n)}{g(n)} = 1.$ From Lemma \ref{lem:regular1}, the following result holds.

\begin{lemma}\label{lem:regular2}
Let $C_n(p)$ be a random matrix in $\gC_n$, whose entries are independently distributed as Bernoulli($p$), $p\in (0,1)$, and consider

$$A_n(p):=\frac{C_n(p)+{\bf 1}{\bf 1}^T }{2}.$$
Then,

\[ \lim_{n \to \infty} \mathbb{P} \{ C_n(p) \text{ and } A_n(p) \text{ are both regular} \} = 1.\]
\end{lemma}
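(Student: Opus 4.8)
The plan is to reduce the statement about the pair $(C_n(p), A_n(p))$ to two separate applications of Lemma~\ref{lem:regular1}, using a union bound on the two ``bad'' events. The key observation is that $A_n(p)$ is, by construction, exactly the matrix obtained from $C_n(p)$ by replacing every $-1$ entry by $0$ and keeping every $1$ entry: if $C_n(p)(i,j) = 1$ with probability $p$ and $= -1$ with probability $1-p$, independently, then $A_n(p)(i,j) = \tfrac{1+1}{2} = 1$ with probability $p$ and $A_n(p)(i,j) = \tfrac{-1+1}{2} = 0$ with probability $1-p$, and the entries of $A_n(p)$ inherit independence from those of $C_n(p)$. Hence $A_n(p)$ is precisely a Bernoulli($p$) $0/1$ random matrix as in Lemma~\ref{lem:regular1}, so $\mathbb{P}\{A_n(p)\text{ singular}\} \to 0$.

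Next I would argue that $C_n(p)$ itself is singular with vanishing probability. The excerpt already asserts (the sentence ``This result also holds for matrices with $-1/1$ entries\ldots'') that Lemma~\ref{lem:regular1} has a $-1/1$ analogue, under exactly the distributional assumption placed on $C_n(p)$; so $\mathbb{P}\{C_n(p)\text{ singular}\} \to 0$ as well. If one prefers not to invoke that remark as a black box, one can deduce it from the $0/1$ case: write $C_n(p) = 2A_n(p) - \mathbf{1}\mathbf{1}^T$, and note that $\mathbf{1}\mathbf{1}^T$ is a rank-one update, so $\det C_n(p) = 2^n \det\!\bigl(A_n(p) - \tfrac12 \mathbf{1}\mathbf{1}^T\bigr)$; by the matrix determinant lemma this equals $2^n\bigl(\det A_n(p)\bigr)\bigl(1 - \tfrac12 \mathbf{1}^T A_n(p)^{-1}\mathbf{1}\bigr)$ whenever $A_n(p)$ is invertible, and one checks that the scalar correction is almost surely nonzero (it vanishes only on a lower-dimensional algebraic subset of the finitely many invertible $0/1$ matrices, or more simply: conditioning on $A_n(p)$ invertible, the event $\mathbf{1}^T A_n(p)^{-1}\mathbf{1} = 2$ has probability tending to $0$ by a separate anti-concentration estimate). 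The cleanest route, though, is simply to cite the stated $-1/1$ version of Lemma~\ref{lem:regular1}.

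Finally, I would combine the two estimates. Let $E_n$ be the event that $C_n(p)$ is singular and $F_n$ the event that $A_n(p)$ is singular. Then
\[
\mathbb{P}\{C_n(p)\text{ and }A_n(p)\text{ are both regular}\} = 1 - \mathbb{P}(E_n \cup F_n) \geqslant 1 - \mathbb{P}(E_n) - \mathbb{P}(F_n),
\]
and since both $\mathbb{P}(E_n) \to 0$ and $\mathbb{P}(F_n) \to 0$, the right-hand side tends to $1$; as probabilities are bounded above by $1$, the limit is exactly $1$.

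The only genuinely delicate point is making sure the $-1/1$ version of Lemma~\ref{lem:regular1} is legitimately available in the exact form needed (same $p$, independent entries, the roles of $+1$ and $-1$ matching the probabilities $p$ and $1-p$); but this is stated in the text immediately after Lemma~\ref{lem:regular1}, so the proof is essentially a two-line union bound once the identification of $A_n(p)$ with a $0/1$ Bernoulli matrix is made explicit. I do not anticipate any real obstacle here — the lemma is a packaging step that sets up the relationship between $C_n(p)$ and $A_n(p)$ for use in the subsequent asymptotic analysis of $M_n$ and $B_n^n$.
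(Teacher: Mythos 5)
Your proposal is correct and follows essentially the same route as the paper: both arguments rest on observing that $A_n(p)$ is entrywise the image of $C_n(p)$ under $-1\mapsto 0$, $1\mapsto 1$ (hence a Bernoulli($p$) $0/1$ matrix), and then invoking Lemma~\ref{lem:regular1} together with its stated $-1/1$ variant so that each singularity probability vanishes. The only cosmetic difference is that you combine the two estimates by a union bound, whereas the paper conditions on $C_n(p)$ being regular and uses the law of total probability; your version is if anything slightly cleaner, and your optional matrix-determinant-lemma detour is unnecessary since, as you note, the $-1/1$ case is already available.
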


\begin{proof}
Observe that
$$ \mathbb{P} \{ C_n(p) \text{ and } A_n(p) \text{ are regular} \} =  \mathbb{P} \{ A_n(p) \text{ regular}  \ \vert \ C_n(p) \text{ regular} \} \cdot  \mathbb{P} \{ C_n(p) \text{ regular} \}.$$
By Lemma \ref{lem:regular1} we know that $\lim_{n\to \infty }\mathbb{P} \{ C_n(p) \text{ regular} \} =1,$ so that it suffices to show

\[ \lim_{n \to \infty} \mathbb{P} \left\{ A_n(p) \text{ is regular} \mid C_n(p) \text{ is regular} \right\} = 1.\]
Now, using the law of total probability:

$$ \mathbb{P} \{ A_n(p) \text{ regular} \} =  \mathbb{P} \{ A_n(p) \text{ regular}  \ \vert \ C_n(p) \text{ regular} \} \cdot  \mathbb{P} \{ C_n(p) \text{ regular} \} + $$ $$\mathbb{P} \{ A_n(p) \text{ regular}  \ \vert \ C_n(p) \text{ singular} \} \cdot  \mathbb{P} \{ C_n(p) \text{ singular} \}.$$
Note that by definition, $A_n(p)$ is a $n \times n$ square matrix with $0/1$ entries and such that it takes value 0 at positions where $C_n(p)$ takes value $-1$. Hence, as $C_n(p)$ is chosen randomly, we conclude that $A_n(p)$ is a random $n \times n$ square matrix with $0/1$ entries independently distributed as Bernoulli($p$), $p\in (0,1).$ Again by Lemma \ref{lem:regular1}, we know that

$$ \lim_{n\to \infty }\mathbb{P} \{ A_n(p) \text{ regular} \} =1, \lim_{n\to \infty }\mathbb{P} \{ C_n(p) \text{ singular } \} =0,$$
and thus
$$ 1 \sim \mathbb{P} \{ A_n(p) \text{ regular} \} \sim \mathbb{P} \{ A_n(p) \text{ regular}  \ \vert \ C_n(p) \text{ regular} \}. $$
Hence, the result holds.
\end{proof}

Let us begin by noting that the probability that any coordinate of the vector ${\bf v}:=C^{-1}_n(1/2) {\bf 1}$ is zero is negligible for large $n.$

\begin{lemma}\label{lemma-aux2}
 Let $C_n(1/2)$ be a random matrix in $\gC_n$, whose entries are
    independently distributed as Bernoulli(1/2). Suppose that $C_n(1/2)$ is
    regular, and consider ${\bf v}:=C^{-1}_n(1/2) {\bf 1}$. Then,

$$ \lim_{n\rightarrow \infty} \mathbb{P}({\bf v}_i =0)= 0, i=1, ..., n.$$
\end{lemma}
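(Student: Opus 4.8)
The plan is to fix a coordinate, say $i=1$ (by symmetry the argument is identical for all $i$), and express the event $\{\mathbf{v}_1=0\}$ in a way that reveals a hidden symmetry making it negligible. Write $C=C_n(1/2)$ and $\mathbf v = C^{-1}\mathbf 1$, so that $C\mathbf v = \mathbf 1$. By Cramer's rule, $\mathbf v_1 = \det(C^{(1)})/\det(C)$, where $C^{(1)}$ is the matrix obtained from $C$ by replacing its first column by $\mathbf 1$. On the event that $C$ is regular, $\det(C)\ne 0$, so $\mathbf v_1 = 0$ if and only if $\det(C^{(1)})=0$. Thus it suffices to show that $\mathbb P\{\det(C^{(1)})=0 \text{ and } C \text{ regular}\}\to 0$, and since $\mathbb P\{C\text{ regular}\}\to 1$ by Lemma~\ref{lem:regular1} (in its $\pm1$ form, via Lemma~\ref{lem:regular2}), it is enough to prove $\mathbb P\{\det(C^{(1)})=0\}\to 0$.

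Now I would analyze the matrix $C^{(1)}$ directly. Its first column is the all-ones vector $\mathbf 1$, and its remaining $n-1$ columns are i.i.d.\ uniform $\pm1$ columns, independent of the first. The key observation is that adding a fixed column to a random $\pm1$ matrix does not help it to be singular: more precisely, I would condition on the last $n-1$ columns of $C^{(1)}$, i.e.\ on an $n\times(n-1)$ random $\pm1$ matrix $D$, and argue that for $D$ of full column rank $n-1$ (which happens with probability tending to $1$, by the rectangular analogue of Lemma~\ref{lem:regular1}, obtained e.g.\ by applying the square result to an $(n-1)\times(n-1)$ submatrix), the event $\det([\mathbf 1 \mid D])=0$ means precisely that $\mathbf 1$ lies in the column span of $D$, a fixed $(n-1)$-dimensional subspace of $\RR^n$. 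Since $\mathbf 1$ is deterministic this is not yet obviously negligible conditionally on $D$; instead I would exploit a swap: equivalently, $\det(C^{(1)})=0$ iff the columns $\mathbf 1, D_1,\dots,D_{n-1}$ are linearly dependent, and since $D$ has full rank this forces $\mathbf 1 \in \mathrm{span}(D_1,\dots,D_{n-1})$. The cleanest route is then to instead go back to $C$ itself: $\mathbf v_1 = 0$ is equivalent to saying that $\mathbf 1$ lies in the span of columns $2,\dots,n$ of $C$. So $\mathbb P\{\mathbf v_1 = 0, C \text{ regular}\} \le \mathbb P\{\mathbf 1 \in \mathrm{span}(C_2,\dots,C_n)\}$, where $C_2,\dots,C_n$ are the last $n-1$ columns of $C$.

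To bound $\mathbb P\{\mathbf 1 \in \mathrm{span}(C_2,\dots,C_n)\}$, I would condition on $C_2,\dots,C_n$ having full column rank $n-1$ (true with probability $\to 1$), in which case their span is a fixed hyperplane $H \subset \RR^n$ determined by a nonzero normal vector $\mathbf a = \mathbf a(C_2,\dots,C_n)$, i.e.\ $\mathbf 1 \in H \iff \langle \mathbf a, \mathbf 1\rangle = 0 \iff \sum_j a_j = 0$. To make this genuinely negligible I would use the freedom in the first column: reconsider the whole matrix $C$ and note that the event $\{\mathbf v_1 = 0\}$ can also be written, by expanding $\det(C^{(1)})$ along the first column, as $\sum_{j=1}^n (-1)^{j+1}\det(C_{j,1}) = 0$, where $C_{j,1}$ is the $(n-1)\times(n-1)$ minor deleting row $j$ and column $1$ — but these minors all share the same remaining columns. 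The honest difficulty — and I expect this to be the main obstacle — is precisely that $\mathbf 1$ is a deterministic vector, so one cannot simply invoke "a fixed subspace contains a random $\pm1$ vector with exponentially small probability"; one must instead show that the random hyperplane $\mathrm{span}(C_2,\dots,C_n)$ avoids the fixed vector $\mathbf 1$ with probability $\to 1$. I would handle this by an anti-concentration / Littlewood–Offord argument: condition on all but one column, say $C_n$; on the event that $C_2,\dots,C_{n-1},\mathbf 1$ are linearly independent (probability $\to 1$ by the rectangular full-rank result, since $\mathbf 1$ is a fixed vector not in the span of $n-2$ random columns with high probability), the set of $C_n$ making $\mathbf 1, C_2,\dots,C_n$ dependent is a fixed $(n-2)$-dimensional affine subspace (in fact a linear subspace) $V$, and then $\mathbb P\{C_n \in V\} \le \max_{\mathbf b \ne 0} \mathbb P\{\langle \mathbf b, C_n \rangle = 0\}$, which by the Erdős–Littlewood–Offord inequality is $O(1/\sqrt n) \to 0$. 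Combining this with the full-rank events (each of probability $\to 1$) and a union bound over the finitely many conditioning events gives $\mathbb P\{\mathbf v_1 = 0\} \to 0$, and by symmetry the same holds for every coordinate $i$, completing the proof.
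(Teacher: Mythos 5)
Your opening is the same as the paper's: by Cramer's rule, $\mathbf v_i=0$ iff the matrix obtained from $C_n(1/2)$ by replacing its $i$-th column with $\mathbf 1$ is singular. From there, however, your route diverges, and the divergence is where the gap lies. Your final estimate rests on the claim that
$\mathbb P\{C_n\in V\}\le \max_{\mathbf b\ne 0}\mathbb P\{\langle\mathbf b,C_n\rangle=0\}=O(1/\sqrt n)$
by Erd\H{o}s--Littlewood--Offord. That inequality chain is false as stated: the Littlewood--Offord bound $O(1/\sqrt n)$ requires all (or at least a growing number of) coordinates of $\mathbf b$ to be nonzero. Over \emph{all} nonzero $\mathbf b$ the maximum atom is $1/2$ (take $\mathbf b=e_1+e_2$, so $\langle\mathbf b,C_n\rangle=\epsilon_1+\epsilon_2=0$ with probability $1/2$). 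To make your argument work you would have to show that the normal vector of $\mathrm{span}(\mathbf 1,C_2,\dots,C_{n-1})$ is, with probability tending to $1$, non-sparse (unstructured). That is precisely the hard core of the singularity theory for random $\pm1$ matrices (Kahn--Koml\'os--Szemer\'edi, Tao--Vu, Tikhomirov), and nothing in your sketch establishes it; you correctly flag that the determinism of $\mathbf 1$ is ``the main obstacle,'' but the Littlewood--Offord step does not overcome it.

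The paper sidesteps this entirely with a symmetrization argument that you may want to compare with. Write $H^{\mathbf w}_i$ for the matrix whose $i$-th column is a fixed sign vector $\mathbf w$ and whose other entries are i.i.d.\ $\pm1$. Left-multiplying by a diagonal $\pm1$ matrix $D$ flips whole rows; since the entries are symmetric, $DC_n(1/2)$ has the same law as $C_n(1/2)$, and $D$ can be chosen to carry $\mathbf w$ to any other sign vector $\mathbf w'$. Hence $\mathbb P\{H^{\mathbf w}_i\text{ regular}\}$ does not depend on $\mathbf w$, and by conditioning on the $i$-th column of a fully random matrix and averaging over the $2^n$ possible columns, this common value equals $\mathbb P\{C_n(1/2)\text{ regular}\}$, which tends to $1$ by Lemma~\ref{lem:regular1}. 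In particular $\mathbb P\{H^{\mathbf 1}_i\text{ singular}\}\to 0$, which is exactly $\mathbb P\{\mathbf v_i=0\}\to 0$. In other words, the fixed vector $\mathbf 1$ is reduced to a random column by exploiting the sign-flip invariance of the ensemble, so the already-cited singularity theorem does all the work and no anti-concentration for structured normals is needed. I would encourage you to replace your Littlewood--Offord step with this (or an equivalent) symmetry argument.
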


\begin{proof}
Take a random matrix $C_n(1/2)$ defined as above.
By Cramer's rule, the $i$-th coordinate of ${\bf v}=C^{-1}_n(1/2) {\bf 1}$ is given by

$$ {\bf v}_i= \left( C^{-1}_n(1/2) {\bf 1} \right)_i = \frac{\det(H_{C_n(1/2),i}^{\bf 1})}{\det(C_n(1/2))},$$
where \( H_{C_n(1/2),i}^{\bf 1} \) is the matrix \( C_n(1/2) \) with the vector ${\bf 1}$ in the \( i \)-th column. Therefore, ${\bf v}_i = 0$ if and only if $H_{C_n(1/2),i}^{\bf 1}$ is singular.

Consider a fixed $n$-vector ${\bf w}$ such that $w_j \in \{ -1, 1\} ,
j=1,\ldots, n$. Define $H_{C_n(1/2),i}^{\bf w}$ as the matrix such that the $i$-th column of $C_n(1/2)$ has been replaced by ${\bf w}.$ We want to find the probability that $H_{C_n(1/2),i}^{\bf w}$ is regular.

Given $C_n(1/2), {\bf w}$ and the corresponding matrix $H_{C_n(1/2),
  i}^{\bf w},$ we can  multiply
$H_{C_n(1/2), i}^{\bf w}$ by a diagonal matrix $D$ whose diagonal entries are
all 1 and $-1$, with values -1 at the coordinates where we wish to flip the sign
in the $i$-th column. Moreover, as $D$ is invertible, we conclude that
$H_{C_n(1/2), i}^{\bf w}$ is regular if and only if $D H_{C_n(1/2), i}^{\bf w}$
is regular. Note that

$$ D H_{C_n(1/2), i}^{\bf w} = H_{C_n'(1/2), i}^{{\bf w}'},$$
where $C_n'(1/2)=D C_n(1/2)$ and ${\bf w}'=D{\bf w}$.  Since the values
$-1$ and $1$ have the same distribution, $C_n(1/2)$ and $C_n'(1/2)$ are equiprobable. This way we get a bijection between regular matrices $H_{C_n(1/2), i}^{\bf w}$ associated to different vectors ${\bf w}$. We deduce that

\begin{equation}\label{auxLema10}
\mathbb{P} \{ H_{C_n, i}^{\bf w} \text{ is regular}  \}  =
\mathbb{P} \{ H_{C_n,i}^{\bf w'} \text{ is regular}  \} , \forall \bf w, \bf w'.
\end{equation}

Let us define $\gH_n^{\bf w}$ as the set of matrices whose entries are
    independently distributed as Bernoulli(1/2) and whose $i$-th column is fixed to ${\bf w}.$ Thus defined,

    $$ \gC_n = \bigcup_{\bf w\in \{-1, 1 \}^n} \gH_n^{\bf w}.$$

On the other hand, using the last equality and applying the total probability theorem we obtain:

\[ \mathbb{P} \{ C_n(1/2) \in \gC_n \text{ is regular} \}=\sum_{{\bf w}\in \{-1, 1 \}^n} \mathbb{P} \{ H_{i}^{\bf w} \in \gH_n^{\bf w}\text{ is regular} \ \vert \ \text{column i-th equals } \bf w  \} \cdot \mathbb{P} \{ \text{column i-th equals } \bf w  \} \]

\[
=\dfrac{1}{2^n}\sum_{{\bf w}\in \{-1, 1 \}^n} \mathbb{P} \{ H_{i}^{\bf w}\in \gH_n^{\bf w} \text{ regular} \}=\mathbb{P} \{ H_{i}^{\bf w} \text{ regular} \}, \forall \bf w.
\]

By Lemma \ref{lem:regular1}, we conclude that

$$ \mathbb{P} \{  H_{C_n, i}^{\bf w} \text{ is regular} \} = \mathbb{P}
\{  C_n(1/2) \text{ is regular} \} \rightarrow 1.$$

Finally, by Eq. (\ref{auxLema10}), we conclude that

\[ \mathbb{P} \{ H_{C_n(1/2), i}^{\bf 1} \text{ regular} \} =\mathbb{P} \{ H_{C_n(1/2),i}^{\bf w} \text{ regular} \} \rightarrow 1.
\]
Therefore, the probability that ${\bf v}_i =0$ occurs tends to zero as \( n \to \infty \).
\end{proof}

This allows us to show that all orthants have the same probability as $n$ tends to infinity. In particular, we are interested in the probability of the positive orthant.

\begin{lemma}\label{lemma-aux1}
    Let $C_n(1/2)$ be a random matrix in $\gC_n$, whose entries are
    independently distributed as Bernoulli($p$) with $p=1/2$, and suppose that $C_n(1/2)$ is
    regular. Define ${\bf v}:=C^{-1}_n(1/2) {\bf 1}.$ Then, the asymptotic probability of each orthant is the same. In particular:
    $$\mathbb{P}({\bf v}> {\bf 0} ) \sim {1\over 2^n}.$$

\end{lemma}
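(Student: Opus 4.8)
The plan is to exploit a sign-flipping symmetry analogous to the one used in Lemma~\ref{lemma-aux2}, but now applied to rows rather than to a single column, in order to biject the event $\{{\bf v}\in \text{orthant } \epsilon\}$ with the event $\{{\bf v}\in \text{orthant } \epsilon'\}$ for any two sign patterns $\epsilon,\epsilon'\in\{-1,1\}^n$. First I would fix a sign pattern $\epsilon=(\epsilon_1,\ldots,\epsilon_n)$ and let $D_\epsilon=\operatorname{diag}(\epsilon_1,\ldots,\epsilon_n)$. The key observation is that if $C$ is a matrix in $\gC_n$ and ${\bf v}=C^{-1}{\bf 1}$, then for the matrix $\widetilde C := C D_\epsilon$ (which flips the signs of the columns of $C$ according to $\epsilon$, hence is again a matrix in $\gC_n$, equiprobable with $C$ since entries are Bernoulli$(1/2)$ on $\{-1,1\}$) we have $\widetilde C^{-1}{\bf 1} = D_\epsilon^{-1}C^{-1}{\bf 1} = D_\epsilon {\bf v}$, because $D_\epsilon^{-1}=D_\epsilon$. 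Thus the map $C\mapsto C D_\epsilon$ is a measure-preserving bijection on $\gC_n$ (restricted to regular matrices, since $D_\epsilon$ is invertible) that sends the event ``${\bf v}$ lies in orthant $\sigma$'' to the event ``${\bf v}$ lies in orthant $(\epsilon_1\sigma_1,\ldots,\epsilon_n\sigma_n)$''. Since $\epsilon$ is arbitrary, all $2^n$ orthants are equiprobable, conditionally on regularity.

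Next I would assemble the probabilities. Let $p_n$ denote the common probability that ${\bf v}$ lies in the interior of any fixed orthant, conditional on $C_n(1/2)$ being regular. The orthant interiors partition $\RR^n$ up to the coordinate hyperplanes $\{{\bf v}_i=0\}$, so
\[
2^n p_n + \mathbb{P}\big(\exists i:\ {\bf v}_i=0 \mid C_n(1/2)\text{ regular}\big) = 1.
\]
By Lemma~\ref{lemma-aux2}, $\mathbb{P}({\bf v}_i=0\mid C_n(1/2)\text{ regular})\to 0$ for each $i$, hence by a union bound over the $n$ coordinates the second term tends to $0$ (the bound $n\cdot o(1)$ still works because the rate in Lemma~\ref{lem:regular1}, and hence in Lemma~\ref{lemma-aux2}, decays faster than any polynomial — this is standard for the Komlós/Tikhomirov-type singularity bounds; if one prefers to be safe, one only needs the weaker statement that $\mathbb{P}(\exists i:{\bf v}_i=0)\to 0$, which follows from the explicit subexponential bounds). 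Therefore $2^n p_n = 1 - o(1)$, i.e. $p_n\sim 2^{-n}$. Applying this to the positive orthant gives $\mathbb{P}({\bf v}>{\bf 0}\mid C_n(1/2)\text{ regular})\sim 2^{-n}$, which is the claim (the conditioning on regularity being harmless since $\mathbb{P}(C_n(1/2)\text{ regular})\to 1$ by Lemma~\ref{lem:regular1}).

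The only genuinely delicate point is the union bound in the second step: one needs the per-coordinate null probability to go to zero fast enough that multiplying by $n$ still vanishes. I would handle this by invoking the quantitative form of Lemma~\ref{lem:regular1} — the singularity probability of a random $\pm1$ matrix is known to be at most $C^{-n}$ for some $C>1$ — which, threaded through the Cramér's-rule argument of Lemma~\ref{lemma-aux2}, yields $\mathbb{P}({\bf v}_i=0)\le C^{-n}$ and hence $n C^{-n}\to 0$. Alternatively, and more cleanly for the purposes of this paper, I would simply note that the union-bound issue can be sidestepped: since all $2^n$ orthants have exactly equal probability and these probabilities together with $\mathbb{P}(\exists i:{\bf v}_i=0)$ sum to $1$, and since $\mathbb{P}(\exists i:{\bf v}_i=0)\to 0$ follows directly from the subexponential singularity bound without any per-coordinate estimate, we get $2^n p_n\to 1$ directly. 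In either formulation the symmetry argument (step one) is routine; the arithmetic in step two is the part requiring the slightly stronger input than the bare statement of Lemma~\ref{lem:regular1}.
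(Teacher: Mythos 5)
Your proposal is correct and follows essentially the same route as the paper: the measure-preserving column sign-flip $C\mapsto CD_\epsilon$, which sends ${\bf v}$ to $D_\epsilon{\bf v}$ and hence equidistributes the $2^n$ orthants, combined with Lemma~\ref{lemma-aux2} to discard the coordinate hyperplanes, is exactly the paper's argument (your opening phrase ``rows rather than a single column'' is a slip, since right-multiplication by $D_\epsilon$ flips columns, as your own parenthetical then correctly states). The one place you go beyond the paper is in explicitly justifying the union bound over the $n$ events $\{{\bf v}_i=0\}$ via the quantitative $(C^{-n})$ singularity bound; the paper simply asserts that the boundary has vanishing probability, so your treatment is, if anything, more careful on that point.
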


\begin{proof}
Take a regular matrix $C_n(1/2)$ and consider the corresponding ${\bf v}.$ Lemma \ref{lemma-aux2} proves that the probability of the intersection of different orthants is 0 for large $n$. Therefore, we can focus on orthants with strict inequalities. Then, depending on $C_n(1/2),$ it follows that ${\bf v}$ might be in different
orthants.  However, note that there exists a bijection between the matrices associated with two different orthants. Indeed, we may map ${\bf v}$ to a different orthant by
multiplying ${\bf v}$ by a diagonal matrix $D$ whose diagonal entries are $\pm 1$, with value -1 at the coordinates where we wish to flip the sign. This way, we obtain

$${\bf v}' = D {\bf v} = D C_n^{-1}(1/2) {\bf 1}.$$

Note that $C_n'(1/2):= C_n (1/2)D^{-1}$ is a regular $-1/1$ square matrix, and because $p=1/2$, the distribution of its entries follows also Bernoulli(1/2). Since there is a bijection between regular matrices associated with different orthants, and the probability of each matrix is the same, we conclude that all orthants have the same probability. As there are $2^n$ orthants, the result follows.
\end{proof}

From the previous lemmas, we can deduce the asymptotic behavior of $M_n.$

\begin{lemma}\label{lem:regular3} The asymptotic growth of $M_n$ is
$$M_n \sim 2^{n^2-n+1}.$$
\end{lemma}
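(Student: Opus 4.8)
The plan is to compute $M_n$ by transporting the positivity constraint on $0/1$ matrices into the $-1/1$ world, where Lemmas~\ref{lem:regular1}--\ref{lemma-aux1} apply directly. The map $M\mapsto C:=2M-{\bf 1}{\bf 1}^T$ is a bijection from $0/1$ matrices onto $\gC_n$ sending a random $0/1$ matrix with i.i.d. Bernoulli$(1/2)$ entries to $C_n(1/2)$, and under it $M=(C+{\bf 1}{\bf 1}^T)/2$ is exactly the matrix $A_n(1/2)$ of Lemma~\ref{lem:regular2}. So I would rewrite $M_n=\#\{C\in\gC_n:\ (C+{\bf 1}{\bf 1}^T)/2\in\gM_n\}$ and split this count according to whether $C$ is regular or not.

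First I would establish the exact link on regular $C$. Put ${\bf v}:=C^{-1}{\bf 1}$ and $s:={\bf 1}^T{\bf v}$. Since $\det\!\big((C+{\bf 1}{\bf 1}^T)/2\big)=2^{-n}(\det C)(1+s)$, the matrix $M=(C+{\bf 1}{\bf 1}^T)/2$ is regular iff $s\neq-1$, and then Sherman--Morrison gives $M^{-1}{\bf 1}=\frac{2}{1+s}{\bf v}$. Hence $M^{-1}{\bf 1}>{\bf 0}$ iff (${\bf v}>{\bf 0}$ and $s>-1$) or (${\bf v}<{\bf 0}$ and $s<-1$). The condition ${\bf v}>{\bf 0}$ already forces $s=\sum_i v_i>-1$, and I claim ${\bf v}<{\bf 0}$ (with $C$ regular) already forces $s<-1$: applying the triangle inequality to each row equation of $-C$ (which is regular with $(-C)^{-1}{\bf 1}=-{\bf v}>{\bf 0}$) gives $\|{\bf v}\|_1\geq1$, and equality would force every row of $-C$ to be all-ones, i.e. $-C={\bf 1}{\bf 1}^T$, which is singular; so $s=-\|{\bf v}\|_1<-1$. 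Therefore, for every regular $C\in\gC_n$,
\[
(C+{\bf 1}{\bf 1}^T)/2\in\gM_n \iff {\bf v}>{\bf 0}\ \text{or}\ {\bf v}<{\bf 0}
\]
(if ${\bf v}$ has a zero coordinate then so does $M^{-1}{\bf 1}$, so $M\notin\gM_n$, consistently with the right-hand side). Counting is then immediate: by Lemma~\ref{lem:regular1} there are $\sim 2^{n^2}$ regular $C$, by Lemma~\ref{lemma-aux1} the fraction with ${\bf v}>{\bf 0}$ is $\sim 2^{-n}$, and the involution $C\mapsto-C$ (which preserves regularity and sends ${\bf v}\mapsto-{\bf v}$) bijects the regular matrices with ${\bf v}>{\bf 0}$ onto those with ${\bf v}<{\bf 0}$; since these two events are disjoint, $\#\{C\ \text{regular}:\ (C+{\bf 1}{\bf 1}^T)/2\in\gM_n\}\sim 2\cdot 2^{n^2-n}=2^{n^2-n+1}$.

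The main obstacle is the remaining piece $E_n:=\#\{C\in\gC_n:\ C\ \text{singular},\ (C+{\bf 1}{\bf 1}^T)/2\in\gM_n\}$, which must be shown to be $o(2^{n^2-n})$; note the naive bound $E_n\le\#\{C\ \text{singular}\}$ is useless, since the singular count is of order $n^2 2^{n^2-n}$, larger than the main term. Here I would use the conjunction: for $M$ invertible, $2M-{\bf 1}{\bf 1}^T$ is singular iff ${\bf 1}^TM^{-1}{\bf 1}=2$, so
\[
E_n=\#\{M\ 0/1\ \text{invertible}:\ M^{-1}{\bf 1}>{\bf 0},\ {\bf 1}^TM^{-1}{\bf 1}=2\}.
\]
The dominant singular $C$ have a pair of equal or a pair of opposite rows or columns. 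Equal rows (or columns) make $M=(C+{\bf 1}{\bf 1}^T)/2$ itself singular, hence outside $\gM_n$; opposite rows (or columns) make $M$ have a complementary pair of rows (columns), and there are only $O(n^2 2^{n^2-n})$ such $M$, among which $M$ invertible with $M^{-1}{\bf 1}>{\bf 0}$ remains a $2^{-n}$‑scale rare event, contributing $O(n^2 2^{n^2-2n})=o(2^{n^2-n})$. The residual singular configurations are of strictly lower order and are absorbed either by the same argument or by the sharp form of Lemma~\ref{lem:regular1} (\cite{tik20}). Combining, $M_n=2^{n^2-n+1}(1+o(1))$, i.e. $M_n\sim 2^{n^2-n+1}$. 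I expect controlling $E_n$ — in particular ruling out that the abundant ``two opposite rows/columns'' singular matrices produce positive $M^{-1}{\bf 1}$ — to be the delicate point of the proof.
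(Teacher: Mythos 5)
Your proposal follows essentially the same route as the paper: the correspondence $A=(C+{\bf 1}{\bf 1}^T)/2$ with $\pm 1$ matrices, the Sherman--Morrison identity $A^{-1}{\bf 1}=\tfrac{2}{1+s}C^{-1}{\bf 1}$, the deterministic implications ${\bf v}>{\bf 0}\Rightarrow s>-1$ and ${\bf v}<{\bf 0}\Rightarrow s<-1$ (your $\ell_1$ argument is a clean equivalent of the paper's manipulation of ${\bf 1}^T(C+nI_n)\tfrac{C^{-1}{\bf 1}}{n}$), and the orthant count yielding $2\cdot 2^{-n}$, with your involution $C\mapsto -C$ playing the role of Lemma~\ref{lemma-aux1} for the negative orthant. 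The one place you go beyond the paper is the explicit isolation of the term $E_n$ counting singular $C$ with $(C+{\bf 1}{\bf 1}^T)/2\in\gM_n$: the paper absorbs this silently when it replaces conditioning on ``$A_n(1/2)$ regular'' by conditioning on ``$C_n(1/2)$ regular'', a step that in fact requires the contribution of $\{A_n \text{ regular},\, C_n \text{ singular}\}$ to be $o(2^{-n})$, which does not follow from Lemma~\ref{lem:regular2} alone since the singularity probability of $C_n(1/2)$ is only controlled as $o(1)$ (and even the sharp $(1/2+o(1))^n$ bound of \cite{tik20} need not be $o(2^{-n})$). So your worry is legitimate and applies equally to the paper's own argument; your proposed resolution of it (equal rows of $C$ kill $A$, opposite rows give complementary rows of $A$ and an extra $2^{-n}$ factor, ``residual'' configurations of lower order) is a reasonable plan but is left at the level of a sketch, and as you yourself note this is the genuinely delicate point that neither your write-up nor the paper fully closes.
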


\begin{proof}
Consider the set of all random square binary matrices $A_n(1/2)$ of order
$n$, with entries independently distributed as Bernoulli($p$) with
$p=\frac{1}{2}$. Note that there are $2^{n^2}$ such matrices, all of them being equiprobable. Then, we can write $M_n$ as:

$$M_n= 2^{n^2} \mathbb{P} \{ A_n \text{ is regular and } A^{-1}_n {\bf 1} >{\bf 0} \}= 2^{n^2} \mathbb{P} \{ A^{-1}_n {\bf 1} >{\bf 0} \ \vert \ A_n \text{ is regular}  \} \cdot \mathbb{P} \{ A_n \text{ is regular} \}.$$

By Lemma \ref{lem:regular1}, we know that:

\[ \lim_{n \to \infty} \mathbb{P} \{ A_n \text{ is regular} \} = 1.\]

The rest of the proof consists of showing that as $n$ grows, we obtain the following asymptotic behavior:

$$ \mathbb{P} \{ A^{-1}_n {\bf 1} >{\bf 0} \ \vert \ A_n \text{ is regular}  \} \sim \dfrac{1}{2^{n-1}}.$$

Let $C_n(1/2)$ be a random matrix in $\gC_n$, whose entries are independently
distributed as Bernoulli(1/2). Assume that $C_n(1/2)$ is regular and let us consider ${\bf v}:=C^{-1}_n(1/2) {\bf 1}.$

Applying Lemma \ref{lemma-aux1}, we know that the probability of ${\bf v}$ being in a given orthant is the same for all orthants. Moreover, Lemma \ref{lemma-aux2} shows that the probability of ${\bf v}$ being in the border of an orthant tends to 0 when $n$ increases. Hence,

\begin{equation}\label{eq:p1}
 \mathbb{P} \{ C^{-1}_n(1/2) {\bf 1} >{\bf 0} \ \vert \ C_n(1/2) \text{ is
   regular}  \} \sim \dfrac{1}{2^{n}}.
 \end{equation}
Now, we use the correspondence between matrices $A_n(1/2)$ and matrices $C_n(1/2)$ via

$$A_n(1/2)=\left( C_n(1/2)+{\bf 1}{\bf 1}^T \right) /2.$$
Consequently,

$$ \mathbb{P} \{ A^{-1}_n(1/2) {\bf 1} >{\bf 0} \ \vert \ A_n(1/2) \text{ is regular}  \}= \mathbb{P} \{ \left( C_n(1/2)+{\bf 1}{\bf 1}^T \right)^{-1}{\bf 1} >{\bf 0} \ \vert \ A_n(1/2) \text{ is regular}  \}.$$
Now using the Sherman-Morrison formula \cite{shmo50} we get that when $C_n(1/2)+{\bf 1}{\bf 1}^T$ is regular, its inverse matrix is given by

$$\left( C_n(1/2)+{\bf 1}{\bf 1}^T \right)^{-1}=C^{-1}_n(1/2)-\dfrac{C^{-1}_n(1/2){\bf 1}{\bf 1}^T C^{-1}_n(1/2)}{1+{\bf 1}^T C^{-1}_n(1/2) {\bf 1}}.$$
Moreover by Sherman-Morrison formula, if $C_n(1/2)$ is regular, then $C_n(1/2)+{\bf 1}{\bf 1}^T$ is regular if and only if ${\bf 1}^T C^{-1}_n(1/2){\bf 1}\neq -1.$ Observe that for large $n$, by Lemma \ref{lem:regular2}, the probability that both $A_n(1/2)$ and $C_n(1/2)$ are regular tends to 1. Thus, $2A_n(1/2)= C_n(1/2)+{\bf 1}{\bf 1}^T$ is regular with probability tending to 1, implying that the probability of ${\bf 1}^T C^{-1}_n(1/2) {\bf 1}$ having the exact value of $-1$ tends to zero. When $C_n(1/2) + {\bf 1} {\bf 1}^T$ is regular, note that:

\begin{eqnarray*}
\left( C_n(1/2)+{\bf 1}{\bf 1}^T \right)^{-1}{\bf 1} & = & C^{-1}_n(1/2){\bf 1}-\dfrac{C^{-1}_n(1/2){\bf 1}{\bf 1}^T C^{-1}_n(1/2){\bf 1}}{1+{\bf 1}^T C^{-1}_n(1/2) {\bf 1}} \\
& = & C^{-1}_n(1/2){\bf 1}\left( 1 - \dfrac{{\bf 1}^T C^{-1}_n(1/2){\bf 1}}{1+{\bf 1}^T C^{-1}_n(1/2){\bf 1}} \right) \\
& = & \dfrac{C^{-1}_n(1/2){\bf 1}}{1+{\bf 1}^T C^{-1}_n(1/2){\bf 1}}.
\end{eqnarray*}
This way, for large $n$ we get:

$$\mathbb{P} \{ \left( C_n(1/2)+{\bf 1}{\bf 1}^T \right)^{-1}{\bf 1} >{\bf 0} \ \vert \ A_n(1/2) \text{ is regular}  \} \sim $$
$$ \mathbb{P} \left\lbrace  \dfrac{C^{-1}_n(1/2){\bf 1}}{1+{\bf 1}^T C^{-1}_n(1/2){\bf 1}} >{\bf 0} \ \vert \ C_n(1/2) \text{ is regular}, \ {\bf 1}^T C^{-1}_n(1/2){\bf 1}\neq -1  \right\rbrace .$$
For simplicity of notation, we omit the conditioning on $C_n(1/2) \text{ is regular}$ and ${\bf 1}^T C^{-1}_n(1/2){\bf 1}\neq -1 $ in the following expression. We have:

\begin{eqnarray*}
\mathbb{P} \left\lbrace  \dfrac{C^{-1}_n(1/2){\bf 1}}{1+{\bf 1}^T C^{-1}_n(1/2){\bf 1}} >{\bf 0}  \right\rbrace & = & \mathbb{P} \left\lbrace  C^{-1}_n(1/2){\bf 1} >{\bf 0}, \ 1+{\bf 1}^T C^{-1}_n(1/2){\bf 1}>0  \right\rbrace \\
& & + \mathbb{P} \left\lbrace  C^{-1}_n(1/2){\bf 1} <{\bf 0}, \ 1+{\bf 1}^T C^{-1}_n(1/2){\bf 1}<0  \right\rbrace.
\end{eqnarray*}
Remark that

$$1+{\bf 1}^T C^{-1}_n(1/2){\bf 1}=\dfrac{{\bf 1}^T I_n {\bf 1}}{n}+ {\bf 1}^T C^{-1}_n(1/2){\bf 1}={\bf 1}^T \left( C^{-1}_n(1/2) + \frac{I_n}{n} \right) {\bf 1}={\bf 1}^T (C_n(1/2)+nI_n)\cdot\frac{C^{-1}_n(1/2)}{n}{\bf 1}.$$ Observe that ${\bf 1}^T (C_n(1/2)+nI_n)={\bf
    1}^TC_n(1/2)+n{\bf 1}^T\geqslant {\bf 0}$, because the components of ${\bf
    1}^TC_n(1/2)$ are at most equal to $-n$. Moreover, all components of ${\bf
    1}^TC_n(1/2)+n{\bf 1}^T$ cannot be
  equal to 0 since this happens iff $C_n(1/2)={\bf 1}{\bf 1}^T$, which is not
  regular. We conclude that

$$ \mathbb{P} \left\lbrace    1+{\bf 1}^T C^{-1}_n(1/2){\bf 1}<0 \ \vert \ C^{-1}_n(1/2){\bf 1} <{\bf 0}  \right\rbrace =1,$$
 and similarly for the reverse inequalities. Therefore, using
  (\ref{eq:p1}),
$$ \mathbb{P} \left\lbrace    1+{\bf 1}^T C^{-1}_n(1/2){\bf 1}>0 \ \vert \ C^{-1}_n(1/2){\bf 1} >{\bf 0}  \right\rbrace \cdot \mathbb{P} \left\lbrace   C^{-1}_n(1/2){\bf 1} >{\bf 0} \right\rbrace \sim 1\cdot \dfrac{1}{2^n}=\dfrac{1}{2^n},$$

$$ \mathbb{P} \left\lbrace    1+{\bf 1}^T C^{-1}_n(1/2){\bf 1}<0 \ \vert \ C^{-1}_n(1/2){\bf 1} <{\bf 0}  \right\rbrace \cdot \mathbb{P} \left\lbrace   C^{-1}_n(1/2){\bf 1} <{\bf 0} \right\rbrace \sim 1\cdot \dfrac{1}{2^n}=\dfrac{1}{2^n}.$$
Finally,

$$\mathbb{P} \left\lbrace  \dfrac{C^{-1}_n(1/2){\bf 1}}{1+{\bf 1}^T C^{-1}_n(1/2){\bf 1}} >{\bf 0} \ \vert \ C_n(1/2) \text{ is regular}, \ {\bf 1}^T C^{-1}_n(1/2){\bf 1}\neq -1  \right\rbrace \sim \dfrac{1}{2^{n}}+ \dfrac{1}{2^{n}} =\dfrac{1}{2^{n-1}},$$ as desired.
\end{proof}

In Figure \ref{fig:prob_mc}, we can see a Monte Carlo estimation of the probability $ \mathbb{P} \{ A^{-1}_n(1/2) {\bf 1} >{\bf 0} \ \vert \ A_n(1/2) \text{ is regular}  \}$ and the value $1/2^{n-1}$. The graph is in logarithmic scale. We observe that these values indeed converge as $n$ increases.

\begin{figure}[h]
\begin{centering}
\includegraphics[width=0.8\columnwidth]{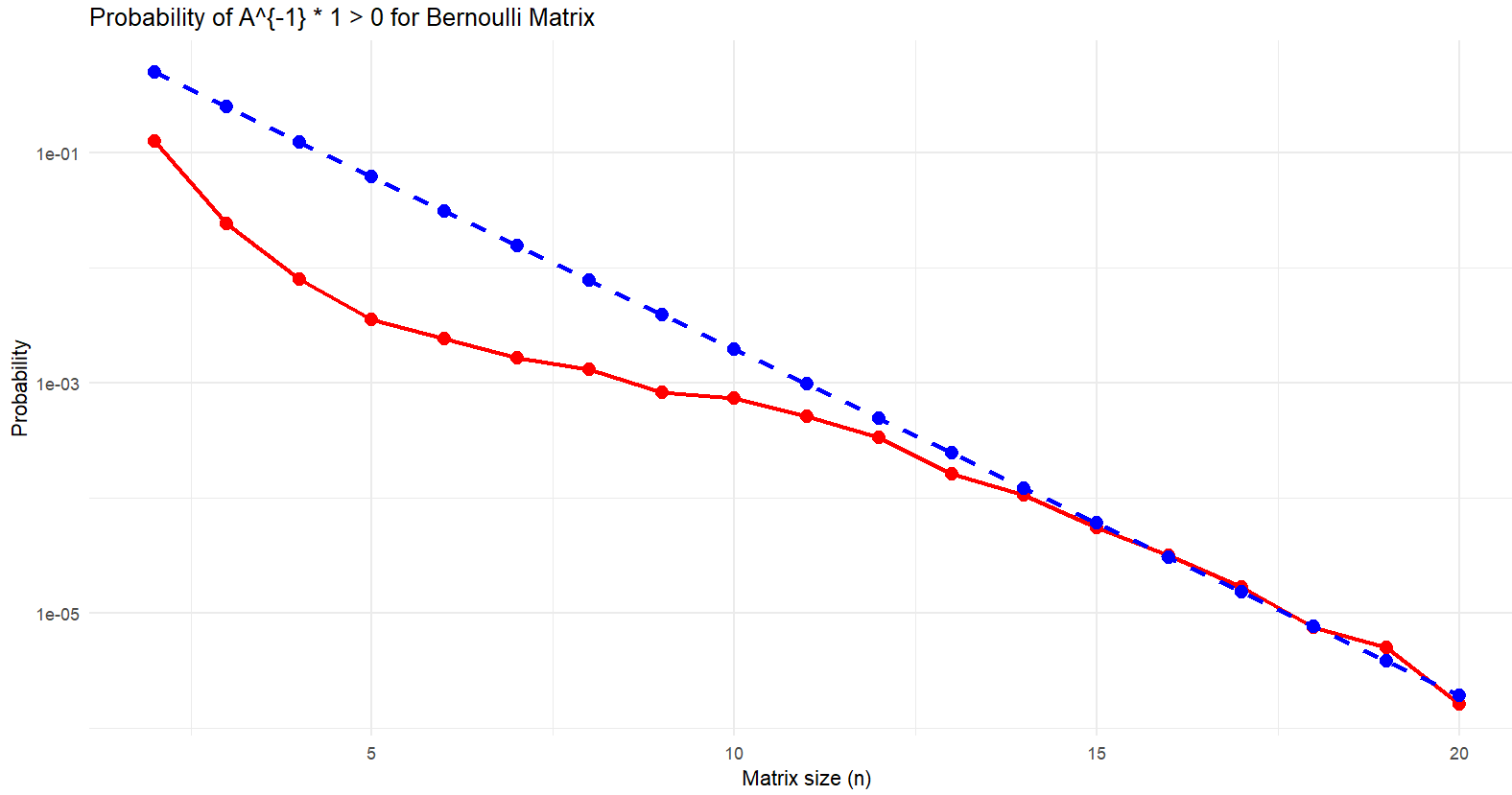}
\caption{Comparison between estimated probability (red) and $1/2^{n-1}$ (blue) in logarithmic scale.}\label{fig:prob_mc}
\end{centering}
\end{figure}

Once we have the previous lemma, we can relate $M_n$ with $B_n^n$. Observe that any minimal balanced collection with support $n$ can be associated to a regular $n$-dimensional matrix $C_n\in \gM_n$ because

$$ \bm \lambda = C_n^{-1} \bm 1 > \bm 0.$$

However, the order in which we arrange the sets in the m.b.c. within
the columns of matrix in $C_n$ does not matter, as they define the same minimal balanced collection. Thus, by dividing by $n!$ to remove this ordering choice, we get:

$$B_n^n=\dfrac{M_n}{n!}.$$

Now using Lemma \ref{lem:regular3} we obtain the asymptotic growth of $B_n^n$, which we present in the following theorem.

\begin{theorem}\label{theo2}
Consider a referential set of $n$ elements and let $B_n^n$ denote the number of minimal balanced collections over this referential set consisting of $n$ subsets. Then,
$B_n^n$ has the following asymptotic growth:

$$B_n^n \sim \dfrac{2^{n^2-n+1}}{n!}.$$
\end{theorem}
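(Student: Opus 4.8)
The plan is to obtain Theorem~\ref{theo2} directly from two facts already in hand: the combinatorial identity $M_n = n!\,B_n^n$ and the asymptotic estimate $M_n \sim 2^{n^2-n+1}$ of Lemma~\ref{lem:regular3}. Essentially all the work lies in the lemmas leading to Lemma~\ref{lem:regular3}; what remains is to make the identity precise and to check that passing from $M_n$ to $B_n^n$ does not disturb the asymptotic equivalence.

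First I would establish $M_n = n!\,B_n^n$ carefully. Given $\cB\in\gB^*(n)$ with $|\cB|=n$, any ordering $S_1,\dots,S_n$ of its members yields an $n\times n$ binary matrix $M_\cB$ whose $j$-th column is the indicator vector $1^{S_j}$; since the balancing weights of a m.b.c. are unique and strictly positive, $M_\cB$ is invertible with $M_\cB^{-1}{\bf 1}>{\bf 0}$, so $M_\cB\in\gM_n$. Conversely, any $M\in\gM_n$ has pairwise distinct nonzero columns (a repeated or a zero column would make $M$ singular), which are therefore the indicator vectors of $n$ distinct nonempty subsets $S_1,\dots,S_n$ of $N$; the strictly positive vector $M^{-1}{\bf 1}$ provides balancing weights for $\{S_1,\dots,S_n\}$, so this collection is balanced, and it is minimal because any proper balanced subcollection, extended by zero weights, would be a second solution of $M{\bm\lambda}={\bf 1}$ with a vanishing entry, contradicting uniqueness together with positivity. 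Finally, the $n!$ orderings of a fixed m.b.c. of size $n$ give $n!$ pairwise distinct matrices in $\gM_n$, and matrices arising from two distinct m.b.c. of size $n$ are never equal (a matrix determines the set of its columns, hence the collection). Hence $M_n = n!\,B_n^n$.

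Then I would divide the asymptotics by $n!$. By Lemma~\ref{lem:regular3}, the ratio $M_n/2^{n^2-n+1}$ tends to $1$; this ratio is unchanged when numerator and denominator are both divided by the common factor $n!$, so $(M_n/n!)$ and $(2^{n^2-n+1}/n!)$ are asymptotically equivalent. Substituting $B_n^n = M_n/n!$ gives $B_n^n \sim 2^{n^2-n+1}/n!$, which is the claim.

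I do not expect a genuine obstacle at this stage: the delicate estimate — that a uniformly random invertible $0/1$ matrix $A_n$ satisfies $A_n^{-1}{\bf 1}>{\bf 0}$ with conditional probability asymptotic to $1/2^{n-1}$ — is precisely Lemma~\ref{lem:regular3}, already proved. The only points deserving a line of care are the two directions of the correspondence underlying $M_n = n!\,B_n^n$, namely that no matrix of $\gM_n$ has a repeated or zero column and that uniqueness of the balancing solution forces minimality of the associated collection; both follow immediately from invertibility of the matrix.
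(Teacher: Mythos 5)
Your proposal is correct and follows essentially the same route as the paper, which likewise derives the theorem from the identity $B_n^n = M_n/n!$ combined with Lemma~\ref{lem:regular3}. In fact you spell out the bijection between $\gM_n$ and ordered m.b.c.'s of size $n$ (distinctness and nonvanishing of columns, and minimality via uniqueness of the solution to $M\bm\lambda=\bm 1$) more carefully than the paper does, which only sketches this correspondence.
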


Next step is to study the asymptotic growth of all the minimal balanced collections $B_n$. We will prove that the largest term of $B_n$ is $B^n_n$ and that the terms associated with smaller supports $m<n$ grow at a slower rate.

\begin{theorem}\label{theo3}
$B_n$ has the same asymptotic growth as $B^n_n$:
$$B_n \sim \dfrac{2^{n^2-n+1}}{n!}.$$
\end{theorem}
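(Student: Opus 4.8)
\emph{Strategy.} The plan is to show that $\sum_{m=2}^{n-1} B_n^m = o(B_n^n)$. Since $B_n = \sum_{m=2}^n B_n^m$ and $B_n^n \sim 2^{n^2-n+1}/n!$ by Theorem~\ref{theo2}, this gives at once $B_n \sim 2^{n^2-n+1}/n!$. Everything hinges on a sufficiently sharp upper bound for $B_n^m$ when $m<n$, and this bound must exploit \emph{both} the asymptotics of $M_m$ from Lemma~\ref{lem:regular3} and the fact that a m.b.c.\ is forced by $n-m$ extra linear constraints once its ``square part'' is fixed.

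\emph{The key estimate.} I claim that for $2\le m\le n$,
\[
B_n^m \;\le\; \frac{1}{m!}\binom{n}{m}\, M_m\, 2^{(m-1)(n-m)}.
\]
To see this, fix $\cB\in\gB^*(n)$ with $|\cB|=m$. As m.b.c.\ have unique balancing weights, the incidence matrix $M_\cB\in\{0,1\}^{n\times m}$ has full column rank $m$, so there is a set $T\subseteq N$ of $m$ players whose rows form an invertible $m\times m$ submatrix $R$. Writing $\lambda$ for the balancing weights, restriction of $M_\cB\lambda={\bf 1}$ to the rows in $T$ yields $R\lambda={\bf 1}$, whence $\lambda=R^{-1}{\bf 1}>{\bf 0}$ and thus $R\in\gM_m$. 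For every player $i\notin T$, the $i$-th row of $M_\cB$ is a point of $\{0,1\}^m$ lying on the affine hyperplane $H=\{x\in\RR^m:\langle x,\lambda\rangle=1\}$; since $\lambda\ne{\bf 0}$, fixing a coordinate on which $\lambda$ does not vanish shows that, for each of the $2^{m-1}$ choices of the remaining coordinates, that coordinate is determined, so $|H\cap\{0,1\}^m|\le 2^{m-1}$. Now there are $m!\,B_n^m$ tuples $(S_1,\dots,S_m)$ whose underlying set forms a m.b.c.\ of size $m$, and every such tuple is counted at least once when we range over the $\binom{n}{m}$ choices of $T$, the $M_m$ choices of $R\in\gM_m$, and, for each of the $n-m$ players outside $T$, the at most $2^{m-1}$ admissible rows (the right-hand side may also count configurations that are not m.b.c.'s, which only helps). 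This proves the claim.

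\emph{Finishing.} By Lemma~\ref{lem:regular3} there is $c>0$ with $M_m\le c\,2^{m^2-m+1}$ for all $m$. Substituting into the key estimate and using the identity $m^2-m+1+(m-1)(n-m)=n(m-1)+1$, we get $B_n^m\le \dfrac{c}{m!}\binom{n}{m}2^{n(m-1)+1}$. Dividing by $B_n^n\sim 2^{n^2-n+1}/n!$ and setting $k=n-m$, the $m$-th ratio is at most $(1+o(1))\,c\,\dfrac{(n!)^2}{((n-k)!)^2\,k!}\,2^{-nk}\le \dfrac{(1+o(1))c}{k!}\bigl(n^2 2^{-n}\bigr)^k$, using $n!/(n-k)!\le n^k$. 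Summing over $k=1,\dots,n-2$ and comparing with $\sum_{k\ge 1}(n^2 2^{-n})^k/k!=e^{n^2 2^{-n}}-1\to 0$, we conclude $\sum_{m=2}^{n-1} B_n^m/B_n^n\to 0$, as wanted.

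\emph{Main obstacle.} The only delicate point is getting the exponent in the key estimate exactly right: it is essential to bound the square submatrix $R$ by $M_m\sim 2^{m^2-m+1}$ rather than by the trivial count $2^{m^2}$. With the trivial bound the exponent becomes $n(m-1)+m$, and for $m=n-1$ the resulting ratio to $B_n^n$ grows like $n^2/4$ instead of tending to $0$; it is precisely the ``$-m+1$'' in Lemma~\ref{lem:regular3} (equivalently, the positivity-of-weights constraint built into $\gM_m$) that turns a divergent bound into a convergent one. The remaining ingredients --- the hyperplane count, the overcounting argument, and the final series --- are routine.
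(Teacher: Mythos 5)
Your proof is correct and follows the same overall strategy as the paper: bound $B_n^m$ for $m<n$ by decomposing the $n\times m$ incidence matrix of a m.b.c.\ into an invertible square block lying in $\gM_m$ plus $n-m$ dependent rows, then show the resulting bound is negligible against $B_n^n\sim 2^{n^2-n+1}/n!$. There are, however, two substantive differences in how you execute the counting, and both are improvements. First, you explicitly charge a factor $\binom{n}{m}$ for the choice of which $m$ player-rows form the invertible block $R$; the paper's bound $B_n^m\le M_m(2^m-1)^{n-m}/m!$ omits this factor, so its decomposition map is not visibly injective as stated (the pair $(B,\text{remaining rows})$ does not determine the matrix without the row positions) --- a gap that is fixable but real. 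Second, you compensate for this extra binomial factor by bounding each remaining row not by the $2^m-1$ nonzero $0/1$ vectors (as the paper does) but by the at most $2^{m-1}$ points of $\{0,1\}^m$ on the hyperplane $\langle x,\lambda\rangle=1$, exploiting that the balancing condition must hold for the players outside $T$ as well. The net effect is a slightly weaker numerical bound than the paper's, but one whose derivation is airtight, and your final summation $\sum_{k\ge1}(n^2 2^{-n})^k/k!\to 0$ closes the argument cleanly (the paper instead maximizes the ratio at $m=n-1$ and sums a constant bound, with a minor arithmetic slip $n!/(n-1)!=n$ written as $1/n$ that does not affect convergence). Your observation that the ``$-m+1$'' in $M_m\sim 2^{m^2-m+1}$ is what makes the estimate converge is accurate and matches the role Lemma~\ref{lem:regular3} plays in the paper.
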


\begin{proof}
The first step consists in bounding the terms $B^m_n$ with $m<n$. Observe that
if we have a minimal balanced collection $\cB$ with $m$ subsets, we can
represent it as a binary matrix $M_{\cB}$ with $m$ columns and $n$ rows. Since it is a minimal balanced collection, the rank of the matrix must be $m$, and there is a unique positive solution for system $M_{\cB}\bm \lambda = \bm 1$.

Thus, we can decompose matrix $M_{\cB}$ into a square matrix $B$ satisfying that
$B$ is regular and that $B^{-1} \bm 1 >\bm 0,$ along with other rows that
are linearly dependent. Hence, $B\in \gM_m$. For the second part of this
decomposition, it consists of the remaining $n-m$ rows that are linear
combinations of the $m$ rows in $B.$ Since each row consists of 0/1 entries, there are $2^m-1$ combinations (excluding the one corresponding
to all zeros). This serves as an upper bound, as there may be cases which are not associated to m.b.c). Furthermore,
since the order of the columns in $B$ does not matter, we can divide by $m!$, obtaining the following upper
bound:

$$B^m_n \leq \dfrac{M_m (2^m-1)^{n-m}}{m!}.$$

Let us compare this bound with the corresponding asymptotic growth for $B_n^n,$ when both $m$ and $n$ grow.

$$\dfrac{B^m_n}{B^n_n} \leq \dfrac{\dfrac{ M_m (2^m-1)^{n-m}}{m!}}{\dfrac{2^{n^2-n+1}}{n!}}\sim \dfrac{\dfrac{ 2^{m^2-m+1} 2^{m(n-m)}}{m!}}{\dfrac{2^{n^2-n+1}}{n!}} = \dfrac{\dfrac{ 2^{m(n-1)}}{m!}}{\dfrac{2^{n^2-n}}{n!}} \leqslant \dfrac{\dfrac{ 2^{(n-1)^2}}{(n-1)!}}{\dfrac{2^{n^2-n}}{n!}}=\dfrac{2^{-n+1}}{n} \rightarrow 0.$$

Therefore, $B_n^m$ will become negligible for large $n$ in terms of $B_n^n,$ and the asymptotic growth of $B_n$ will be determined by the term $B_n^n.$ In other words,

$$\dfrac{B_n}{B_n^n}=\dfrac{{\displaystyle \sum_{m=2}^n B^m_n}}{B_n^n}=1+\sum_{m=2}^{n-1} \dfrac{B^m_n}{B_n^n} \leq 1+\sum_{m=2}^{n-1} \dfrac{1}{n2^{n-1}} \rightarrow 1,$$
and the result holds.
\end{proof}

Joining Th. \ref{theo2} and \ref{theo3}, we obtain the desired limit as a corollary.

\begin{corollary}\label{cor2}
 The proportion of facets containing games with a non-singleton core tends
  to 0 when $n$ tends to infinity:

\[ \lim_{n \to \infty} \phi_n = 0.\]
\end{corollary}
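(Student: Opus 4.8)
The plan is to deduce this limit directly from Theorems~\ref{theo2} and \ref{theo3}. First I would rewrite
\[
\phi_n = \frac{B_n - B_n^n}{B_n} = 1 - \frac{B_n^n}{B_n},
\]
so that proving $\phi_n \to 0$ is the same as proving $B_n^n/B_n \to 1$, i.e., $B_n \sim B_n^n$. But Theorem~\ref{theo2} gives $B_n^n \sim \frac{2^{n^2-n+1}}{n!}$ and Theorem~\ref{theo3} gives $B_n \sim \frac{2^{n^2-n+1}}{n!}$; since $\sim$ is an equivalence relation (in particular symmetric and transitive), these combine at once to $B_n \sim B_n^n$, hence $\phi_n \to 0$.

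If one prefers a slightly more hands-on argument, the required estimate is already contained in the proof of Theorem~\ref{theo3}, where it is shown that $B_n^m/B_n^n \leqslant \frac{2^{-n+1}}{n}$ for each $m$ with $2 \leqslant m \leqslant n-1$. Summing over these values of $m$,
\[
1 \leqslant \frac{B_n}{B_n^n} = 1 + \sum_{m=2}^{n-1} \frac{B_n^m}{B_n^n} \leqslant 1 + \frac{n-2}{n\,2^{n-1}},
\]
and the right-hand side tends to $1$; taking reciprocals yields $B_n^n/B_n \to 1$, so $\phi_n = 1 - B_n^n/B_n \to 0$.

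There is no genuine obstacle left at this stage: all the substance lies in the preceding results --- the random-matrix nonsingularity facts (Lemmas~\ref{lem:regular1}--\ref{lemma-aux1}) feeding the asymptotics $M_n \sim 2^{n^2-n+1}$ of Lemma~\ref{lem:regular3}, the identity $B_n^n = M_n/n!$, and the bound $B_n^m \leqslant \frac{M_m (2^m-1)^{n-m}}{m!}$ that controls the lower-support terms. The corollary is merely the combinatorial payoff one wants: by Theorem~\ref{th:bgfacet} the facets of $\cBG_\alpha(n)$ are in bijection with minimal balanced collections, and by Theorem~\ref{th:pcbg} a facet consists entirely of games with a singleton core exactly when its collection has size $n$; hence $\phi_n$ is precisely the proportion of minimal balanced collections of size strictly less than $n$, which the asymptotics above force to $0$.
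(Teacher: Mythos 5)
Your proposal is correct and matches the paper's own treatment: the paper likewise obtains Corollary~\ref{cor2} by simply combining Theorems~\ref{theo2} and \ref{theo3}, which both give the asymptotic $\tfrac{2^{n^2-n+1}}{n!}$, so that $B_n \sim B_n^n$ and hence $\phi_n = 1 - B_n^n/B_n \to 0$. Your more explicit summation of the bounds $B_n^m/B_n^n$ is just the content of the paper's proof of Theorem~\ref{theo3} restated, so there is no substantive difference in approach.
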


\subsection{Proportion of faces whose games have a singleton core}\label{sec:fasi}

The results in the previous subsection show that the proportion of facets with support smaller than $n$ tend to zero as $n$ grows. On the other hand, as we have seen in previous sections with simulations, when projecting a game $v$ to obtain $v^*$, the game $v^*$ tends to lie on faces of any dimension; in other words, the probability of $v^*$ lying in a face instead of a facet does not tend to 0. Therefore, if we want to discuss the probability that the projection $v^*$ has a singleton core, we must consider faces of all dimensions and not only facets.

Since the probability of facets with multiple core tends to zero as $n$ increases, and given that a face is an intersection of facets, games in this face have a single-point core if at least one of the facets defining the face satisfies this condition, i.e., if this face is contained in a facet where all games have a singleton core (a facet whose corresponding minimal balanced collection has $n$ subsets).

We also need to take into account that a face can be written in several different ways as an intersection of facets. This is the case for example for $n=3$ and we consider the intersection of four facets. This intersection is always the same and it is also the intersection of the five facets in the polyhedron. By default, we will associate each face with the intersection involving the maximum number of facets. For example, the linearity space is associated with the intersection of all facets, even though it can be expressed as the intersection of fewer facets.

\begin{proposition}\label{prop:conv_faces}
Consider a referential set of $n$ elements and consider the polyhedron $\cBG(n).$ Let us denote by $\varphi_n$ the proportion of faces in $\cBG(n)$ having a multiple core. Then,

\[ \lim_{n \to \infty} \varphi_n = 0.\]
\end{proposition}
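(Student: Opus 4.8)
The plan is to bound the number of faces with multiple core by the number of faces that are \emph{not} contained in any ``good'' facet (a facet whose m.b.c.\ has exactly $n$ subsets), and to show this quantity is negligible compared to the total number of faces. Recall that a face is an intersection of facets, hence can be identified with a \emph{subset} of $\gB^*(n)$ (using the convention of taking the maximal such subset). By Theorem~\ref{th:pcbg1}, a face has singleton core as soon as the union of the associated m.b.c.\ has rank $n$; in particular, if the face is contained in at least one good facet, it has a singleton core. So $\varphi_n$ is at most the proportion of faces whose defining collection of facets contains \emph{no} good facet.

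The key combinatorial step is a counting argument. Write $G_n = B_n^n$ for the number of good facets and $b_n = B_n - B_n^n = \phi_n B_n$ for the number of ``bad'' facets (m.b.c.\ of size $<n$). A face avoiding all good facets corresponds to a collection of facets drawn entirely from the bad ones; there are at most $2^{b_n}$ such collections (a very crude bound, not worrying about which subsets actually give distinct faces, nor about maximality). On the other hand, a lower bound for the total number of \emph{distinct} faces: I would like to say it is at least something like $2^{G_n}$ — e.g.\ because distinct subsets of good facets, being facets on which the balancing systems impose independent equalities, tend to cut out distinct faces — but one must be a little careful here since different subsets of good facets could in principle yield the same face. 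A safe route is to instead bound the number of faces \emph{contained in a fixed good facet} $\cF_0$: these are in bijection with faces of the polyhedron $\cF_0$ itself, and since $\cF_0$ is a facet of a polyhedron with $B_n$ facets, it has at least, say, $B_n - 1$ facets of its own (the traces of the other facets), hence at least that many faces; summing a weaker but sufficient estimate over good facets, or simply using that the single facet $\cF_0$ already has exponentially many faces when $n$ is large, gives a lower bound on the total number of faces that is at least $2^{c\,G_n}$ for some constant $c>0$ once $n$ is large. Then
\[
\varphi_n \;\leqslant\; \frac{2^{b_n}}{2^{c\,G_n}} \;=\; 2^{\,b_n - c\,G_n}.
\]
From Theorems~\ref{theo2} and~\ref{theo3} we have $G_n = B_n^n \sim 2^{n^2-n+1}/n!$ and $B_n \sim 2^{n^2-n+1}/n!$, so $b_n = B_n - B_n^n = o(G_n)$; indeed $b_n/G_n \leqslant \sum_{m=2}^{n-1} B_n^m/B_n^n \to 0$ by the estimate already established in the proof of Theorem~\ref{theo3}. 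Hence $b_n - c\,G_n \to -\infty$, and therefore $\varphi_n \to 0$.

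\textbf{Main obstacle.} The delicate point is obtaining an honest \emph{lower} bound on the number of distinct faces that grows fast enough — specifically, exponentially in $G_n$ (or at least in some fixed fraction of $G_n$) — while the upper bound $2^{b_n}$ on bad faces is already clean. The subtlety is that many subsets of facets may collapse to the same face, so one cannot naively claim $2^{G_n}$ distinct faces. The cleanest fix is probably to work inside a single fixed good facet $\cF_0$: its own face lattice injects into the face lattice of $\cBG(n)$, and $\cF_0$ is a polyhedron of dimension $2^n-1-1$ with many facets, so it has super-exponentially many faces — in particular more than $2^{b_n}$ for $n$ large, since $b_n = o(2^{n^2-n+1}/n!)$ while the number of faces of $\cF_0$ is at least exponential in its number of facets, which is of the same order as $B_n$. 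One must check that $\cF_0$ genuinely has $\Omega(B_n)$ facets of its own (the traces $\cF_0 \cap \cF$ of distinct facets $\cF$ are generically distinct faces of $\cF_0$, though some may be empty or coincide) — but even a linear-in-$n^2$ exponent in the count of faces of $\cF_0$ dominates $2^{b_n}$, so a fairly rough argument suffices. Alternatively, one avoids the face-lattice subtlety entirely by noting that \emph{every} face contained in some good facet has singleton core, so it is enough to bound the proportion of faces contained in \emph{no} good facet, and then the inclusion–exclusion-free bound $\#\{\text{such faces}\} \leqslant 2^{b_n}$ against $\#\{\text{all faces}\} \geqslant \#\{\text{faces of }\cF_0\}$ closes the argument.
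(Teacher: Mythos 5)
You are right that the decisive step is the one you flag, but the gap there is not a technicality about subsets collapsing to the same face — the lower bound you need is simply impossible. Your reduction (a face contained in a good facet has a singleton core, by Theorems~\ref{th:pcbg} and~\ref{th:pcbg1}) is correct and is also the paper's starting point. However, $\cBG_\alpha(n)$ is a polyhedron of dimension $d=2^n-2$, and every nonempty face of a $d$-dimensional polyhedron can be written as an intersection of at most $d$ of its facets (descend a maximal chain in the face lattice, adding one facet per step), so the total number of faces is at most $\sum_{k=0}^{d}\binom{B_n}{k}\leqslant (d+1)B_n^{d}=2^{O(n^2 2^n)}$. Since $G_n=B_n^n\sim 2^{n^2-n+1}/n!$ grows incomparably faster than $n^2 2^n$, no bound of the form $2^{c G_n}$ on the total face count can hold for any $c>0$. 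The same dimension cap refutes the fallback claim that a single good facet $\cF_0$ has ``at least exponentially many faces in its number of facets'': face counts are polynomial in the number of facets with exponent governed by the dimension, and here the dimension ($\approx 2^n$) is minuscule compared to the number of facets ($\approx 2^{n^2-n}/n!$). For the same reason, the upper bound $2^{b_n}$ on the number of bad faces is far too weak to be useful — $b_n$ is only known to be $o(B_n^n)$, not $O(n^2 2^n)$, so $2^{b_n}$ very plausibly already exceeds the total number of faces of the polyhedron — and thus the comparison $2^{b_n}$ versus $2^{cG_n}$ cannot be closed from either side.

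The conceptual error is that summing over all cardinalities $k$ at once replaces the per-$k$ ratio $\binom{b_n}{k}/\binom{B_n}{k}\approx\phi_n^{k}\to 0$, which is what actually carries the information from Corollary~\ref{cor2}, by a ratio of total subset counts that says nothing about faces. The paper's proof keeps $k$ fixed: for each $k$ it compares the number of distinct faces arising as intersections of $k$ bad facets with the number arising as intersections of $k$ facets at least one of which is good, introducing the normalizing proportions $\gamma_{n,k}$, $\gamma_{n,k}^{n}$, $\gamma_{n,k}^{<n}$ precisely to absorb the collapse of distinct facet-subsets onto the same face, and only then aggregates over $k$. Any repair of your argument must likewise control that collapse uniformly between the ``all bad'' and ``at least one good'' families of intersections; a raw count of subsets of facets cannot substitute for it.
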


\begin{proof}
Let us denote by $\gamma_{n,k}$ the proportion of different faces coming from the intersection of $k$ facets (so that they cannot be expressed as the intersection of more faces) for a referential set of $n$ elements. Note that there are ${B_n\choose k}$ possible intersections. Hence,

$$ \gamma_{n,k}={\# \text{Different faces if we intersect k facets} \over {B_{n}\choose k}}.$$

Similarly, we denote by $\gamma_{n,k}^{n}$ the proportion of faces coming from the intersection of $k$ facets such that at least one of them comes from a minimal balanced collection of $n$ elements and by $\gamma_{n,k}$ the proportion of faces coming from the intersection of $k$ facets, all of them defined by minimal balanced collections of less than $n$ elements. If we denote $B_n^{<n}:= B_n -B_n^n,$ it follows that

$$\gamma_{n,k}^n=\dfrac{\# \text{Different faces if we intersect k facets, at least one with singleton core}}{{B_{n}\choose k} - {B_n^{<n}\choose k}},$$

$$\gamma_{n,k}^{<n}=\dfrac{\# \text{Different faces if we intersect k facets with multiple core}}{{B^{<n}_{n}\choose k}}.$$

Now, observe that the following holds:

\begin{align*}
\gamma_{n,k}
&= \gamma_{n,k}^{n} \cdot
\frac{\binom{B_{n}}{k} - \binom{B_{n}^{<n}}{k}}{\binom{B_{n}}{k}}
+ \gamma_{n,k}^{<n} \cdot
\frac{\binom{B_{n}^{<n}}{k}}{\binom{B_{n}}{k}} \\[6pt]
&= \gamma_{n,k}^{n}\cdot
\frac{\displaystyle\prod_{j=0}^{k-1} (B_{n}-j) - \prod_{j=0}^{k-1} (B_n^{<n}-j)}
     {\displaystyle\prod_{j=0}^{k-1} (B_{n}-j)}
+ \gamma_{n,k}^{<n} \cdot
\frac{\displaystyle\prod_{j=0}^{k-1} (B^{<n}_{n}-j)}
     {\displaystyle\prod_{j=0}^{k-1} (B_{n}-j)}.
\end{align*}

Observe that when $n$ tends to infinity, by Corollary \ref{cor2},

$$\gamma_{n,k} \sim \gamma_{n,k}^{n}\cdot (1-\phi_n^k) + \gamma_{n,k}^{<n}\cdot \phi_n^k \sim \gamma_{n,k}^{n}.$$

We conclude that the number of different faces coming from the intersection of $k$ facets and having a single core is asymptotically the same as the number of faces coming from the intersection of $k$ facets. Let us denote by $\varphi_{n,k}$ the proportion of faces coming from the intersection of $k$ facets and by $\varphi_{n,k}^n$ the proportion of faces coming from the intersection of $k$ facets and such that at least one of them has a single core. Hence, the proportion of faces with singleton core $\varphi_{n,k}^n$ resulting from intersecting $k$ facets is given by:

$$\varphi_{n,k}^n=\dfrac{\gamma_{n,k}^{n}\cdot \left( \prod_{j=0}^{k-1} (B_{n}-j) - \prod_{j=0}^{k-1} (B^n_{<n}-j) \right) }{\gamma_{n,k}\cdot \prod_{j=0}^{k-1} (B_{n}-j)}=\dfrac{\gamma_{n,k}^{n}}{\gamma_{n,k}} \cdot \dfrac{\prod_{j=0}^{k-1} (B_{n}-j) - \prod_{j=0}^{k-1} (B^n_{<n}-j)}{\prod_{j=0}^{k-1} (B_{n}-j)} \sim 1 \cdot 1 = 1.$$

And therefore, if we take into account the possibility of coming from the intersection of any number $k$ of facets, we obtain denoting by $\varphi_n$ the proportion of faces with single core:

\begin{align*}
\varphi_n
&= \sum_{k} \varphi_{n,k}^n \cdot
   \frac{\# \text{Different faces if we intersect } k \text{ facets}}
        {\# \text{Total number of faces}} \\[2mm]
&\sim \sum_{k}
   \frac{\# \text{Different faces if we intersect } k \text{ facets}}
        {\# \text{Total number of faces}} = 1.
\end{align*}
\end{proof}


\subsection{Probability that $v^*$ has a singleton core}

To relate the proportion of faces whose games have a singleton core with the
probability that the closest game $v^*$ has a singleton core, we need to know
the probability that an unbalanced game $v$, picked at random in
$\RR^{2^n-1}$, is projected onto a given face. To express these probabilities,
we need the notion of normal fan and normal region.

Given a face $\cF$, i.e., an intersection of facets $\cF_1,\ldots,\cF_p$, its
{\it normal fan} $N(\cF)$ is the cone generated by the normal vectors of the facets
$\cF_1,\ldots,\cF_p$, which has dimension $d-f$, where $d$ is the dimension of
the polyhedron under consideration and $f$ the dimension of the face
($ d-p\leqslant f\leqslant d-1$). The {\it
  normal region} $R(\cF)$ of $\cF$ is the set of games that will be projected onto this
face, therefore $R(\cF) = \cF+ N(\cF)$ (see Figure~\ref{fig:mosaico} for an
illustration of the normal regions
with $d=2$). It follows that the dimension of
$R(\cF)$ is
\[
\dim(R(\cF)) = \dim(\cF) + (d-f)=d.
\]
Hence, all regions have the same dimension.

Evidently, the probability of each normal region depends on the probability distribution used to generate the initial game $v$ in the space. A common approach is to restrict attention to a compact subset $C$ and generate $v$ according to the uniform distribution. In this setting, the probability that an unbalanced game picked uniformly at random in $C$ is projected onto a face $\cF$ is proportional to the volume of $R(\cF)\cap C$.

For reasons of symmetry, it is often advisable to choose $C$ to be centered at some point $z_0\in\RR^{2^n-1}$ and to have finite volume, with a shape symmetric with respect to all coordinates (for example, a ball of radius $r$ or a hypercube). Similarly, it is reasonable to select $z_0$ so that $C$ intersects $R(\cF)$ for every face $\cF$. This can be ensured, for instance, by taking $z_0$ to belong to the affine translation of the lineality space, that is, $z_0-\alpha u_{\{n\}}\in\Lin(\cC_0(n))$ (see Section~\ref{sec:anwa}). In this way, any ball or hypercube centered at $z_0$ contains all extremal rays of $\cBG_\alpha(n)$, and therefore intersects all regions $R(\cF)$. \footnote{In our simulation described in Section~\ref{sec:prsi}, we implement precisely this strategy, choosing $\alpha=v(N)=0$ for simplicity. The container is a hypercube centered at 0, which lies in the lineality space. Moreover, as shown by the simulations, the width $L$ of the hypercube does not significantly affect the convergence of the probability to 1.}

It does not seem possible to compute the volume of $C\cap R(\cF)$ in general,
and consequently it is not possible to get an expression for the probability that an unbalanced
game picked uniformly at random in $C$ is projected on $\cF$.

However, we can establish a general result under the following assumption, which we shall refer to as the \textbf{No Excessive Face Probability Concentration (NEFPC) Hypothesis}. Let us denote by $\gF_n$ the set of faces of $\cBG_\alpha(n).$

\medskip
\noindent\textbf{(NEFPC Hypothesis).}
There exists a constant \( K > 0 \) such that, for all \( n \) and for every face \( \cF \in \gF_n \) of \( \cBG_\alpha(n) \),
\[
p_n(\cF) \le \frac{K}{|\gF_n|},
\]
where \( p_n(\cF) \) denotes the probability associated with face \( \cF \).

\medskip

Under this hypothesis, let us denote by \( \rho_n \) the probability that the projected game \( v^* \) has a non-singleton core, and by \( \gF_n^{<n} \subseteq \gF_n \) the set of faces with a non-singleton core.
By Proposition~\ref{prop:conv_faces}, it follows that
\[
\rho_n = \mathbb{P}\big\{ \gF_n^{<n} \big\}
    \le K \frac{|\gF_n^{<n}|}{|\gF_n|}
    = K \varphi_n \longrightarrow 0.
\]

Therefore, under the NEFPC Hypothesis, namely, assuming that the initial distribution of games does not allocate excessive probability mass to specific faces, the desired asymptotic property holds.

\begin{theorem}
Let \( v \) be a random game drawn from a distribution satisfying the NEFPC Hypothesis. Then the probability that the closest balanced game \( v^* \) has a non-singleton core tends to zero, i.e.
\[
\lim_{n \to \infty} \rho_n = 0.
\]
\end{theorem}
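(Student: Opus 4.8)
The plan is to combine the NEFPC Hypothesis with Proposition~\ref{prop:conv_faces} essentially term by term. First I would observe that the normal regions $R(\cF)$, as $\cF$ ranges over all faces in $\gF_n$, cover the whole space $\RR^{2^n-1}$ and overlap only on a set of Lebesgue measure zero (points equidistant from two faces). Hence the events ``$v$ is projected onto $\cF$'' form, up to a null set, a partition of the sample space, so that $\sum_{\cF\in\gF_n} p_n(\cF)=1$. By Theorems~\ref{th:pcbg} and~\ref{th:pcbg1}, together with the convention (fixed earlier) of associating each face with the intersection of the maximal family of facets containing it, the closest game $v^*$ has a non-singleton core precisely when the lowest-dimensional face containing it lies in $\gF_n^{<n}$. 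Therefore
\[
\rho_n=\mathbb{P}\big\{v^* \text{ has a non-singleton core}\big\}=\sum_{\cF\in\gF_n^{<n}}p_n(\cF).
\]

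Next I would apply the NEFPC bound to each summand and collapse the sum:
\[
\rho_n=\sum_{\cF\in\gF_n^{<n}}p_n(\cF)\;\le\;\sum_{\cF\in\gF_n^{<n}}\frac{K}{|\gF_n|}\;=\;K\,\frac{|\gF_n^{<n}|}{|\gF_n|}\;=\;K\,\varphi_n.
\]
Finally, Proposition~\ref{prop:conv_faces} gives $\varphi_n\to 0$, and since $K$ is a fixed constant we conclude $\rho_n\to 0$ as $n\to\infty$, which is the claim.

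The argument inside the proof is then essentially routine once the NEFPC Hypothesis is granted; the only point requiring care is the bookkeeping of faces, namely checking that the convention chosen for representing a face as a maximal intersection of facets makes $\gF_n$ and $\gF_n^{<n}$ well-defined finite sets, so that the quantity $\varphi_n=|\gF_n^{<n}|/|\gF_n|$ appearing here coincides with the one controlled by Proposition~\ref{prop:conv_faces}. The genuine difficulty is, of course, entirely absorbed into the NEFPC Hypothesis itself: establishing unconditionally that no face carries more than an $O(1/|\gF_n|)$ fraction of the probability mass would require controlling the volumes $\mathrm{vol}(C\cap R(\cF))$, which the preceding discussion points out is not tractable in general. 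Thus the theorem is conditional, and the substantive mathematical content lies in the combinatorial estimates of Section~\ref{sec:prfacet}--\ref{sec:fasi} (Theorems~\ref{theo2} and~\ref{theo3}, and Proposition~\ref{prop:conv_faces}), not in the probabilistic deduction above.
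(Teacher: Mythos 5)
Your argument is exactly the paper's: decompose $\rho_n$ as the sum of $p_n(\cF)$ over the faces $\cF\in\gF_n^{<n}$ with non-singleton core, bound each term by $K/|\gF_n|$ via the NEFPC Hypothesis to obtain $\rho_n\le K\varphi_n$, and conclude with Proposition~\ref{prop:conv_faces}. The extra remarks on the normal regions forming an almost-everywhere partition and on the face-bookkeeping convention are sound elaborations of what the paper leaves implicit, so the proposal is correct and essentially identical to the paper's proof.
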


In practice, most natural constructions or generation mechanisms for \( v \) do not rely on any prior information that would distinguish between different faces (or minimal balanced coalitions).  Hence, it is reasonable to expect that typical random procedures satisfy the NEFPC Hypothesis.
In particular, for the setting considered in our simulations, where games are sampled uniformly within cubes centered at zero, the empirical results confirm that the hypothesis is indeed satisfied, and therefore the theorem applies.


\section{Concluding remarks}
In this paper we have studied two different problems related to the closest
balanced game to a game with an empty core. First, we have dealt with the
problem of obtaining the closest balanced game, and we have provided a fast
  algorithm to find it, avoiding the exponential complexity of the original
  optimization problem. Second, we have studied whether the closest game has a
  singleton core. Experiments have shown that as $n$ grows, the probability that
  the closest game has a singleton core tends to 1. We have derived a
mathematical proof that the proportion of facets of the polyhedron of balanced
games such that all games in the facet have a singleton core tends to 1 when $n$
grows.  This has been achieved by finding an expression of the asymptotic
  growth of the number of minimal balanced collections, which constitutes by
  itself an important progress in the combinatorial properties of minimal
  balanced collections.
From this result, we have proved that the probability that the closest
game has a single core tends to 1 as $n$ grows.

An important consequence of these achievements and results is that we have
  defined a new solution concept, which may be called the least square core, due
  to its relation with the least core. Our fast algorithm makes this solution
  concept easily computable, up to 20 players. Future research will be focused
  on the properties of the least square core as a solution concept.


\bibliographystyle{plain}

\appendix

\section{Projections on $\cBG_\alpha(3)$}\label{app:n3}
 When $n=3$, the extremal rays are:
\begin{align*}
& w_1=u_{1}-u_{3}, w_2=u_{2}-u_{3}\\
  & r_{12}=-\delta_{12},r_{13}=-\delta_{13},r_{23}=-\delta_{23}\\
  & r_1=\delta_{12}-\delta_3-\delta_{23}\\
  & r_2=\delta_{12}-\delta_3-\delta_{13}\\
  & r_3=-\delta_3.
\end{align*}
The table indicating which extremal rays belong to which facet is identical to
the case of $\cBG(3)$ (see \cite{gagrmi25}). It follows that the facets have exactly the same neighbor
relations as for $\cBG(3)$, hence, the cone $\cC_0^0(n)$ has the same structure of
faces as the cone $\cBG^0(n)$.

Let us detail the  projection on facet $\cB_1=\{1,2,3\}$. Any game $v^*$ in this facet has
the form
\[
v^* = \alpha_{12}r_{12} + \alpha_{13}r_{13} + \alpha_{23}r_{23} + \beta_1w_1 +
\beta_2w_2
\]
with $\alpha_{ij}\geqslant 0$. The coordinates of $v^*$ (given in the order 1,
2, 3, 12, 13, 23, 123) are then
\[
v^* = (\beta_1,\ \beta_2, \ -\beta_1-\beta_2, \ \beta_1+\beta_2-\alpha_{12},
\ -\beta_2-\alpha_{13}, \ -\beta_1-\alpha_{23}, 0).
\]
This game has a point core, given by
$\{(\beta_1,\beta_2,-\beta_1-\beta_2)\}$. Any game $v$ of the form
$v^*+v(N)u_3+ \gamma\lambda^{\cB_1}$ with $\gamma\geqslant 0$ is projected on the facet
$\cB_1$ of $\cBG_{v(N)}(n)$:
\[
v=( \beta_1+\gamma,\ \beta_2+\gamma, \ -\beta_1-\beta_2+\gamma+v(N), \ \beta_1+\beta_2-\alpha_{12},
\ -\beta_2-\alpha_{13}+v(N), \ -\beta_1-\alpha_{23}+v(N),
\ v(N)),
\]
and its projection is $v^*+v(N)u_3$, with point core $\{(\beta_1,\beta_2,-\beta_1-\beta_2+v(N))\}$.

The same procedure can be applied to any facet and any face. Omitting details,
we give hereunder only the result in a table. Each row gives for a face the form
of a game $v$ (value of $v(S)$ for each $S$) whose projection on
$\cBG_{v(N)}(n)$ falls on that face. The expression of its projection is
obtained by letting $\gamma=0$. The last column indicates if the core is reduced to a
singleton (which is simply the coordinates of the projected game for 1,2,3).

\newpage
\rotatebox{90}{\scriptsize
  \begin{tabular}{|c||c|c|c|c|c|c|c||c|}\hline
    face & 1 & 2 & 3 & 12 & 13 & 23 & 123 & point core \\ \hline
    $\cB_1$ & $\beta_1+\gamma$ & $\beta_2+\gamma$ & $-\beta_1-\beta_2+v(N)+\gamma$ & $\beta_1+\beta_2-\alpha_{12}$ &
    $-\beta_2+v(N)-\alpha_{13}$ & $-\beta_1+v(N)-\alpha_{23}$ & $v(N)$ & yes\\
    $\cB_2$ & $\beta_1+\gamma$ & $\beta_2$ & $-\beta_1-\beta_2+v(N)-\alpha_2-\alpha_3$ & $\beta_1+\beta_2-\alpha_{12}+\alpha_2$ &
    $-\beta_2+v(N)-\alpha_{13}-\alpha_2$ & $-\beta_1+v(N)+\gamma$ & $v(N)$ & no \\
    $\cB_3$ & $\beta_1$ & $\beta_2+\gamma$ &
    $-\beta_1-\beta_2+v(N)-\alpha_1-\alpha_3$ &
    $\beta_1+\beta_2-\alpha_{12}+\alpha_1$ &    $-\beta_2+v(N)+\gamma$ &
    $-\beta_1+v(N)-\alpha_{23}-\alpha_1$ & $v(N)$ & no \\
    $\cB_4$ & $\beta_1$ & $\beta_2$ & $-\beta_1-\beta_2+v(N)+\gamma-\alpha_1-\alpha_2$ & $\beta_1+\beta_2+\gamma+\alpha_1+\alpha_2$ &
    $-\beta_2+v(N)-\alpha_{13}-\alpha_2$ & $-\beta_1+v(N)-\alpha_{23}-\alpha_1$
    & $v(N)$ & no \\
    $\cB_5$ & $\beta_1$ & $\beta_2$ & $-\beta_1-\beta_2+v(N)-\alpha_1-\alpha_2-\alpha_3$ & $\beta_1+\beta_2+\gamma+\alpha_1+\alpha_2$ &
    $-\beta_2+v(N)+\gamma-\alpha_2$ & $-\beta_1+v(N)+\gamma-\alpha_1$ & $v(N)$ &
    yes \\
    $\cB_1\cap\cB_2$ & $\beta_1+\gamma_1+\gamma_2$ & $\beta_2+\gamma_1$ &
    $-\beta_1-\beta_2+v(N)+\gamma_1$ & $\beta_1+\beta_2-\alpha_{12}$ &
    $-\beta_2+v(N)-\alpha_{13}$ & $-\beta_1+v(N)+\gamma_2$ & $v(N)$ & yes \\
    $\cB_1\cap\cB_3$ & $\beta_1+\gamma_1$ & $\beta_2+\gamma_1+\gamma_3$ & $-\beta_1-\beta_2+v(N)+\gamma_1$ &   $\beta_1+\beta_2-\alpha_{12}$ &
    $-\beta_2+v(N)+\gamma_3$ & $-\beta_1+v(N)-\alpha_{23}$ & $v(N)$ & yes\\
    $\cB_1\cap\cB_4$ & $\beta_1+\gamma_1$ & $\beta_2+\gamma_1$ & $-\beta_1-\beta_2+v(N)+\gamma_1+\gamma_4$ &
    $\beta_1+\beta_2+\gamma_4$ &
    $-\beta_2+v(N)-\alpha_{13}$ & $-\beta_1+v(N)-\alpha_{23}$ & $v(N)$ & yes\\
    $\cB_2\cap\cB_5$ & $\beta_1+\gamma_2$ & $\beta_2$ & $-\beta_1-\beta_2+v(N)-\alpha_2-\alpha_3$ &
    $\beta_1+\beta_2+\gamma_5+\alpha_2$ &
    $-\beta_2+v(N)+\gamma_5-\alpha_2$ & $-\beta_1+v(N)+\gamma_2+\gamma_5$ &
    $v(N)$ & yes\\
    $\cB_3\cap\cB_5$ & $\beta_1$ & $\beta_2+\gamma_3$ & $-\beta_1-\beta_2+v(N)-\alpha_1-\alpha_3$ &
    $\beta_1+\beta_2+\gamma_5+\alpha_1$ &
    $-\beta_2+v(N)+\gamma_3+\gamma_5$ & $-\beta_1+v(N)+\gamma_5-\alpha_1$ &
    $v(N)$ & yes \\
    $\cB_4\cap\cB_5$ & $\beta_1$ & $\beta_2$ & $-\beta_1-\beta_2+v(N)+\gamma_4-\alpha_1-\alpha_2$ &
    $\beta_1+\beta_2+\gamma_4+\gamma_5+\alpha_1+\alpha_2$ &
    $-\beta_2+v(N)+\gamma_5-\alpha_2$ & $-\beta_1+v(N)+\gamma_5-\alpha_1$ &
    $v(N)$ & yes \\
    $\cB_2\cap\cB_3$ & $\beta_1+\gamma_2$ & $\beta_2+\gamma_3$ & $-\beta_1-\beta_2+v(N)-\alpha_3$ &
    $\beta_1+\beta_2-\alpha_{12}$ &
    $-\beta_2+v(N)+\gamma_3$ & $-\beta_1+v(N)+\gamma_2$ & $v(N)$ & yes\\
    $\cB_2\cap\cB_4$ & $\beta_1+\gamma_2$ & $\beta_2$ &
    $-\beta_1-\beta_2+v(N)+\gamma_4-\alpha_2$ &
    $\beta_1+\beta_2+\gamma_4+\alpha_2$ &
    $-\beta_2+v(N)-\alpha_{13}-\alpha_2$ & $-\beta_1+v(N)+\gamma_2$ & $v(N)$ & yes\\
    $\cB_3\cap\cB_4$ & $\beta_1$ & $\beta_2+\gamma_3$ & $-\beta_1-\beta_2+v(N)+\gamma_4-\alpha_1$ &
    $\beta_1+\beta_2+\gamma_4+\alpha_1$ &
    $-\beta_2+v(N)+\gamma_3$ & $-\beta_1+v(N)-\alpha_{23}-\alpha_1$ & $v(N)$ & yes\\
    $\cB_1\cap\cB_2\cap\cB_3$ & $\beta_1+\gamma_1+\gamma_2$ & $\beta_2+\gamma_1+\gamma_3$ & $-\beta_1-\beta_2+v(N)+\gamma_1$ &
    $\beta_1+\beta_2-\alpha_{12}$ &
    $-\beta_2+v(N)+\gamma_3$ & $-\beta_1+v(N)+\gamma_2$ & $v(N)$ & yes\\
    $\cB_1\cap\cB_2\cap\cB_4$ & $\beta_1+\gamma_1+\gamma_2$ & $\beta_2+\gamma_1$ & $-\beta_1-\beta_2+v(N)+\gamma_1+\gamma_4$
    & $\beta_1+\beta_2+\gamma_4$ &
    $-\beta_2+v(N)-\alpha_{13}$ & $-\beta_1+v(N)+\gamma_2$ & $v(N)$ & yes\\
    $\cB_1\cap\cB_3\cap\cB_4$ & $\beta_1+\gamma_1$ & $\beta_2+\gamma_1+\gamma_3$ & $-\beta_1-\beta_2+v(N)+\gamma_1+\gamma_4$
    & $\beta_1+\beta_2+\gamma_4$ &
    $-\beta_2+v(N)+\gamma_3$ & $-\beta_1+v(N)-\alpha_{23}$ & $v(N)$ & yes\\
    $\cB_2\cap\cB_3\cap\cB_5$ & $\beta_1+\gamma_2$ & $\beta_2+\gamma_3$ & $-\beta_1-\beta_2+v(N)-\alpha_3$
    & $\beta_1+\beta_2+\gamma_5$ &
    $-\beta_2+v(N)+\gamma_3+\gamma_5$ & $-\beta_1+v(N)+\gamma_2+\gamma_5$ & $v(N)$ & yes\\
    $\cB_2\cap\cB_4\cap\cB_5$ & $\beta_1+\gamma_2$ & $\beta_2$ & $-\beta_1-\beta_2+v(N)+\gamma_4-\alpha_2$
    & $\beta_1+\beta_2+\gamma_4+\gamma_5+\alpha_2$ &
    $-\beta_2+v(N)+\gamma_5-\alpha_2$ & $-\beta_1+v(N)+\gamma_2+\gamma_5$ &
    $v(N)$ & yes \\
    $\cB_3\cap\cB_4\cap\cB_5$ & $\beta_1$ & $\beta_2+\gamma_3$ & $-\beta_1-\beta_2+v(N)+\gamma_4-\alpha_1$
    & $\beta_1+\beta_2+\gamma_4+\gamma_5+\alpha_1$ &
    $-\beta_2+v(N)+\gamma_3+\gamma_5$ & $-\beta_1+v(N)+\gamma_5-\alpha_1$ &
    $v(N)$ & yes \\
    \hline
  \end{tabular}
  }

\end{document}